\newlength{\figtriml}
\newlength{\figtrimb}
\newlength{\figtrimr}
\newlength{\figtrimt}
\newlength{\figwidth}
\newlength{\figwidthcap}
\newlength{\figwidthduo}
\newlength{\figtrimla}
\newlength{\figtrimba}
\newlength{\figtrimra}
\newlength{\figtrimta}
\newtheorem{lemma}{Lemma}
\newtheorem{proposition}{Proposition}
\theoremstyle{definition}
\newtheorem{theorem}{Theorem}
\newtheorem{example}{Example}
\newcommand{\ds}{\displaystyle}
\newcommand{\ba}{\begin{array}}
\newcommand{\ea}{\end{array}}
\newcommand{\mb}{\boldsymbol}
\renewcommand{\l}{\left}\renewcommand{\r}{\right}
\newcommand{\be}{\begin{equation}}
\newcommand{\ee}{\end{equation}}
\newcommand{\eps}{\varepsilon}
\newcommand{\ups}{\upsilon}
\newcommand{\mc}{\mathcal}
\newcommand{\ov}{\overline}
\newcommand{\ul}{\underline}
\newcommand{\1}{\mathbbm{1}}
\newcommand{\E}{\mathbb{E}}
\newcommand{\R}{\mathbb{R}}
\newcommand{\pdkrx}{p_{d,k,r,s}}
\renewcommand{\P}{\mathbb{P}}
\newcommand{\de}{\mathrm{d}}
\newcommand{\ora}{\overrightarrow}
\newcommand{\se}{\text{ if }}
\def\E{\mathbb{E}}
\def\R{\mathbb{R}}
\def\P{\mathbb{P}}
\title{Threshold models of cascades in large-scale networks}
\author{Giacomo Como\footnote{Dept. of Automatic Control, Lund University, 22363 Lund, Sweden, \url{giacomo.como@control.lth.se}}, Wilbert Samuel Rossi\footnote{Dept. of Applied Mathematics, University of Twente, 7500 AE Enschede, The Netherlands, \newline\url{w.s.rossi@utwente.nl}}, and Fabio Fagnani\footnote{Dept. of Mathematical Sciences, Politecnico di Torino, 10129 Torino, Italy, \url{fabio.fagnani@polito.it}}}
\begin{document}\maketitle

\begin{abstract} 
The spread of new beliefs, behaviors, conventions, norms, and technologies in social and economic networks are often driven by cascading mechanisms, and so are contagion dynamics in financial networks. Global behaviors generally emerge from the interplay between the structure of the interconnection topology and the local agents' interactions. We focus on the Linear Threshold Model (LTM) of cascades first introduced by Granovetter (1978). This can be interpreted as the best response dynamics in a network game whereby agents choose strategically between two actions and their payoff is an increasing function of the number of their neighbors choosing the same action. Each agent is equipped with an individual threshold representing the number of her neighbors who must have adopted a certain action for that to become the agent's best response.  
We analyze the LTM dynamics on large-scale networks with heterogeneous agents.
Through a local mean-field approach, we obtain a nonlinear, one-dimensional, recursive equation that approximates the evolution of the LTM dynamics on most of the networks of a given size and distribution of degrees and thresholds. Specifically, we prove that, on all but a fraction of networks with given degree and threshold statistics that is vanishing as the network size grows large, the actual fraction of adopters of a given action in the LTM dynamics is arbitrarily close to the output of the aforementioned recursion. 
We then analyze the dynamic behavior of this recursion and its bifurcations from a dynamical systems viewpoint. 
Applications of our findings to some real network testbeds show good adherence of the theoretical predictions to numerical simulations.  
\end{abstract}

\section{Introduction}\label{sec:intro}
Cascading phenomena permeate the dynamics of social and economic networks. Notable examples are the adoption of new technologies and social norms, the spread of fads and behaviors,  participation to riots \cite{mG:1978,moJ:2008,EK:2010}. 
Such phenomena have been largely recognized to spread through networks of individual interactions \cite{tcS:1978,mG:1978,Blume:1993,Kandori.ea:1993,Ellison:1993,Rogers:1995,PeytonYoung:1998}.  However, in contrast to standard network epidemic models based on pairwise contact mechanisms \cite{Liggettbook2,rD:2007,mN:2003} ---whereby diffusion of a new state occurs independently on the links among the agents--- complex neighborhood effects ---whereby the propensity of an agent to adopt a new state grows nonlinearly with the fraction of adopters among her neighbors--- play a central role in the mechanisms underlying such cascading phenomena \cite{fVR:2007,MS:2010}.

%\cite{Blume:1993,Kandori.ea:1993,Ellison:1993,Rogers:1995,PeytonYoung:1998}
%\cite{fVR:2007,MS:2010,} 

One of the most studied models of cascading mechanisms capturing such complex neighborhood effects is Granovetter's Linear Threshold Model (LTM) \cite{mG:1978}. 
%which is the one the present paper is focused on.
 Granovetter's original work \cite{mG:1978} is concerned with a fully mixed population of $n$ interacting agents, each holding a binary state $Z_i(t)=0,1$, for $i=1,\ldots,n$, and updating it at every discrete time instant $t=0,1,\ldots$ according to the following threshold rule:  $Z_i(t+1)=1$ if the current fraction of state-$1$ adopters in the population is not less than a certain value $\Theta_i$, i.e., if $\frac1n\sum_{j=1}^nZ_j(t)\ge\Theta_i$ and $Z_i(t+1)=0$ otherwise, i.e., if $\frac1n\sum_{j=1}^nZ_j(t)<\Theta_i$. 
%\be\label{LTMdef}Z_i(t+1)=\left\{\ba{lcl}1&\se&\frac1n\sum_{j=1}^nZ_j(t)\ge\Theta_i\\[10pt]0&\se&\frac1n\sum_{j=1}^nZ_j(t)<\Theta_i\,.\ea\right.\ee 
Here $\Theta_i\in[0,1]$ is a normalized threshold value that measures the reluctance of agent $i$  in choosing state $1$, equivalently, her propensity to choose state $0$. 
In more realistic scenarios, the population is not fully mixed and agents interact on an interconnection network that can be represented as a, generally directed, graph $\mc G=(\mc V,\mc E)$ whose node set $\mc V=\{1,2,\ldots,n\}$ is identified with the set of agents themselves and where the presence of a link $(i,j)\in\mc E$ represents the fact that agent $i$ observes agent $j$ and gets directly influenced by her state. In this setting, the LTM dynamics reads as follows: 
\be\label{LTMdef}Z_i(t+1)=\left\{\ba{lcl}1&\se&\sum_{j:(i,j)\in\mc E}Z_j(t)\ge\Theta_ik_i\\[10pt]
0&\se&\sum_{j:(i,j)\in\mc E}Z_j(t)<\Theta_ik_i\,,\ea\right.\ee
 where $k_i$ stands for node $i$'s out-degree, see, e.g., \cite{sM:2000,ADO:2012}.
 This can be interpreted as the best response dynamics in a network  game whereby agents choose strategically between two actions, $0$ and $1$, and their payoff is an increasing function of the number of their neighbors choosing the same action. 
A variant of the LTM, that is referred to as Progressive Linear Threshold Model (PLTM) allows for state transitions from $0$ to $1$ only, but not from $1$ to $0$, so that when an agent adopts state $1$, she keeps it ever after \cite{KKT:2003,Centola.ea:2007,ADL:2009,hA:2010,ACM:2013}.

As illustrated in \cite{mG:1978}, there is a simple way to analyze the LTM in fully mixed populations. If one denotes by $z(t):=\frac1n\sum_{i}Z_i(t)$ the fraction of state-$1$ adopters at time $t$, and if 
$F(\theta):=\frac1n|\{i:\,\Theta_i\le\theta\}|$, for $0\le\theta\le 1$,
stands for the cumulative distribution function of the normalized thresholds,  
then \be\label{LTM-recursion-fullymixed}z(t+1)=F(z(t))\,, \qquad t\ge0\,.\ee 
Hence, the evolution of the fraction of state-$1$ adopters in the population can be determined by the above one-dimesional  non-linear recursion. 
This is a dramatic reduction of complexity with respect to the original LTM dynamics whose discrete state space has cardinality $2^n$ growing exponentially fast in the population size.
In fact, an analogous result can be verified to hold true for the PLTM, provided that agents with initial state $Z_i(0)=1$ are considered as if having threshold $0$, which is consistent with the fact they will always keep their state equal to $1$. More precisely, if one introduces the distribution function %\be\label{modifiedthresholdCDF}
$\tilde F(\theta)=\frac1n|\{i:\,\Theta_i(1-Z_i(0))\le\theta\}|$
%\,,\ee 
then the fraction $z(t)$ of state-$1$ adopters in the PLTM satisfies the recursion\footnote{Formally, the result follows from Lemma \ref{lemma:LTM=PLTM} in Section \ref{sec:model}.} 
%\be\label{PLTM-recursion-fullymixed}
$z(t+1)=\tilde F(z(t))$. 
%\,.\ee 

In the more complex case where the population is not fully mixed but rather interacts along a given graph $\mc G=(\mc V,\mc E)$, the simple recursion \eqref{LTM-recursion-fullymixed} does not hold true any longer for the fraction of state-$1$ adopters $z(t)$ in the LTM \eqref{LTMdef}. In fact, for undirected (possibly infinite) graphs $\mc G$ and homogeneous normalized thresholds $\Theta_i=\theta$, Morris \cite{sM:2000} characterizes the fixed points of the LTM dynamics as those configurations in $\{0,1\}^{n}$ whose support $\mc U\subseteq\mc V$ is a $\theta$-cohesive subset of $\mc V$ with $(1-\theta)$-cohesive complement $\mc V\setminus\mc U$, meaning that all nodes in $\mc U$ have at least a fraction $\theta$ of neighbors in $\mc U$ and all nodes in $\mc V\setminus\mc U$ have less than a fraction $\theta$ of neighbors in $\mc U$. While such a characterization provides fundamental insight into the structure of the equilibria of the LTM, finding $\theta$-cohesive subsets of nodes with $(1-\theta)$-cohesive complement in an arbitrary graph $\mc G$ is a computationally hard problem. 
Computational complexity issues also arise in the PLTM dynamics, for which, e.g., Kempe, Kleinberg, and Tardos \cite{KKT:2003} prove NP-hardness of the selection problem of the $k$ `most influential' nodes, i.e., the choice of the cardinality-$k$ subset of nodes that, if initiated as state-$1$ adopters, lead to the largest set of final state-$1$ adopters. Building on submodularity properties of the number of final state-$1$ adopters as a function of the set of initial state-$1$ adopters, provable approximation guarantees are then provided in \cite{KKT:2003} for the $k$ `most influential' nodes  selection problem. Such `most influential' nodes  selection problem has attracted a large amount of attention recently, see, e.g., \cite{CYZ:2010,GLS:2011}. Asymptotic analysis of the LTM dynamics and associated complexity issues have also been addressed in \cite{ADO:2012}.

% where its is shown that ... NP-hardness.... Also \cite{ADO:2012}... \cite{hA:2010,ADL:2009,ACM:2013}

As the aforementioned results point out, analysis and optimization of the LTM and of the PLTM on general networks is typically a hard problem. On the other hand, in practical large-scale applications, complete information on the network structure and on the specific threshold configuration might not be available, while only aggregate statistics such as degree and threshold distributions might be known. With this motivation in mind, the present paper deals with the analysis of the LTM and of the PLTM dynamics on the {\it ensemble} of all graphs with a given joint degree/threshold distribution (formally we will consider the so-called \emph{configuration model} \cite{BollobasRG,rD:2007} of interconnections), rather than on a specific graph $\mc G$. Our main result shows that for all but a vanishingly small (as the network size $n$ grows large) fraction of networks from the configuration model  ensemble of given joint degree-threshold distribution, the fraction $z(t)$ of state-$1$ adopters in the LTM dynamics can be approximated, to an arbitrary small tolerance level, by the solution of the recursion 
\be\label{rec}x(t+1)=\phi(x(t))\,,\qquad y(t+1)=\psi(x(t))\,,\ee where $\phi(x)$ and $\psi(x)$ are suitably defined polynomial functions that map the interval $[0,1]$ in itself, whose form depends only on the joint degree-threshold distribution (see \eqref{psi-def} and \eqref{phi-def}). An analogous result for the PLTM is proved as well, provided that agents with initial state $Z_i(0)=1$ are treated as if having threshold $0$, equivalently, that the functions $\phi(x)$ and $\psi(x)$ are defined based on the joint distribution of node degrees and the product $(1-\Theta_i)Z_i(0)$. 

Our results should be compared to the literature on the analysis of the LTM or the PLTM on large-scale random networks with given degree distribution. The papers \cite{mL:2009,hA:2010,mL:2012}  all study the asymptotic behavior of the PLTM in random undirected networks. In particular, \cite{mL:2009} focuses on the asymptotic effect of two vaccination strategies equivalent to the {\it a priori} removal of nodes, whereas
\cite{hA:2010} and \cite{mL:2012} both  rigorously provide conditions, in the large-scale limit, for the PLTM contagion to eventually reach a sizeable fraction of nodes when started from a single node or a fraction of nodes that is sublinear in $n$. 
The paper \cite{ACM:2013} presents analogous results for a version of the PLTM on random weighted directed networks, proposed as a model for cascading failures in financial networks. In contrast with those results, ours are concerned with approximation of the dynamics rather than with the asymptotics of the fraction of state-$1$ adopters. The other major difference is that they are not limited to the PLTM but cover also the original LTM on the directed configuration model ensemble of networks. On the other hand, it should be stressed that our results do not extend to the analysis of the general LTM on the undirected configuration model ensemble. In fact, as pointed out in \cite{KM:2011}, the analysis of the LTM on undirected trees presents itself additional challenges beyond the scope of the approach proposed here. 

In summary, the main contributions of this paper consist in (a) providing a rigorous approximation result in terms of the output $y(t)$ of the recursion \eqref{rec} for the fraction $z(t)$ of state-$1$ adopters in the LTM and the PLTM dynamics on the ensemble of directed networks (Theorem \ref{theo:concentration}) and of the PLTM on the ensemble of undirected networks (Theorem \ref{theo:concentration2}); and (b) analyzing the asymptotic behavior of the recursion \eqref{rec} in both homogeneous (Section \ref{subsect:homogeneous}) and heterogeneous (Sections \ref{subsect:hetero1} and \ref{subsect:hetero2}) networks. Such theoretical results are then supported by numerical simulations on an actual large-scale network topology (see Section \ref{sec:simulations}). In the course of building up the tools for such analysis, we also prove that the PLTM can be regarded as a special case of the LTM (Lemma \ref{lemma:LTM=PLTM}), a result of potential independent interest. 

The rest of this paper is organized as follows. The final part of this section gathers some notational conventions used throughout the paper; Section \ref{sec:model} formally introduces the LTM and the PLTM dynamics, proves some fundamental monotonicity properties (Lemma \ref{lemma:monotonicity}), and builds on them to prove that PLTM can be regarded as a special case of the LTM when all agents with initial state $1$ have threshold $0$ (Lemma \ref{lemma:LTM=PLTM}); in Section \ref{sec:LTM-BR} we introduce the recursion \eqref{rec} by a heuristic argument and then analyze its asymptotic behavior first in homogeneous and then in heterogenous networks; in Section \ref{sec:density-evolution} we formally prove that the output $y(t)$ of the recursion \eqref{rec} provides a good approximation of the fraction of state-$1$ adopters in both the LTM and PLTM dynamics on the ensemble of directed networks (Theorem \ref{theo:concentration}) and in the PLTM dynamics on the ensemble of undirected networks (Theorem \ref{theo:concentration2}); in Section \ref{sec:simulations} we present numerical simulations on an actual large-scale network testbed.

\textbf{Notational conventions}  We denote the transpose of a matrix $M$ by $M'$ and the all-one vector by $\1$.
We model interconnection topologies as directed multi-graphs $\mc G=(\mc V,\mc E)$ where $\mc V=\{1,\ldots,n\}$ is a finite set of nodes representing the interacting agents and $\mc E\subseteq\mc V\times\mc V$ is a multi-set of directed links. Here, the use of the prefix \emph{multi} reflects the fact that links $(i,j)$ directed from the same tail node $i$ to the same head node $j$ may occur with multiplicity larger than $1$, i.e., we allow for the possible presence of parallel links. The adjacency matrix $A\in\R^{n\times n}$ of $\mc G$ has then nonnegative-integer entries $A_{ij}$ whose value represents the multiplicity with which link $(i,j)$ appears in $\mc E$.\footnote{In fact, one could easily relax the integer constraint on the entries of the adjacency matrix $A$ and consider weighted graphs, whereby each positive entry $A_{ij}$ stands for the weight of the link from node $i$ to node $j$. For the sake of simplicity in the exposition we will not consider this generalization explicitly in this paper.} Observe that we also allow for the possibility of selfloops, i.e., links of the form $(i,i)$ that correspond to nonzero diagonal entries $A_{ii}>0$ of the adjacency matrix. Of course, directed graphs with no self-loops can be recovered as a special case when $A$ has binary entries $A_{ij}\in\{0,1\}$ and zero diagonal, whereas undirected graphs can be recovered as a special case when the adjacency matrix $A'=A$ is symmetric. In particular, simple graphs (undirected and with no self-loops) correspond to the case when the adjacency matrix is symmetric and has zero diagonal and binary entries. The in-degree and out-degree vectors of a graph are then denoted by $\delta=A'\1$ and $\kappa=A\1$, respectively, so that $\delta_i=\sum_jA_{ji}$ and $\kappa_i=\sum_jA_{ij}$ are the in- and out-degree, respectively, of node $i$. 
Whenever the interconnection topology contains a link $(i,j)\in\mc E$ we refer to node $j$ as an out-neighbor of $i$ and to node $i$ as an in-neighbor of $j$. An $l$-tuple of nodes $i_0,i_1,\ldots i_l$ is referred to as a length-$l$ \textit{walk} from $i_0$ to $i_l$ if $(i_{h-1},i_h) \in \mc E$ for all $1\le h\le l$. %If a length-$l$ walk from node $i$ to node $j$ exists, $j$ is said to be reachable from $i$ in $l$ hops. 
Finally, the \textit{depth-$t$ neighborhood} $\mc N_t^i$ of a node $i$ is the subgraph of $\mc G$ containing all the nodes $j$ such that there exists a walk from $i$ to $j$ of length $l\le t$.

\section{The Linear Threshold Model and its progressive version} 
\label{sec:model}
In this section, we introduce the LTM dynamics on arbitrary interconnection networks. We then prove some basic monotonicity properties of the LTM and use them to show how the PLTM can be  recovered as a special case of the LTM with the proper choice of thresholds.

Let $\mc G=(\mc V,\mc E)$ be an interconnection topology. We follow the convention that the link direction is the opposite of the one of the influence, so that the presence of a link $(i,j)\in\mc E$ indicates that agent $i$ observes, and is influenced by, agent $j$. The behavior of each agent $i=1,\ldots,n$ in the LTM dynamics is characterized by a threshold value $\rho_i\in\{0,1,\ldots,\kappa_i\}$ that represents the minimum number of state-$1$ adopters that she needs to observe among her neighbors in order to adopt state $1$ at the next time instant. Such  threshold is related to the normalized threshold $\Theta_i\in[0,1]$ mentioned in Section \ref{sec:intro} by the identity $\rho_i=\lceil\Theta_i\kappa_i\rceil$. The vector of all agents' thresholds is then denoted by $\rho\in\R^n$. In order to introduce the LTM dynamics, we are left to specify an initial state $\sigma_i\in\{0,1\}$ for every agent $i$. Let the vector of all agents' initial states be denoted by $\sigma\in\{0,1\}^n$. We will refer to a network as the $4$-tuple $\mc N=(\mc V,\mc E,\rho,\sigma)$ of a set of agents $\mc V$, a multiset of links $\mc E$, a threshold vector $\rho$, and a vector of initial states $\sigma$.

The LTM on a network $\mc N=(\mc V,\mc E,\rho,\sigma)$ is then defined as the discrete-time dynamical system with state space $\{0,1\}^n$ and update rule
\be\label{LTM-def}
Z_i(0)=\sigma_i\,,\qquad Z_i(t+1)=\left\{\ba{lcl}1&\se&\sum_{j}A_{ij}Z_j(t)\ge\rho_i\\[10pt]
0&\se&\sum_{j}A_{ij}Z_j(t)<\rho_i\ea\r.\qquad i=1,\ldots,n\,,\qquad t\ge0\,.
\ee
In fact, the LTM can be interpreted as the best response dynamics in a network  game \cite{sM:2000, moJ:2008, EK:2010, mL:2012} whereby the agents $i\in\mc V$ choose their action  $Z_i\in\{0,1\}$ so as to maximize their payoff $u_i(Z)=(\kappa_i-\rho_i+\eps)Z_i\sum_jA_{ij}Z_j+\rho_i(1-Z_i)\sum_jA_{ij}(1-Z_j)$ 
%$u_i(Z)=Z_i\sum_jA_{ij}Z_j-\theta_i\sum_jA_{ij}Z_j$ 
where $0<\eps<1$ is introduced in order to break possible ties in favor of the $Z_i=1$ action.  Observe that Granovetter's recursion \eqref{LTMdef} for a fully mixed population can be recovered when the interaction topology is the complete graph with self-loops, i.e., the link set is $\mc E=\mc V\times\mc V$ so that the adjacency matrix is the all-one matrix  $A=\1\1'$, and the thresholds are chosen as $\rho_i=\lceil n\Theta_i\rceil$ for all $i\in\mc V$.

The following lemma captures some basic monotonicity properties of the LTM dynamics that prove particularly useful in its analysis. In stating and proving it we will adopt the notational convention that an inequality between vectors is meant to hold true entry-wise. 

\begin{lemma}\label{lemma:monotonicity}
Let $\mc N=(\mc V,\mc E,\rho,\sigma)$ and $\mc N^+=(\mc V,\mc E,\rho,\sigma^+)$ be two networks differing only (possibly) for the initial state vector. Let $Z(t)$ and $Z^+(t)$ be the state vectors of the LTM dynamics \eqref{LTM-def} on $\mc N$ and $\mc N^+$, respectively. Then, 
\begin{enumerate} 
\item[(i)] if $\sigma^+\ge \sigma$, then $Z^+(t)\ge Z(t)$ for all $t\ge0$;
%\item[(ii)] if $\sigma=0$ is the all-zero vector ($\sigma=\1$ is the all-one vector) then $Z(t)$ is non-decreasing (respectively, non-increasing) in time, hence in particular eventually constant; 
\item[(ii)] if $\rho_i\le(1-\sigma_i)\kappa_i$ for all $i$, then  $Z(t)$ is non-decreasing  in $t$, hence, in particular, it is eventually constant. 
\end{enumerate}
\end{lemma}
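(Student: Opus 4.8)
The plan is to derive both parts from a single structural fact: the LTM update map is \emph{monotone}, a direct consequence of the nonnegativity of the adjacency entries $A_{ij}$. Once monotonicity is in hand, part (i) is a coupling/induction argument and part (ii) reduces to checking a single step and then propagating it by time-invariance.

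For part (i), I would induct on $t$. The base case $t=0$ is immediate, since $Z^+(0)=\sigma^+\ge\sigma=Z(0)$. For the inductive step, suppose $Z^+(t)\ge Z(t)$ entrywise. Because $A_{ij}\ge0$ for all $i,j$, the weighted neighbor counts satisfy $\sum_jA_{ij}Z^+_j(t)\ge\sum_jA_{ij}Z_j(t)$ for every node $i$. Hence, whenever $Z_i(t+1)=1$ --- i.e., whenever $\sum_jA_{ij}Z_j(t)$ reaches the threshold $\rho_i$ --- the corresponding count for $Z^+$ also reaches $\rho_i$, forcing $Z^+_i(t+1)=1$. This gives $Z^+_i(t+1)\ge Z_i(t+1)$ for every $i$, closing the induction.

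For part (ii), the key step is to show that a single application of the update already increases the state, namely $Z(1)\ge Z(0)=\sigma$. I would check this entrywise. If $\sigma_i=0$, then trivially $Z_i(1)\ge0=\sigma_i$. If $\sigma_i=1$, the hypothesis $\rho_i\le(1-\sigma_i)\kappa_i=0$ forces $\rho_i=0$, so the update condition $\sum_jA_{ij}Z_j(0)\ge\rho_i=0$ holds automatically and $Z_i(1)=1=\sigma_i$. Thus $Z(1)\ge Z(0)$.

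Finally, I would propagate this one-step inequality to all times using the autonomy of the LTM. Consider the trajectory $W(t)$ of \eqref{LTM-def} started from $W(0)=Z(1)$; since the dynamics is time-homogeneous, $W(t)=Z(t+1)$. As $W(0)=Z(1)\ge Z(0)$, part (i) applied to the two initial conditions $Z(1)$ and $Z(0)$ yields $W(t)\ge Z(t)$, i.e., $Z(t+1)\ge Z(t)$ for all $t\ge0$. The sequence $(Z(t))_{t\ge0}$ is therefore non-decreasing in the finite lattice $\{0,1\}^n$ and bounded above by $\1$; it can strictly increase at most $n$ times and so is eventually constant. The only genuine subtlety is the reduction to the one-step inequality in part (ii); once it is established, the monotonicity of part (i) together with time-invariance does the rest.
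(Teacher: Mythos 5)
Your proof is correct and follows essentially the same route as the paper: part (i) by induction using nonnegativity of $A$, and part (ii) by establishing the one-step inequality $Z(1)\ge Z(0)$ (noting $\rho_i=0$ whenever $\sigma_i=1$) and then invoking part (i) with the shifted initial condition $Z(1)$ together with time-homogeneity. The finite-lattice argument for eventual constancy is also the same.
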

\proof 
\begin{enumerate} 
\item[(i)] Let $A$ be the adjacency matrix of $\mc N$ and $\mc N^+$. Observe that, since $A$ is a nonnegative matrix, if $Z^+(t)\ge Z(t)$ for some $t\ge0$, then $AZ^+(t)\ge AZ(t)$, hence $Z^+(t+1)\ge Z(t+1)$ (because $Z^+_i(t+1)=0$ implies that $\sum_{j}A_{ij}Z_j(t)\le\sum_{j}A_{ij}Z^+_j(t)<\rho_i$ so that $Z_i(t=1)=0$). The claim now follows by induction on $t$.
%\item[(ii)] Let $Z(0)=s=0$ and $Z^+(0)=r=Z(1)$. Observe that necessarily $r\ge0=s$. On the other hand, one has that $Z^+(t)=Z(t+1)$ for $t\ge0$. Then, it follows form point (i) that $Z(t)$ (and $Z^+(t)$) is non-decreasing in time. Hence, $Z(t)$ is convergent and, since is binary-valued, it is necessarily eventually constant. An analogous argument with $Z(0)=s=\1$ and $Z^+(0)=r=Z(1)$ shows  that $Z(t)$ is non-increasing in time when $Z(0)=s=\1$; 
\item[(ii)] Let $Z(0)=\sigma$ and $Z^+(0)=\sigma^+=Z(1)$. Observe that,  if $\rho_i\le(1-\sigma_i)\kappa_i$ for every $i$, then for all those $i$ such that $Z_i(0)=\sigma_i=1$ one has $\rho_i=0\le\sum_jA_{ij}Z_j(0)$ so that $\sigma^+_i=Z_i(1)=1$. Hence, necessarily $\sigma^+=Z(1)\ge \sigma$. It then follows from point (i) that $Z^+(t)=Z(t+1)\ge Z(t)$ for all $t\ge0$, i.e., $Z(t)$ is non-decreasing, hence eventually constant. %An analogous argument shows that $Z(t)$ is non-increasing if $\Theta\ge\1-s$.
\qed
\end{enumerate}\medskip

We now introduce a variation of the LTM known as \emph{Progressive} LTM (PLTM), whereby only state transitions from $0$ to $1$ are allowed, but not from $1$ to $0$. Formally, the PLTM on a network $\mc N=(\mc V,\mc E,\rho,\sigma)$ is defined by the following recursive relations 
\be\label{PLTM-def}
\begin{array}{l}Z_i(0)=\sigma_i\,,\qquad Z_i(t+1)=\left\{\ba{lcl}1&\se&\sum_{j}A_{ij}Z_j(t)\ge(1-Z_i(t))\rho_i\\[10pt]
0&\se&\sum_{j}A_{ij}Z_j(t)<(1-Z_i(t))\rho_i\ea\r.\qquad i=1,\ldots,n\,,\ t\ge0\,.
\end{array}
\ee
Observe that in the PLTM dynamics the state update rule of every agent $i$ depends on her own current state, regardless of the presence of self-loops in the network. This is in contrast with the LTM update rule, whereby the new state of every agent $i$ such that $A_{ii}=0$ depends on the current state of its out-neighbors only and not on itself. In spite of these differences,  the following result shows that the PLTM dynamics coincides with the LTM provided that agents with initial state $1$ are treated as if having effective threshold $0$. 

\begin{lemma}\label{lemma:LTM=PLTM}
The PLTM dynamics \eqref{PLTM-def} on a network $\mc N=(\mc V,\mc E,\rho,\sigma)$ coincide with the dynamics defined by
\be\label{PLTM-def1}
\begin{array}{l}
Z_i(0)=\sigma_i\,,\qquad Z_i(t+1)=\left\{\ba{lcl}1&\se&\sum_{j}A_{ij}Z_j(t)\ge(1-\sigma_i)\rho_i\\[10pt]
0&\se&\sum_{j}A_{ij}Z_j(t)<(1-\sigma_i)\rho_i\ea\r.\qquad i=1,\ldots,n\,,\ t\ge0\,.
\end{array}
\ee
In particular, if $\rho_i\le(1-\sigma_i)\kappa_i$ for every $i\in\mc V$, then the LTM dynamics \eqref{LTM-def} and the PLTM dynamics  \eqref{PLTM-def} coincide. 
\end{lemma}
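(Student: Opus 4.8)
The plan is to prove the equivalence of the two recursions by a single induction on $t$, the only delicate point being the agents $i$ that have already switched from $0$ to $1$ while having $\sigma_i=0$. First I would observe that the recursion \eqref{PLTM-def1} is nothing but the LTM dynamics \eqref{LTM-def} run on the modified network $\tilde{\mc N}=(\mc V,\mc E,\tilde\rho,\sigma)$ whose thresholds are $\tilde\rho_i=(1-\sigma_i)\rho_i$. Since by definition $\rho_i\in\{0,1,\ldots,\kappa_i\}$, one has $\tilde\rho_i\le(1-\sigma_i)\kappa_i$ for every $i$ (the inequality reads $\rho_i\le\kappa_i$ when $\sigma_i=0$ and $0\le0$ when $\sigma_i=1$). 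Hence the hypothesis of Lemma \ref{lemma:monotonicity}(ii) is met by $\tilde{\mc N}$, and the trajectory $W(t)$ generated by \eqref{PLTM-def1} is non-decreasing in $t$; in particular $W_i(t)\ge\sigma_i$ for all $i$ and $t$, and once $W_i$ equals $1$ it stays equal to $1$.

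Next I would compare $W(t)$ with the PLTM trajectory $Z(t)$ of \eqref{PLTM-def} by induction on $t$. The base case $Z(0)=\sigma=W(0)$ is immediate. For the inductive step, assume $Z(t)=W(t)=:S$ and fix an agent $i$; the two update rules differ only through the factor multiplying $\rho_i$, namely $(1-S_i)$ in \eqref{PLTM-def} against $(1-\sigma_i)$ in \eqref{PLTM-def1}. If $S_i=\sigma_i$ the two thresholds coincide and so do the updated states. The only remaining possibility, using $S_i=W_i(t)\ge\sigma_i$, is $\sigma_i=0$ and $S_i=1$: then \eqref{PLTM-def} uses threshold $0$ and returns $Z_i(t+1)=1$, while the monotonicity of $W$ established above forces $W_i(t+1)\ge W_i(t)=1$, hence $W_i(t+1)=1$ as well. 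In every case $Z_i(t+1)=W_i(t+1)$, which closes the induction and proves that \eqref{PLTM-def} and \eqref{PLTM-def1} coincide.

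Finally, for the last assertion I would simply note that under the extra hypothesis $\rho_i\le(1-\sigma_i)\kappa_i$ for every $i$ the modified thresholds reduce to the original ones: for $\sigma_i=0$ trivially $\tilde\rho_i=\rho_i$, while for $\sigma_i=1$ the hypothesis forces $\rho_i\le0$, i.e.\ $\rho_i=0=\tilde\rho_i$. Thus $\tilde{\mc N}$ and $\mc N$ carry the same thresholds, so the LTM \eqref{LTM-def} coincides with \eqref{PLTM-def1} and therefore, by the equivalence just proved, with the PLTM \eqref{PLTM-def}.

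I expect the main obstacle to be the case $\sigma_i=0$, $S_i=1$ in the inductive step: there the two recursions are driven by genuinely different thresholds ($0$ versus $\rho_i$), and a naive pointwise comparison of the update formulas would wrongly suggest they may disagree. The resolution is precisely the recognition that \eqref{PLTM-def1} is an LTM meeting the hypotheses of Lemma \ref{lemma:monotonicity}(ii), so that its monotonicity---rather than the instantaneous update formula---guarantees $W_i(t+1)=1$ there. Identifying \eqref{PLTM-def1} with the LTM on $\tilde{\mc N}$ and verifying $\tilde\rho_i\le(1-\sigma_i)\kappa_i$ is thus the crux of the argument.
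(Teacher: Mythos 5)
Your proof is correct and follows essentially the same route as the paper's: identify \eqref{PLTM-def1} with the LTM on the modified network $\tilde{\mc N}$ with thresholds $\tilde\rho_i=(1-\sigma_i)\rho_i$, invoke Lemma \ref{lemma:monotonicity}(ii) to get monotonicity of that trajectory, and then close an induction on $t$ in which the only nontrivial case ($\sigma_i=0$, current state $1$) is resolved by monotonicity rather than by the pointwise update rule. Your explicit verification that $\tilde\rho_i\le(1-\sigma_i)\kappa_i$ is a detail the paper leaves implicit, but the argument is the same.
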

\proof
Let us denote by $Z(t)$ and $\tilde Z(t)$ the state vectors generated by the recursions \eqref{PLTM-def} and \eqref{PLTM-def1}, respectively.
It follows from applying part (ii) of Lemma \ref{lemma:monotonicity} to the network $\tilde{\mc N}=(\mc V,\mc E,\tilde\rho,\sigma)$ where $\tilde\rho_i=\rho_i(1-\sigma_i)$ that $\tilde Z(t)$ is non-decreasing in $t$. On the other hand, $Z(t)$ is non-decreasing by construction, since only transitions from $0$ to $1$ are allowed by \eqref{PLTM-def} but not the other way around. Now, we shall proceed by an induction argument, assuming that $Z(s)=\tilde Z(s)$ for $s=0,1,\dots , t$ and showing that then $Z(t+1)=\tilde Z(t+1)$. For all those $i$ such that $Z_i(t)=\tilde Z_i(t)=0$ monotonicity of $\tilde Z(t)$ implies that $\sigma_i=\tilde Z_i(0)\le\tilde Z_i(t)=0$ and therefore the updates in \eqref{PLTM-def} and in \eqref{PLTM-def1} coincide, yielding $Z_i(t+1)=\tilde Z_i(t+1)$. On the other hand, for all those $i$ such that $Z_i(t)=\tilde Z_i(t)=1$, monotonicity implies that $Z_i(t+1)\ge Z_i(t)=1$ and $\tilde Z_i(t+1)\ge\tilde Z_i(t)=1$ so that $\tilde Z_i(t+1)=Z_i(t+1)$. This proves the first claim.

The second part of the Lemma simply follows from the fact that $\rho_i\le(1-\sigma_i)\kappa_i$ and $\sigma_i\in\{0,1\}$ imply $(1-\sigma_i)\rho_i=\rho_i$.
% hence \eqref{PLTM-def1} coincides with   
%\be\label{PLTM-def2}
%Z_i(t+1)=\left\{\ba{lcl}1&\se&\sum_{j}A_{ij}Z_j(t)\ge(1-Z_i(t))(1-\sigma_i)\rho_i\\[10pt]
%0&\se&\sum_{j}A_{ij}Z_j(t)<(1-Z_i(t))(1-\sigma_i)\rho_i\ea\r. 
%\ee
%On the other hand  $\rho_i\le\kappa_i(1-\sigma_i)$ and $\sigma_i\in\{0,1\}$ imply that $(1-\sigma_i)\rho_i=\rho_i$, so that 
%\eqref{PLTM-def2} coincides with \eqref{PLTM-def}. 
\qed\medskip

Lemma \ref{lemma:LTM=PLTM} is particularly significant in that it implies that the study of the PLTM dynamics \eqref{PLTM-def} can be reduced to that of a special case of the LTM dynamics \eqref{LTM-def}, where all agents with initial state $\sigma_i=1$ have threshold  $\rho_i=0$.

\section{Recursive equations for networks with given statistics} 
\label{sec:LTM-BR}
As mentioned in Section \ref{sec:intro}, the LTM on a complete network lends itself to a simple analysis enabled by the fact that the fraction of state-$1$ adopters %$\frac1n\sum_iZ_i(t)$
 evolves according to the one-dimensional recursion $y(t+1)=F(y(t))$, where $F$ is the cumulative distribution of the normalized thresholds across the population \cite{mG:1978}. While such a one-dimensional recursion does not hold true for the LTM dynamics on general networks, the main contribution of this paper consists in showing that the fraction of state-$1$ adopters in the LTM and the PLTM dynamics on most directed networks can be approximated\footnote{Cf.~Figures \ref{fig:ex-homo-rand-dyn} and \ref{fig:ex-hetero-glob-rand-dyn}.  } ---in a quantitatively precise sense that will be formalized in Section \ref{sec:density-evolution}--- by the output $y(t)$ of another one-dimensional recursion of the form 
\be\label{CM-recursion}x(t+1)=\phi(x(t))\,,\qquad y(t+1)=\psi(x(t))\,,\ee 
where (cf.~\eqref{psi-def} and \eqref{phi-def}) $\phi(x)$ and $\psi(x)$ are polynomials with nonnegative coefficients that depend on the network's statistics $\mb p$ defined below. 
%{\red IT WOULD BE NICE TRO ADD REFERENCE TO A PICTURE WHERE THIS IS CLEAR}
In this section, we introduce the specific form of the recursion \eqref{CM-recursion} and analyze its dynamical behavior, while postponing to Section \ref{sec:density-evolution} the formal proof that the output $y(t)$ of \eqref{CM-recursion} provides an effective approximation of the fraction of state-$1$ adopters in the LTM dynamics. 

Throughout, we will use the following notation. For a network $\mc N=(\mc V,\mc E,\rho,\sigma)$ of size $n$, 
\be\label{def:joint-distribution}p_{d,k,r,s}=\frac1n\left|\{i\in\mc V:\,\delta_i=d,\,\kappa_i=k,\,\rho_i=r,\,\sigma_i=s\}\right|\,,\qquad d\ge0\,,\ 0\le r\le k\,,\ s=0,1\,,\ee
stands for the fraction of agents having in-degree $d$, out-degree $k$, threshold $r$, and initial state $s$ and  
\be\label{def:average-degree}
l:=\sum_{i\in\mc V}\delta_i=\sum_{i\in\mc V}\kappa_i\,,\qquad
\ov d=\frac ln\ee 
stand for the network's total degree and average degree, respectively.
We refer to $\mb p=\{ \pdkrx \}$ as the \emph{network's statistics} and denote by
\be\label{def:degree-distribution}p_{k,r}:=\sum_{d\ge0}\sum_{s=0,1}p_{d,k,r,s}\,,\qquad 
q_{k,r}:=\frac{1}{\ov d}\sum_{d\ge0}\sum_{s=0,1}dq_{d,k,r,s}\,,\qquad k,r\ge0\,,\ee
the fractions of agents and, respectively, of links pointing to agents, of out-degree $k$ and threshold $r$, and by
\be\label{upsxi}
\ups:=\sum_{d\ge0}\sum_{k\ge0}\sum_{r\ge0}p_{d,k,r,1}\,,\qquad 
\xi:=\frac1{\ov d}\sum_{d\ge0}\sum_{k\ge0}\sum_{r\ge0}dp_{d,k,r,1}\ee 
the fractions of agents and, respectively, of links pointing towards agents, with initial state $\sigma_i=1$. 

\subsection{The recursion}\label{subsect:recursion}

In order to get a quick, unrigorous yet intuitive derivation of the recursion \eqref{CM-recursion}, consider the following random network dynamics with state vector $Y(t)\in\{0,1\}^n$ whose initial state is $Y(0)=\sigma$ and whereby, at each time $t\ge0$, agents $i\in\mc V$ select $\kappa_i$ agents $J^{i}_{1},\ldots,J^i_{\kappa_i}$ independently at random from the population with probability $\P(J^i_h=j)=\delta_j/l$ and update its state as $Y_i(t+1)=1$ if $\sum_{0\le h\le\kappa_i}Y_{J^i_h}\ge\rho_i$ and as $Y_i(t+1)=0$ if $\sum_{0\le h\le\kappa_i}Y_{J^i_h}<\rho_i$.
%$$Y_i(t+1)=\left\{\ba{lcl}1&\se&\sum_{0\le h\le\kappa_i}Y_{J^i_h}\ge\rho_i\\
%0&\se&\sum_{0\le h\le\kappa_i}Y_{J^i_h}<\rho_i\,.
%\ea\right.$$ 
Let 
$y(t)=\frac1n\sum_{i}Y_i(t)$ and $x(t)=\frac1l\sum_{i}\delta_iY_i(t)$
be the fractions of state-$1$ adopters and links pointing towards state-$1$ adopters, respectively. 
It is immediate to verify that 
\be\label{initialx} 
x(0)=\xi\,,\qquad y(0)=\ups\,.
\ee
On the other hand, if $I$ is a random agent selected from $\mc V$ with uniform probability $\P(I=i)=1/n$, then 
%$x(t)=\P(Y_J(t)=1)$ and $y(t)=\P(Y_I(t)=1)$ for all $t\ge0$. 
%On the other hand, one has that 
$$\ba{rclcc}%\E[\frac1n\sum_{i\in\mc V}Y_i(t+1)]&=&
\E[y(t+1)|Y(t)]\ds&=&\ds\P(Y_I(t+1)=1|Y(t))\\[5pt]
&=&\ds\sum_{k\ge0}\sum_{r\ge0}p_{k,r}\P\l(\sum_{h=1}^kY_{J^I_h}(t)\ge r\big|Y(t)\r)\\[15pt]
&=&\ds\sum_{k\ge0}\sum_{r\ge0}p_{k,r}\sum_{u=r}^k\P\l(\sum_{h=1}^kY_{J^I_h}(t)=u\big|Y(t)\r)\\[15pt]
&=&\ds\sum_{k\ge0}\sum_{r\ge0}p_{k,r}\varphi_{k,r}(x(t))\\[15pt]
&=&\ds\psi(x(t))\,,
\ea
$$
where the forth identity above follows with
\be\label{varphi-def}\varphi_{k,r}(x):=\sum_{u=r}^k\binom ku x^u(1-x)^{k-u}\,,\qquad 0\le r\le k\,,\ee
from the fact that, conditioned on $Y(t)$, the $Y_{J^i_h}(t)$ are independent Bernoulli random variables with $\P(Y_{J^i_h}(t)=1|Y(t))=x(t)$, while the last identity holds true upon defining 
\be\label{psi-def}\qquad \psi(x):=\sum_{k\ge0}\sum_{r\ge0} p_{k,r}\varphi_{k,r}(x)\,.\ee
An analogous computation shows that, if $J$ is a random agent selected with probability and $\P(J=j)=\delta_j/l$, then 
$$%\ba{rclcc}%\E[\frac1n\sum_{i\in\mc V}Y_i(t+1)]&=&
\E[x(t+1)|Y(t)]=\P(Y_J(t+1)=1|Y(t))
%&=&\ds\sum_{k\ge0}\sum_{0\le r\le k}p_{k,r}\P(\sum_{1\le h\le k}Y_{J^I_h}\ge r|Y(t))\\[15pt]
%&=&\ds\sum_{k\ge0}\sum_{0\le r\le k}q_{k,r}\sum_{r\le j\le k}\P(\sum_{1\le h\le k}Y_{J^J_h}(t)=j|Y(t))\\[15pt]
%=\sum_{k\ge0}\sum_{r\ge0}q_{k,r}\varphi_{r,k}(x(t))
=\phi(x(t))\,,
$$
where 
\be\label{phi-def}\phi(x):=\sum_{k\ge0}\sum_{r\ge0} q_{k,r}\varphi_{k,r}(x)\,.\ee

While the above computations are merely concerned with the conditional expected fractions of state-$1$ adopters, and links pointing towards state-$1$ adopters, in the random network dynamics $Y(t)$, in Section \ref{sec:density-evolution} we will prove that the output $y(t)$ of the recursion \eqref{CM-recursion} with initial condition \eqref{initialx} does in fact provide a good approximation of the evolution of the fraction of state-$1$ adopters for the actual LTM dynamics \eqref{LTMdef} on most of the networks with given statistics $\mb p$. It will then follow from Lemma \ref{lemma:LTM=PLTM} that such approximation result remains valid for the fraction of state-$1$ adopters in the PLTM dynamics as long as
\be\label{PLTMpcond} p_{d,k,r,1}=0\,,\qquad d\ge0\,,\quad 1\le r\le k\,,\ee 
i.e., when $\rho_i\le\kappa_i(1-\sigma_i)$ for all agents $i\in\mc V$. 
In the remainder of this section we study, both analytically and numerically, the behavior of the recursion \eqref{CM-recursion}.

Observe that every function $\varphi_{k,r}(x)$ defined as in \eqref{varphi-def} maps the unitary interval $[0,1]$ in itself in a continuous and monotonically non-decreasing way, and so do $\phi(x)$ and $\psi(x)$ which are convex combinations of the $\varphi_{k,r}(x)$. The dynamics of the recursion \eqref{CM-recursion} can then be understood by a graphical procedure, consisting in iteratively projecting points from the graph of the function $\phi(x)$ to the diagonal of the $[0,1]\times[0,1]$ square and {\it vice versa} (compare the left-most plot in Figure~\ref{fig:Phi-x}). Continuity and monotonicity of $\phi(x)$ and $\psi(x)$ imply that both the state $x(t)$ and output $y(t)$ of the recursion \eqref{CM-recursion} always converge, as $t$ grows large, to limit values 
that depend on the initial seed $\xi$ only, as formally stated in the following result. 
\begin{lemma}\label{lemma:simple}
Let $\phi(x)$ and $\psi(x)$ be defined as in \eqref{psi-def} and \eqref{phi-def}, respectively, for given network statistics $\mb p$. 
Then, as time $t$ grows large, the state $x(t)$ and output $y(t)$ of the recursion \eqref{CM-recursion} with initial condition \eqref{initialx} are convergent to limit values 
$$x^*(\xi):=\lim_{t\to\infty}x(t)\,,\qquad y^*(\xi):=\lim_{t\to\infty}y(t)$$ 
such that 
\be
x^*(\xi)=\left\{\ba{lcl}
\text{largest fixed point of }\phi(x)\text{ in }[0,\xi) &\se&\phi(\xi)<\xi\\
\xi&\se&\phi(\xi)=\xi\\
\text{smallest fixed point of }\phi(x)\text{ in }(\xi,1] &\se&\phi(\xi)>\xi\ea\right.\qquad\  y^*(\xi)=\psi(x^*(\xi))\,.
\ee
\end{lemma}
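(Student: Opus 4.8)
The plan is to treat \eqref{CM-recursion} as a scalar monotone discrete dynamical system and to exploit the fact, already recorded just before the statement, that $\phi$ is continuous and monotonically non-decreasing on $[0,1]$ with $\phi([0,1])\subseteq[0,1]$. The whole argument is driven by the observation that monotonicity of the map $\phi$ forces the orbit $x(t)$ to be a monotone scalar sequence.

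First I would establish that $x(t)$ is monotone, with direction dictated by the sign of $x(1)-x(0)=\phi(\xi)-\xi$. Since $\phi$ is non-decreasing, this sign propagates: if $\phi(\xi)\ge\xi$, then assuming $x(t)\ge x(t-1)$ gives $x(t+1)=\phi(x(t))\ge\phi(x(t-1))=x(t)$, so by induction $x(t)$ is non-decreasing; symmetrically, if $\phi(\xi)\le\xi$ it is non-increasing. In either case $x(t)$ is a bounded monotone sequence in $[0,1]$, hence convergent to some limit $x^*=x^*(\xi)\in[0,1]$. Passing to the limit in the identity $x(t+1)=\phi(x(t))$ and using continuity of $\phi$ yields $x^*=\phi(x^*)$, so $x^*$ is a fixed point of $\phi$.

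Next I would identify which fixed point $x^*$ is, according to the three cases. If $\phi(\xi)=\xi$, then $x(t)\equiv\xi$ and $x^*=\xi$. If $\phi(\xi)>\xi$, then $x(t)$ is non-decreasing with $x^*\ge x(1)=\phi(\xi)>\xi$, so $x^*$ is a fixed point lying in $(\xi,1]$; to see it is the \emph{smallest} such, I take an arbitrary fixed point $\bar x\in(\xi,1]$ and argue by induction that $x(t)\le\bar x$ for all $t$, since $x(0)=\xi<\bar x$ and $x(t)\le\bar x$ implies $x(t+1)=\phi(x(t))\le\phi(\bar x)=\bar x$ by monotonicity. Hence $x^*=\lim_t x(t)\le\bar x$, and being itself a fixed point in $(\xi,1]$ bounded above by every competitor, $x^*$ is the smallest fixed point in $(\xi,1]$. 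The case $\phi(\xi)<\xi$ is entirely symmetric: $x(t)$ is non-increasing with $x^*\le\phi(\xi)<\xi$, and the analogous comparison against any fixed point $\bar x\in[0,\xi)$, now using $x(t)\ge\bar x\Rightarrow x(t+1)\ge\bar x$, shows $x^*$ is the largest fixed point in $[0,\xi)$. Finally, for the output sequence, since $y(t+1)=\psi(x(t))$ with $x(t)\to x^*$ and $\psi$ continuous, I conclude $y(t)\to\psi(x^*)=y^*(\xi)$, as claimed.

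The argument is essentially routine once monotonicity of $\phi$ is in hand; the only point demanding a little care is the extremality assertion — that $x^*$ is not merely \emph{some} fixed point but the smallest (respectively largest) one on the relevant side of $\xi$. The sandwiching induction $x(t)\le\bar x$ (respectively $x(t)\ge\bar x$) against an arbitrary competing fixed point delivers exactly this, and it simultaneously certifies existence, since $x^*$ itself already lies in the prescribed interval; thus no separate intermediate-value or compactness step is required.
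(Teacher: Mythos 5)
Your proposal is correct and follows essentially the same route as the paper: monotonicity of $\phi$ forces $x(t)$ to be a monotone bounded sequence, continuity identifies the limit as a fixed point, and a comparison against an arbitrary competing fixed point pins down its extremality (the paper phrases this last step as a contradiction argument, you as a direct sandwiching induction, but the content is identical). No gaps.
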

\proof
We consider the case $\phi(\xi)<\xi$ first. In this case, $x(1)=\phi(x(0))=\phi(\xi)<\xi=x(0)$ and monotonicity of $\phi(x)$ allows one to prove that  $x(t+1)=\phi(x(t))\le\phi(x(t-1))=x(t)$ by a simple induction argument. Then, $x(t)$ is monotonically non-increasing in $t$, hence converging to a limit $x^*(\xi)$. By continuity of $\phi(x)$, such a limit must be a fixed point $x^*(\xi)=\phi(x^*(\xi))$, and continuity of $\psi(x)$ implies that $y^*(\xi)=\lim_{t\to\infty}y(t)=\lim_{t\to\infty}\phi(x(t-1))=\phi(x^*(\xi))$. Observe that, since $x(t)$ is non-increasing in $t$, then $x^*(\xi)=\lim_tx(t)<x(0)=\xi$. Moreover, there cannot exist another fixed point $x^*=\phi(x^*)$ such that $x^*(\xi)<x^*<\xi$, since, if such $x^*$ existed, then monotonicity of $\phi(x)$ would imply that $x(t)=\phi(\phi(\ldots\phi(\xi)))>\phi(\phi(\ldots\phi(x^*)))=x^*$ leading to the contradiction $x^*(\xi)=\lim_{t\to\infty}x(t)\le x^*$. Hence, $x^*(\xi)=\phi(x^*(\xi))$ is necessarily the largest fixed point of $\phi(x)$ in the interval $[0,\xi)$. The other two cases can be proved analogously. 
\qed

%Observe that Lemma \ref{lemma:simple} one gets that the state $y(t)$ of the recursion \eqref{LTM-recursion-fullymixed} converges to largest (respectively, lowest) fixed point $y^*=F(y^*)$ that is not higher (not lower) than the initial seed $y(0)=\xi$. 

\subsection{Out-regular networks with homogeneous thresholds}\label{subsect:homogeneous}
\begin{figure}
\begin{center}
\includegraphics[width=7cm]{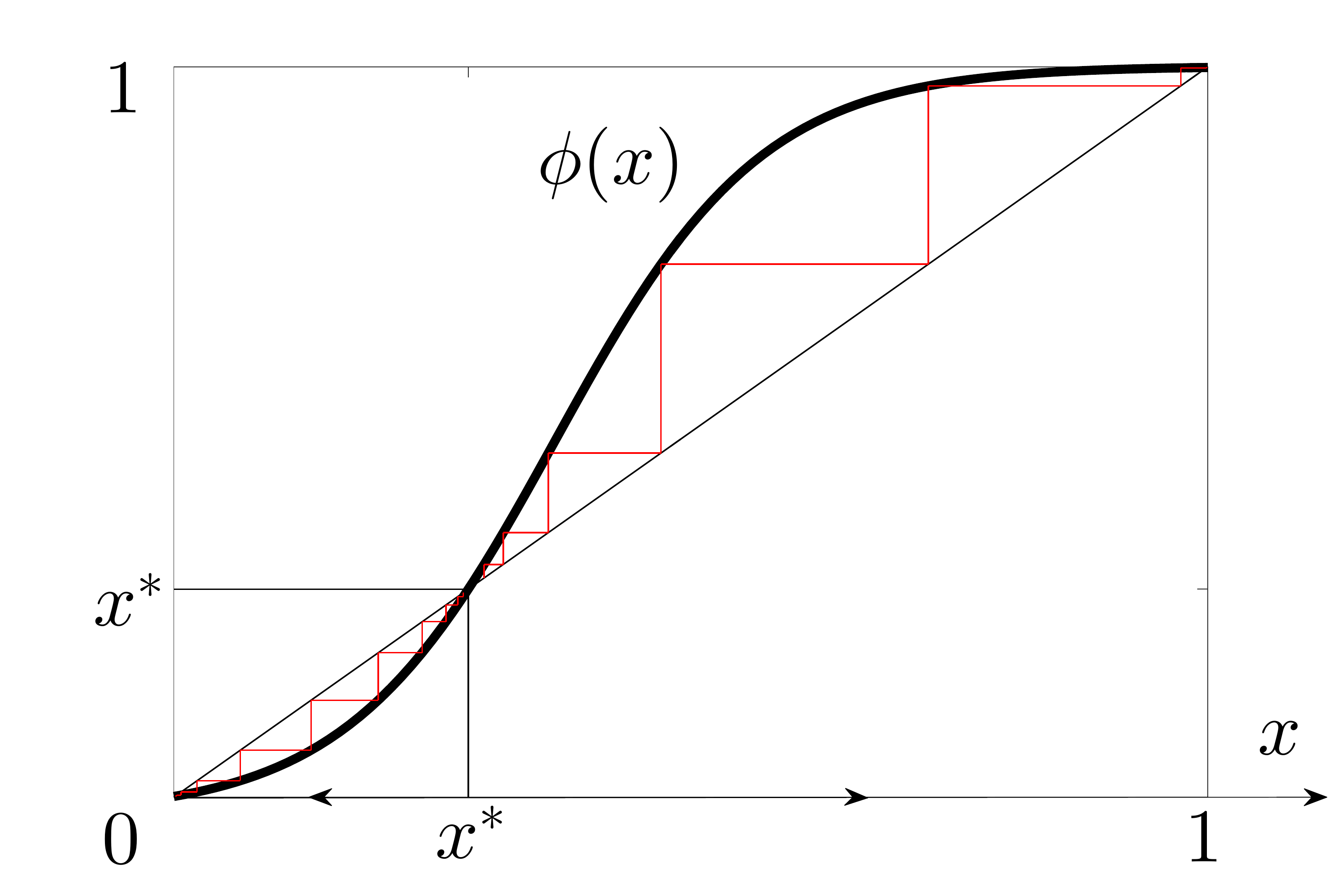}\hspace{1cm}
\includegraphics[width=7cm]{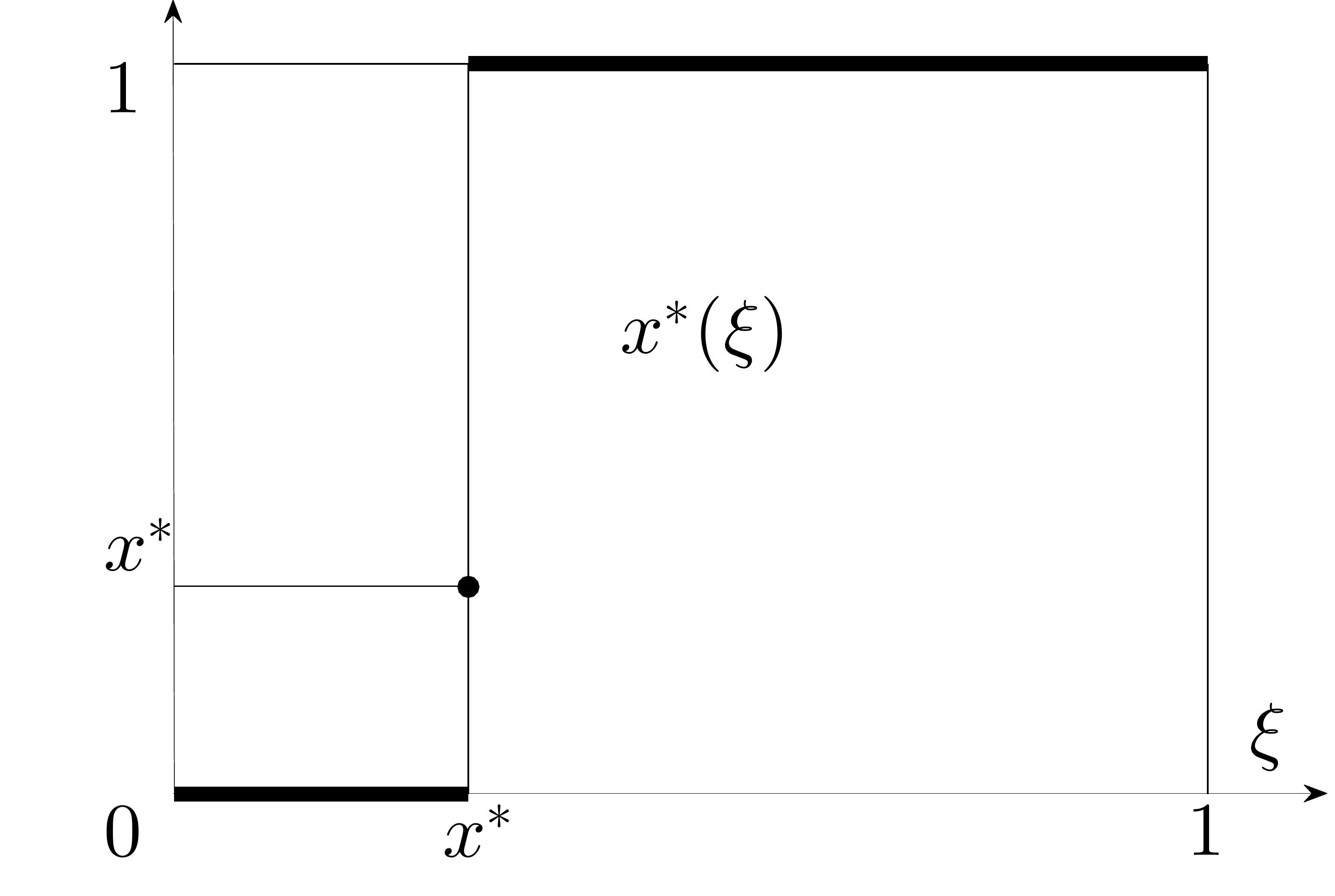}
\end{center}
\caption{\label{fig:Phi-x} The recursion \eqref{recursion-hom} for a typical out-regular network with out-degree $k\ge3$ and homogeneous thresholds $2\le r<k$. The function $\phi(x)=\varphi_{k,r}(x)$ displayed in the leftmost plot has three fixed points: $0$, $1$ and $x^*$.  
The state $x(t)$ of the recursion \eqref{recursion-hom} converges to $0$ for every initial condition $x(0)\in[0,x^*)$, to $1$ for every initial condition $x(0)\in(x^*,1]$, and stays put if $x(0)=x^*$. 
The figure on the right shows the limit of $x(t)$ as $t$ grows large as a function of the initial condition $x(0)$. 
}
\end{figure}
We now focus on the simplest case where all the agents have the same out-degree $\kappa_i=k$ and threshold $\rho_i=r$. In this case, one has that $p_{k,r}=q_{k,r}=1$, $\phi(x)=\psi(x)=\varphi_{k,r}(x)$ and $\ups = \xi$, so that the recursion \eqref{CM-recursion} reduces to 
\be\label{recursion-hom}
x(t+1)=y(t+1)= \varphi_{k,r}(x(t))\,
%\left\{\begin{array}{rcl}
%x(t+1)&=&\varphi_{k,r}(x(t))\\
%y(t)&=&x(t)
%\end{array}\right.
\ee 
with initial condition $x(0) = y(0) = \xi$.
In the following result, we gather some elementary properties of the functions $\varphi_{k,r}(x)$ whose proof relies merely on basic calculus and that will prove useful later on.
\begin{lemma}\label{lemma:phi-properties}
\begin{enumerate}
For $0\le r\le k$, let $\varphi_{k,r}(x)$ be the polynomial function defined in \eqref{varphi-def}. Then, for $x\in[0,1]$,
\item[(i)] $\varphi_{k,r}(x)$ is a non-decreasing function, strictly increasing if $0<r\le k$;
\item[(ii)] $\varphi'_{k,r}(x)=\binom kr rx^{r-1}(1-x)^{k-r}$;
\item[(iii)] $\varphi''_{k,r}(x)=\binom kr rx^{r-2}(1-x)^{k-r-1}(r-1-x(k-1))$;
\item[(iv)] $\varphi_{k,r}(1)=1$ for $0\le r\le k$; $\varphi_{k,r}(0)=0$ for $0<r\le k$; $\varphi_{k,0}(0)=1$; 
\item[(v)] $\varphi_{k,0}(x)=1$, $\varphi_{k,1}(x)=1-(1-x)^k$, $\varphi_{k,k}(x)=x^k$;% $\varphi_{k,k+1}(x)=0$
\item[(vi)] For $1\le r\le k$, with $k > 1$ the function $\varphi_{k,r}(x)$ has one inflection point in $\tilde x=(r-1)/(k-1)$. It is strictly convex for $0\leq x\leq \tilde x$ and strictly concave for $\tilde x\leq x\leq 1$;
\item[(vii)] For $2\le r<k$, the equation $\varphi_{k,r}(x)=x$ has exactly three solutions $\{0, x^*, 1\}$ with $0<x^*<1$ and  $\varphi_{k,r}'(x^*)>1$.
\end{enumerate}\end{lemma}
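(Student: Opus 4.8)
The plan is to treat part (ii) as the engine of the whole lemma and to derive everything else from it together with elementary calculus. To prove (ii) I would write $\varphi_{k,r}(x)=\sum_{u=r}^k b_u(x)$ with $b_u(x)=\binom ku x^u(1-x)^{k-u}$ and differentiate each Bernstein term, using $\binom ku u=k\binom{k-1}{u-1}$ and $\binom ku(k-u)=k\binom{k-1}{u}$ to get $b_u'(x)=k\binom{k-1}{u-1}x^{u-1}(1-x)^{k-u}-k\binom{k-1}{u}x^u(1-x)^{k-u-1}$. Summing over $r\le u\le k$ telescopes, leaving only the lowest term, so $\varphi'_{k,r}(x)=k\binom{k-1}{r-1}x^{r-1}(1-x)^{k-r}=\binom kr r\,x^{r-1}(1-x)^{k-r}$. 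Part (iii) then follows from one more differentiation, factoring out $x^{r-2}(1-x)^{k-r-1}$ and collecting the affine remainder $(r-1)-x(k-1)$.

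With (ii) in hand, (i) is immediate: $\varphi'_{k,r}\ge0$ on $[0,1]$ and $\varphi'_{k,r}>0$ on $(0,1)$ whenever $1\le r\le k$, so $\varphi_{k,r}$ is non-decreasing, and strictly increasing for $r\ge1$. Parts (iv) and (v) I would dispatch by direct evaluation: the binomial theorem gives $\varphi_{k,0}(x)=(x+(1-x))^k=1$; at $x=1$ only the $u=k$ term survives, so $\varphi_{k,r}(1)=1$; at $x=0$ the sum has no $u=0$ term when $r\ge1$, so $\varphi_{k,r}(0)=0$; and $\varphi_{k,1}=1-(1-x)^k$, $\varphi_{k,k}=x^k$ are read off directly.

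For (vi) I would analyze the sign of $\varphi''_{k,r}$ via (iii). On $(0,1)$ the prefactor $\binom kr r\,x^{r-2}(1-x)^{k-r-1}$ is strictly positive, including the boundary cases $r=1$ and $r=k$, where the factors $x^{-1}$ or $(1-x)^{-1}$ are still positive on the open interval (as confirmed by $\varphi''_{k,k}=k(k-1)x^{k-2}$). Hence the sign of $\varphi''_{k,r}$ equals that of $(r-1)-x(k-1)$, which is positive for $x<\tilde x$ and negative for $x>\tilde x$ with $\tilde x=(r-1)/(k-1)$; this yields strict convexity on $[0,\tilde x]$, strict concavity on $[\tilde x,1]$, and a single inflection at $\tilde x$.

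The crux is (vii), where I would combine the shape from (vi) with the boundary behaviour. Set $g(x)=\varphi_{k,r}(x)-x$; by (iv), $g(0)=g(1)=0$, and by (ii), for $2\le r<k$ one has $\varphi'_{k,r}(0)=\varphi'_{k,r}(1)=0$, so $g'(0)=g'(1)=-1$. Since $-x$ is affine, (vi) makes $g$ strictly convex on $[0,\tilde x]$ and strictly concave on $[\tilde x,1]$, so $g'$ increases then decreases, i.e.\ it is unimodal with $g'(0)=g'(1)=-1$. Because $g'(0)=-1<0$ forces $g$ to dip below $0$ just past the origin, while $g(1)=0$ requires $g$ to rise again, the peak value $g'(\tilde x)$ must be strictly positive; hence $g'$ has exactly two zeros $a\in(0,\tilde x)$ and $b\in(\tilde x,1)$, so $g$ decreases on $[0,a]$, increases on $[a,b]$, and decreases on $[b,1]$. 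Then $g(a)<g(0)=0$ and $g(b)>g(1)=0$, so $g$ stays negative on $(0,a]$, crosses zero exactly once at some $x^*\in(a,b)$, and stays positive on $[b,1)$, giving precisely one interior fixed point. Finally $x^*$ sits in the strictly increasing stretch $(a,b)$, so $g'(x^*)>0$, i.e.\ $\varphi'_{k,r}(x^*)>1$. I expect the telescoping in (ii) and this fixed-point count in (vii) to be the only steps requiring care; the rest is routine evaluation.
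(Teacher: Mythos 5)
Your proof is correct and complete. The paper itself gives no proof of this lemma, stating only that it ``relies merely on basic calculus''; your argument --- the telescoping Bernstein-sum derivative for (ii), sign analysis of $\varphi''_{k,r}$ for (vi), and the unimodality of $g'(x)$ for $g(x)=\varphi_{k,r}(x)-x$ combined with $g(0)=g(1)=0$ and $g'(0)=g'(1)=-1$ to count the fixed points in (vii) --- is precisely the elementary-calculus proof the authors intended, carried out correctly and with appropriate care at the boundary cases $r=1$ and $r=k$.
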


Lemma \ref{lemma:phi-properties} implies that, for $2\le r <k$, the function $\varphi_{k,r}(x)$ has a \emph{lazy-S}-shaped graph, i.e., it is increasing, with a unique inflection point $\tilde x=(r-1)/(k-1)$, it is convex on the left-hand side of $\tilde x$ and concave on the right-hand side of $\tilde x$ (compare Figure~\ref{fig:Phi-x}, left). Besides the trivial cases $r=0,1, k$, whose asymptotics are reported below
$$%\lim\limits_{t\to +\infty}x(t)
y^*(\xi)=x^*(\xi)=\left\{\begin{array}{ll}1\quad &\text{ if }\, r=0\text{ or }r=1<k\\ \xi\quad &\text{ if }r=k=1\\ 0\quad &\text{ if } r=k>1\,,\end{array}\right.$$
point (vii) of Lemma \ref{lemma:phi-properties}  implies that the recursion \eqref{recursion-hom} exhibits a threshold behavior with respect to the initial fraction of state-$1$ adopters. In fact, for  %$k>1$ and $r=2,\dots ,k-1$ 
$2\le r<k$, it holds true that 
\be\label{xinfty}
%\lim\limits_{t\to +\infty}x(t)
y^*(\xi)=x^*(\xi)=\left\{\begin{array}{ll}0\quad &{\rm if}\, \xi<x^*\\ 
x^*\quad &{\rm if}\, \xi=x^*\\ 
1\quad &{\rm if}\, \xi>x^*\,,\end{array}\right.\ee
where $x^*=\varphi_{k,r}(x^*)$ is the unique fixed point of $\varphi_{k,r}(x)$ in the open interval $(0,1)$. 
Equation \eqref{xinfty} implies that, if the fraction $\ups=\xi$ of agents with initial state $\sigma_i=1$ is smaller than the threshold value $x^*$, then the fraction of state-$1$ adopters vanishes as time grows large whereas, if $\ups=\xi>x^*$ then the fraction of state-$1$ adopters approaches $1$ asymptotically (cf.~Figure~\ref{fig:Phi-x}, right).

%%%%%%%

\begin{figure}
	\centering	
	\includegraphics[trim={\figtrimla} {\figtrimba} {\figtrimra} {\figtrimta},clip, width={\figwidthduo},
		keepaspectratio=true]{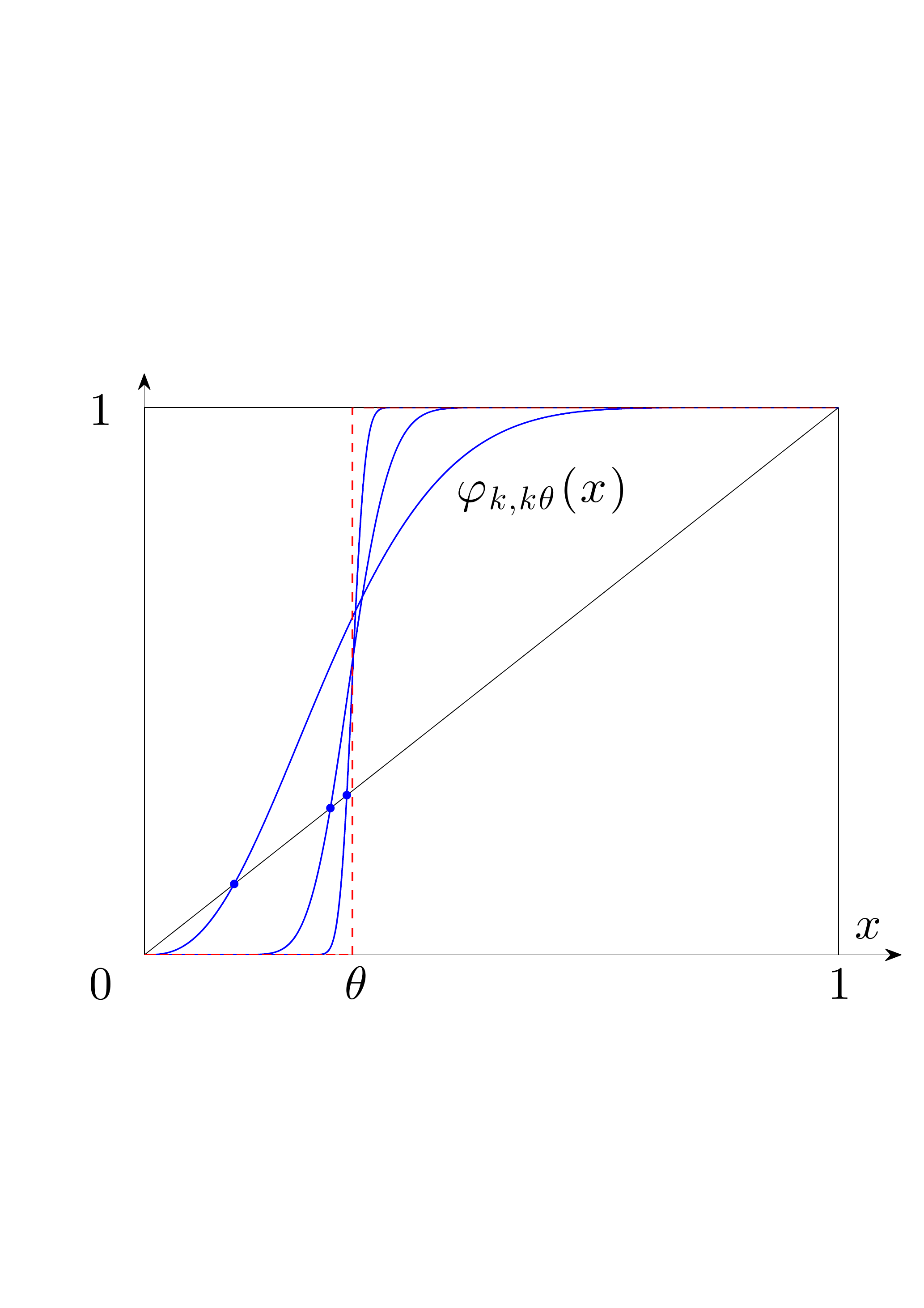}%
		
% questa figure non ha il floor		
%	\includegraphics[trim={\figtrimla} {\figtrimba} {\figtrimra} {\figtrimta},clip, width={\figwidthduo},
%		keepaspectratio=true]{phikr_new.pdf}% 
%	\includegraphics[trim={\figtrimla} {\figtrimba} {\figtrimra} {\figtrimta},clip, width={\figwidthduo},
%		keepaspectratio=true]{phikr_ceil.pdf}%

	\caption{\label{fig:varphi-conv} 
	Plots of the function $\varphi_{k,k\theta}(x)$ for $\theta = 0.3$ and $k = 10$, $100$, $1000$ (blue solid lines). The small full circles represent the internal fixed point $x^*$ of $\varphi_{k,k\theta}(x)$ for the three values of $k$. In the limit of large $k$, the step function $\varphi_{\theta}(x)=0$ for $x<\theta$ and $\varphi_{\theta}(x)=1$ for $x\ge\theta$ (red dashed plot) is achieved.}
\end{figure}

%%%%%%%

A simple estimation of the threshold value $x^*=\varphi_{k,r}(x^*)$ follows from the observation that $\varphi_{k,r}(x)$ can be interpreted as the probability that a random variable with binomial distribution of parameters $k$ and $x$ exceeds $r$. The fact that mean and median coincide for such binomial random variables when the mean $kx$ is an integer value implies that 
\be\label{median}\varphi_{k,r}(r/k)\geq 1/2\,,\qquad\varphi_{k,r}((r-1)/k)\leq 1/2\,,\ee so that, in particular,
$$\begin{array}{rclll}x^*&\leq& r/k &  \text{ if }& r/k\leq 1/2\\
x^*&\geq& (r-1)/k &\text{ if }& (r-1)/k\geq 1/2\,.\end{array}$$
%{\red 
In fact, if we fix a value $\theta\in [0,1]$ and let $r=\lfloor\theta k\rfloor$, then the law of large numbers implies that
$$\lim_{k\to\infty}\varphi_{k,\lfloor k\theta\rfloor}(x)=\left\{\ba{lcl}0&\se&0\le x<\theta\\1&\se&\theta\le x\le1\,,\ea\right.$$ 
i.e., $\varphi_{k,r}(x)$ approaches a step function in the limit as $k$ grows large (cf.~Figure~\ref{fig:varphi-conv}). 
This shows that the ratio $\theta=r/k$ is a good approximation of the threshold value $x^*=\varphi_{k,r}(x^*)$ when $r$ and $k$ are large enough.

\begin{figure}
	\centering
	%\fbox{%
	\includegraphics[trim={\figtrimla} {\figtrimba} {\figtrimra} {\figtrimta},clip, width={\figwidthduo},
		keepaspectratio=true]{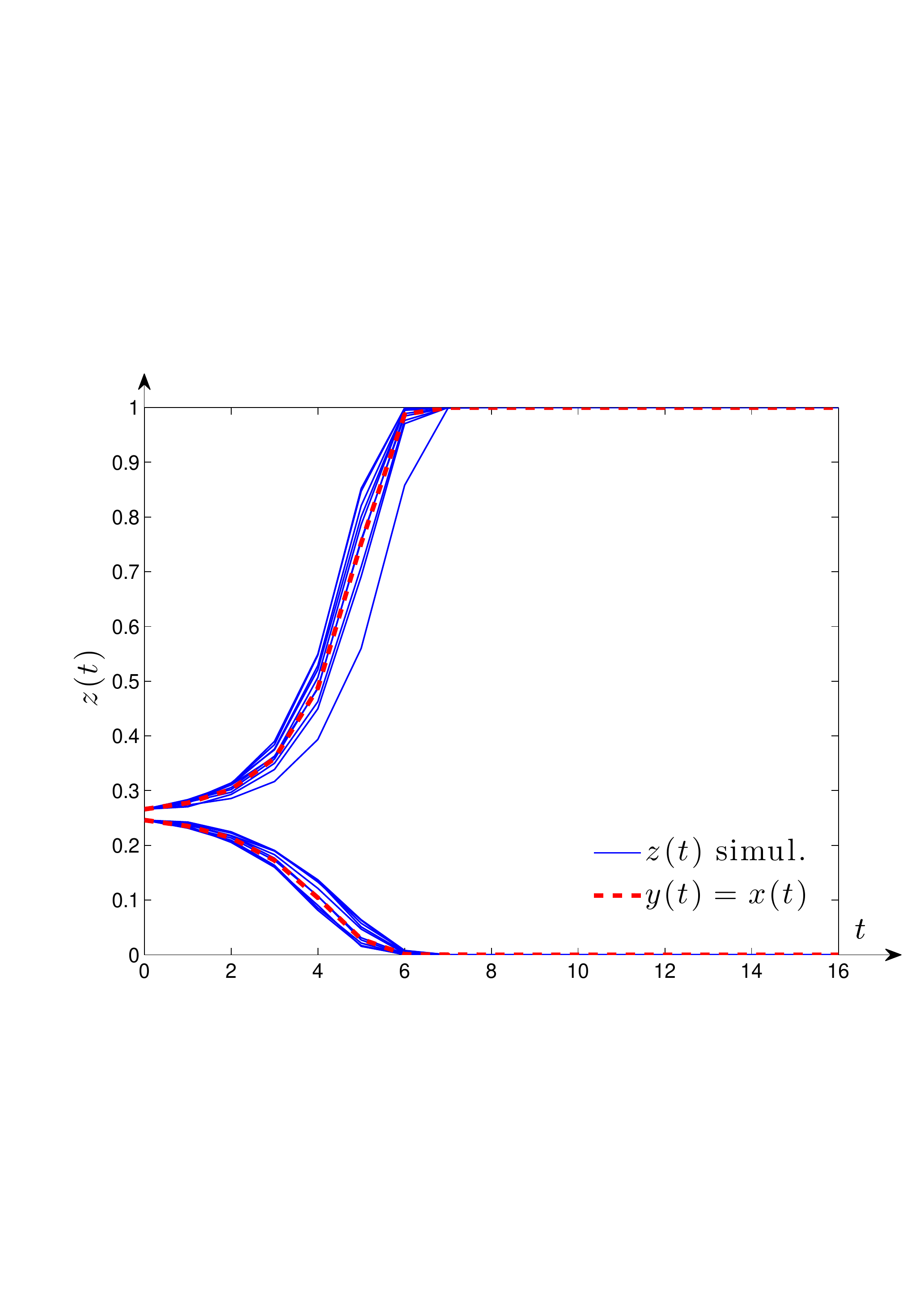}%
	%}%
	\hspace{1cm}%
	%\fbox{%
	\includegraphics[trim={\figtrimla} {\figtrimba} {\figtrimra} {\figtrimta},clip, width={\figwidthduo},
		keepaspectratio=true]{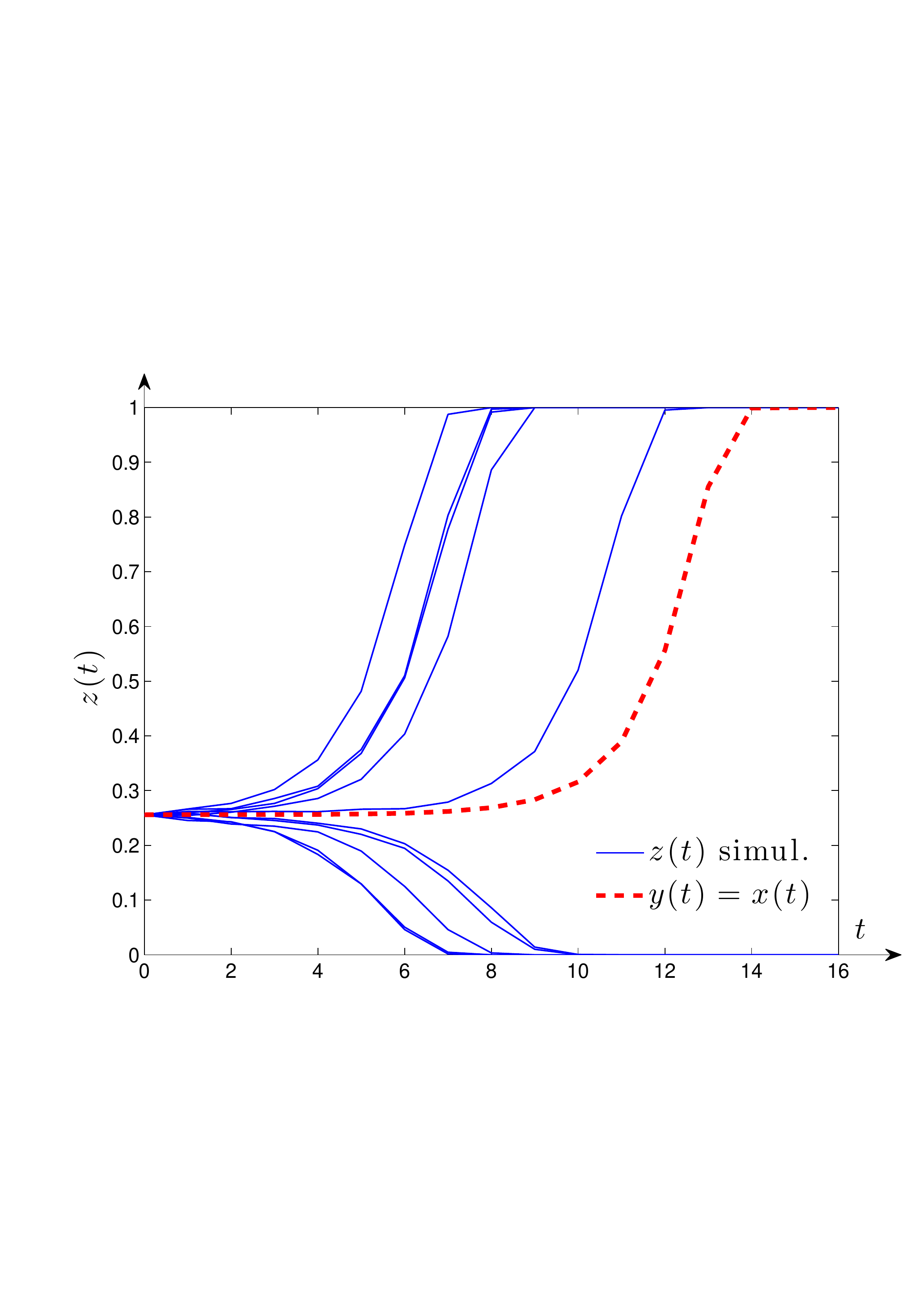}%
	%}
	\caption{\label{fig:ex-homo-rand-dyn} The dynamics of the fraction $z(t)$ of state-$1$ adopters in the LTM \eqref{LTM-def} (blue solid lines) compared with  the recursion \eqref{recursion-hom} (dashed red lines). The random networks used in the simulations have $n = 2000$ nodes, $k = 7$, $r = 3$. The initial condition are $\ups = 0.246$ and $\ups = 0.266$ (left plot) and $\ups = 0.256$ (right plot). Note that $\varphi_{7,3}(x)$ has  $x^* \approx 0.256$. 
	%Figure~\ref{fig:ex-homo-rand-dyn} contains the dynamic of three sets of simulations, divided in two plots.
%The dynamic $z(t)$ is represented with solid blue lines, while the dashed red line is the recursion $y(t)$, which coincides with $x(t)$.
%contains two plots with several realization of $z(t)$, the solid blue traces, compared with the recursion $y(t) = x(t)$, the dashed red traces.
%The left plot contains two distinct set of simulations, for $\ups = 0.246$ and for $\ups = 0.266$, respectively 1\% smaller and larger than $x^*$, and the corresponding recursions. %The values of $\upsilon$ have been chosen to be respectively 1\% smaller and larger than $x^*$. 
The simulations with $\ups = 0.246$ converge to zero, those for $ \ups = 0.266$ converge to one; in both case the recursion captures the behavior and the timing of the simulated dynamic. 
For $\ups = 0.256$, which  happens slightly larger than $x^*$, after a slow start the recursion converges to one while the simulations are evenly spread, half converge to one and half to zero, with different timing too. %A few simulations are not monotonous in the first time steps: note that in general $z(t)$ need not be monotonous and not even convergent. 
%(due to approximations) 
The simulations $z(t)$ become closer to the recursion $y(t)$ if the network sizes $n$ is increased, or (as in the left plot) if $\ups$ is chosen a bit away from $x^*$.
The limit $y^*(\xi)$ for various seed $\xi$ can also be compared with simulations on the same random network used above. If the seed $\xi$ is not very close to $x^*$, the simulation always match the predicted limit, like the right plot of Figure~\ref{fig:Phi-x}.	}
\end{figure}

\subsection{Heterogeneous networks: local analysis}\label{subsect:hetero1}
\begin{figure}
\begin{center}
\includegraphics[width=7cm]{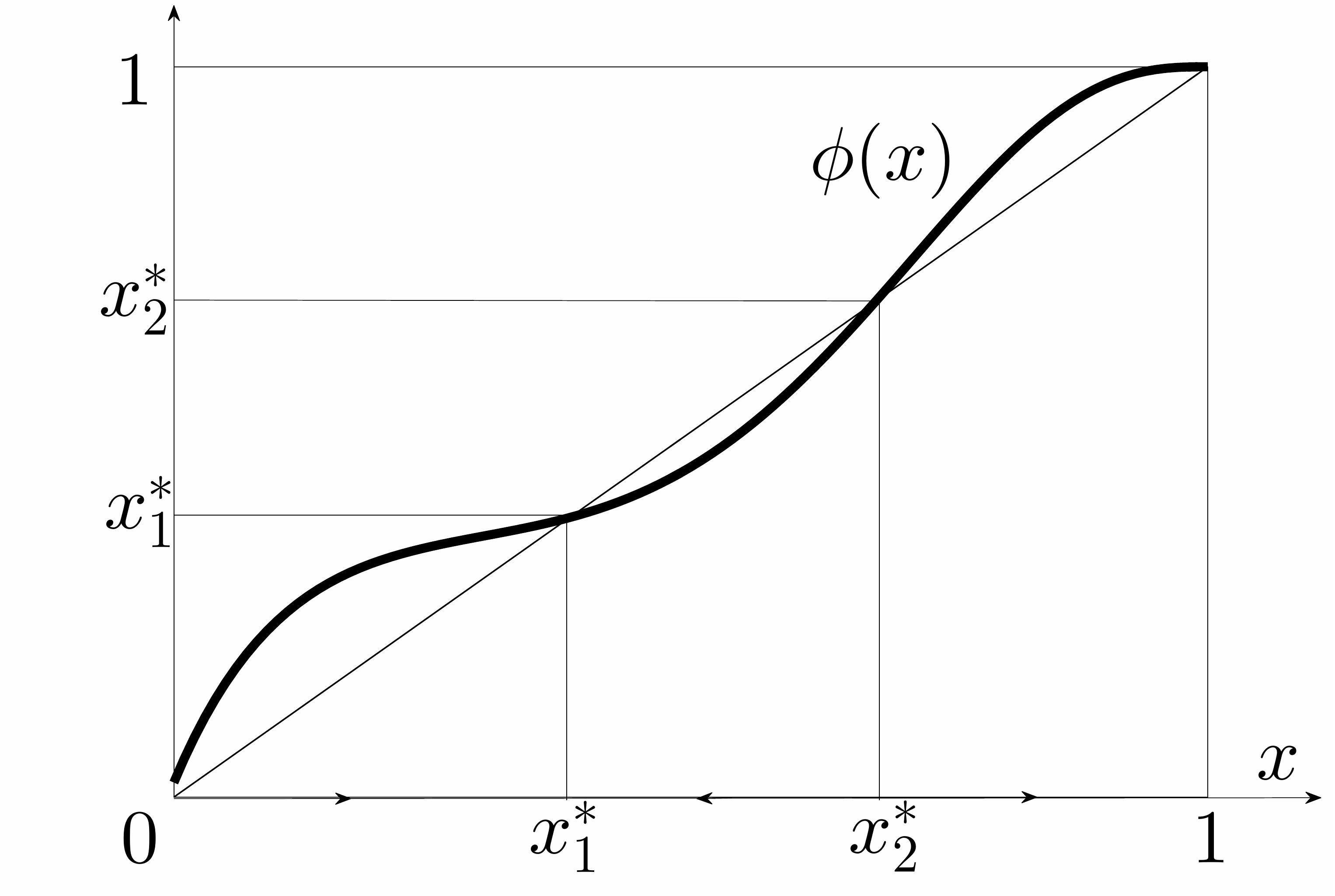}\hspace{1cm}
\includegraphics[width=7cm]{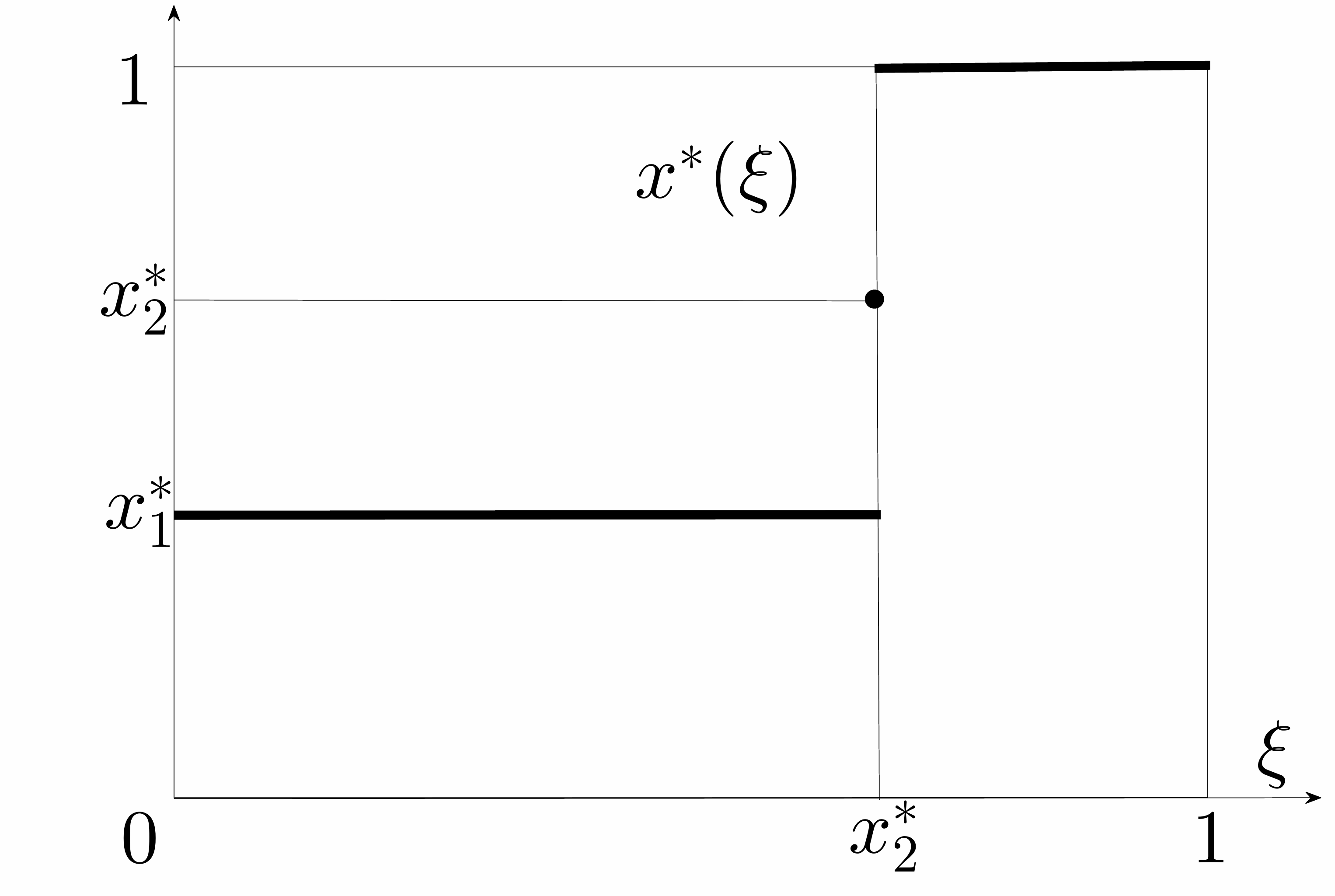}
\end{center}
\caption{\label{fig:phi-multi-conv}
On the left, the function $\phi(x)$ for a heterogeneous network with $q_{10,0}=0.02$ $q_{8,6}=0.64$ and $q_{10,1}=0.34$. The function has two inflection points and three fixed points: $x^*_1$, $x^*_2$, and $x_3^*=1$. Note that $\phi(0)=0.02>0$ and $\phi'(0)=3.4>1$, while $\phi'(1)=0$. On the right, the limit value $x^*(\xi)$ as a function of the initial seed $\xi$. }
\end{figure}

In heterogeneous networks, containing a mixture of agents with different out-degrees and thresholds, the functions $\phi(x)$ and $\psi(x)$ remain non-decreasing ---as they are polynomials with nonnegative coefficients--- while the shape of their graph can be more complex than the simple lazy-$S$ of the homogeneous case. In particular, their convexity may change several times in the interval $[0,1]$ (see, e.g., Figures~\ref{fig:phi-multi-conv} and \ref{fig:phi-xi}). 
%eps = 0.15
%K = 25 * [1,1,1]
%R = [4, 13, 21]
%P = [0.2, 0.4, 0.4];   P = P/sum(P)
%First observe that, while $\phi(x)=\psi(x)$ as in the homogeneous case whenever \be\label{indep}\sum_{s=0,1}p_{d,k,r,s}=p_{k,r}\sum_{k'\ge0}\sum_{0\le r'\le k'}\sum_{s=0,1}p_{d,k',r',s}\,,\qquad 0\le r\le k\,,\ee i.e., when the empirical statistics of the in-degrees across the population are independent from the statistics of the out-degrees and thresholds (in fact, in this case, one has that $q_{k,r}=p_{k,r}$), one has that $\phi(x)\ne\psi(x)$ for general networks that do not enjoy property \eqref{indep}.
Observe that $\phi(x)=\psi(x)$ as in the homogeneous case whenever \be\label{indep}\sum_{s=0,1}p_{d,k,r,s}=p_{k,r}\sum_{k'\ge0}\sum_{r'\ge0}\sum_{s=0,1}p_{d,k',r',s}\,,\qquad 0\le r\le k\,,\ee i.e., when the statistics of the in-degrees across the population are independent from the statistics of the out-degrees and thresholds (since in this case $q_{k,r}=p_{k,r}$).
%(in fact, in this case, one has that $q_{k,r}=p_{k,r}$),
Instead, one has that $\phi(x)\ne\psi(x)$ for general networks that do not enjoy property \eqref{indep}. 
In this latter case, the fraction of state-$1$ adopters in the LTM dynamics, estimated by the output $y(t)$ of the recursion \eqref{CM-recursion}, does not necessarily coincide with the fraction of links pointing towards state-$1$ adopters in the LTM dynamics, approximated by the state $x(t)$ of the recursion \eqref{CM-recursion}.
%In fact, in the latter case, the fraction of state-$1$ adopters in the LTM dynamics, that is estimated by the output $y(t)$ of the recursion \eqref{CM-recursion} does not necessarily coincide with the fraction of links pointing towards state-$1$ adopters in the LTM dynamics, that is approximated by the state $x(t)$ of the recursion \eqref{CM-recursion}.

In this subsection, we analyze the dynamical behavior of the recursion \eqref{CM-recursion} for values of the initial seed that are either close to $0$ or to $1$. 
To start with, notice that %$\varphi_{k,r}(1)=1$ for all $0\le r\le k$, so that 
point (iv) of Lemma~\ref{lemma:phi-properties} implies that 
\be\label{phi1}
\phi(1)=\psi(1)=1\,.
\ee
On the other hand, 
\be\label{phi0}
\phi(0)=\sum_{k\ge0}q_{k,0}\,,\qquad \psi(0)=\sum_{k\ge0}p_{k,0}
\ee
coincide with the fractions of links pointing towards agents, and, respectively, of agents, with threshold $0$. Analogously, it follows from point (ii) of Lemma \ref{lemma:phi-properties} that 
\be\label{phi'}\phi'(0)=\sum_{k\ge1}kq_{k,1}\,,\qquad \phi'(1)=\sum_{k\ge0}kq_{k,k}\,.\ee
%On the other hand, the
The rightmost identity in \eqref{phi'} and  \eqref{phi1} imply that the asymptotic behavior of the recursion \eqref{CM-recursion} for the standard LTM when the initial seed $\xi$ is close to $1$ is determined by the sign of   $\vartheta-1$ where  $$\vartheta:=\sum_{k\ge0}kq_{k,k}\,.$$ 
%The situation mirrors what happens in the right neighborhood of $0$ for the LTM without stubborn, as described above. 
Since $\phi(1) = 1$, if $\vartheta<1$ then $\phi(x)>x$ in a left  neighborhood of $1$, whereas if $\vartheta>1$ then $\phi(x)<x$ in a left  neighborhood of $1$.
In the first case, for a seed $\xi$ close enough to $1$, the fraction of state-$1$ adopters approaches  $y^*(\xi)=1$ as $t$ grows large, whereas in the second case it converges to some $y^*(\xi)<1$ even for values of the seed $\xi$ arbitrarily close to $1$ (while, clearly, the recursion stays put in $x(t)=y(t)=1$ if $\xi=\ups=1$).

On the other hand, the leftmost identity in \eqref{phi'} implies that the asymptotic behavior of the recursion \eqref{CM-recursion} when the initial seed $\xi$ is close to $0$ is determined by $\phi(0)$ and by the sign of $\gamma-1$, where 
\be\label{gamma}\gamma:=\sum_{k\ge0}kq_{k,1}\,.\ee 
This can be appreciated in two different settings. 
First, we focus on the standard LTM on networks containing no stubborn agents, i.e., where  $\sum_{k\ge0}p_{k,0}=\sum_{k\ge0}q_{k,0}=0$. Then, $\phi(0)=\psi(0)=0$ by \eqref{phi0} and the leftmost identity in \eqref{phi'} implies that, if  $\gamma<1$, then $\phi(x)<x$ in a right neighborhood of $0$, 
whereas, if $\gamma>1$, then $\phi(x)>x$ in a right neighborhood of $0$. 
In the first case, for small enough seed $\xi>0$, the fraction of state-$1$ adopters approaches  $y^*(\xi)=0$ as $t$ grows large, whereas in the second case it converges to some $y^*(\xi)>0$ even for arbitrarily small positive values of the seed $\xi$ (the recursion stays put in $x(t)=y(t)=0$ if $\xi=\ups=0$).

Alternatively, one can focus on the analysis of the PLTM on networks where the statistics of the initial states are independent from the ones of the degrees and thresholds. Specifically, consider networks with joint degree, threshold, and initial state distributions of the form 
$$p_{d,k,0,1}=\xi \sum_{0\le r\le k}\ov p_{d,k,r}\,,\qquad  p_{d,k,r,0}=(1-\xi)\ov p_{d,k,r}\,,\qquad d,k\ge0,\ 1\le r\le k\,,$$ 
where $\xi\in[0,1]$ stands for the fraction of initial state-$1$ adopters, and $p_{d,k,0,0}=p_{d,k,r,1}=0$ for all $k,d\ge0$ and $1\le r\le k$.  
Here, $\ov p_{d,k,r}$ stands for the fraction of agents with initial state $0$ that have in-degree $d$, out-degree $k$, and threshold $r$. 
Observe that, in this setting, condition \eqref{PLTMpcond} is satisfied, %and the initial condition \eqref{initialx} of the recursion \eqref{CM-recursion} is given by \be\label{initialeps}\xi=y(0)=\xi\,,\ee 
while the functions in \eqref{psi-def} and \eqref{phi-def} satisfy
$$\label{phipsieps}\phi_{\xi}(x)=\xi+(1-\xi)\phi_0(x)\,,\qquad \psi_{\xi}(x)=\xi+(1-\xi)\psi_0(x)\,,$$
i.e., they are obtained by rescaling the ones with seed $\xi=0$, i.e., where all agents have initial state $0$.  (See, e.g.,  Figure~\ref{fig:phi-xi}.)
\begin{figure}
	\centering
	%\fbox{%
	\includegraphics[trim={5mm} {\figtrimb} {\figtrimr} {\figtrimt},clip, width={\figwidth},
		keepaspectratio=true]{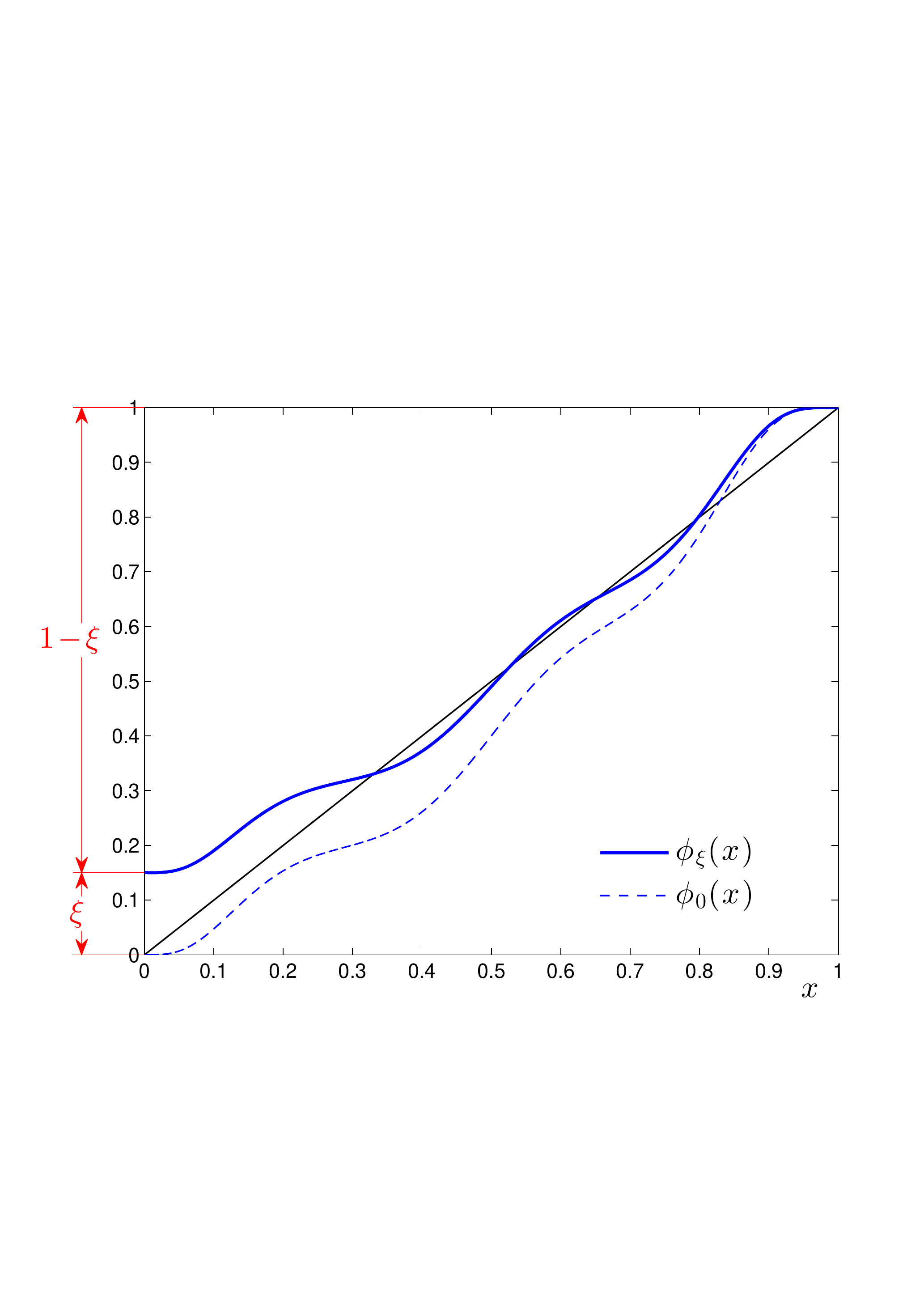}%
	%}
	\caption{\label{fig:phi-xi} The function drawn with solid line is $\phi_{\xi}(x) = \xi + (1-\xi) \phi_0(x)$ with $\xi = 0.15$, obtained from $\phi_0(x) = 0.2\,\varphi_{25,4}(x) + 0.4\,\varphi_{25,13}(x) + 0.4\,\varphi_{25,21}(x) $ (dashed line). }
\end{figure}
%eps = 0.15
%K = 25 * [1,1,1]
%R = [4, 13, 21]
%P = [0.2, 0.4, 0.4];   P = P/sum(P)
%
%
%Assuming once again for simplicity that $p_{d,k,0}$ for all $k,d\ge0$, so that . 
In fact, we have that %$\phi_0(0)=\psi_0(0)=0$, so that 
\be\label{phi0eps}\phi_{\xi}(0)=\psi_{\xi}(0)=\xi\,,\qquad \phi'_{\xi}(0)=(1-\xi)\gamma\,,\ee
where $\gamma$
%$=\frac1{\ov d}\sum_{k\ge0}\sum_{d\ge0}dkp_{k,d}$ %
is as in \eqref{gamma}. %Now, for $\xi\in[0,1]$ let $y_{\xi}^*=\lim_{t\to\infty}y(t)$ where $y(t)$ is the output of the recursion \eqref{CM-recursion} with update functions as in \eqref{phipsieps} and initial condition as in \eqref{initialeps}. 
It then follows from  \eqref{phi0eps} that 
%\be\label{epsto0} \left.\ba{rclcrcl}\ds\lim_{\xi\to0}y_{\xi}^*=0&\se&\gamma<1&\qquad&\ds\lim_{\xi\to0}y_{\xi}^*>0&\se&\gamma>1\,.\ea\right.\ee
\be\label{epsto0} \left.\ba{rclcrcl}\ds\lim_{\xi\to0}y^*(\xi)=0&\se&\gamma<1&\qquad&\ds\lim_{\xi\to0}y^*(\xi)>0&\se&\gamma>1\,.\ea\right.\ee
It is worth pointing out that equation \eqref{epsto0} is consistent with the result stated as Theorem 11 in \cite{mL:2012}. In fact, while reference \cite{mL:2012} deals with the PLTM on random undirected graphs with given degree distribution, our results can be extended to  the configuration model ensemble of undirected graphs, as opposed to directed ones, as illustrated in Section \ref{sect:undirectedCM}. The main difference between our approach and the one in \cite{mL:2012} is then that we deal with approximations of the (P)LTM dynamics for large-scale networks and equation \eqref{epsto0} concerns the asymptotic behavior of this approximation as the time $t$ grows large, whereas the results in \cite{mL:2012} deal with the large-scale limit of the asymptotic behavior (as $t$ grows large) of the PLTM dynamics, thus considering the double limit ---in time $t$ and network size $n$--- in the opposite order as we do in this paper.

\subsection{Heterogeneous networks: global analysis} \label{subsect:hetero2}
%THE PART BELOW NEEDS TO BE CHANGED: E.G., IT COULD BE THAT $0$ IS NOT A FIXED POINT AND THERE ARE AN EVEN NUMBER OF FIXED POINTS IN $[0,1)$. 
%The equation $\phi(x)=x$ will have an odd finite number of solutions $\{0<x^*_1<,\dots ,<x^*_m<1\}$ if $q_{k,r}>0$ for some $2\le r<k$. $0$ and $1$ will always be stable equilibria, as well each $x^*_{2j}$, while each $x^*_{2j+1}$ will be unstable. Therefore, in this case,
%$$\lim\limits_{t\to +\infty}x(t)=\left\{\begin{array}{ll}0\quad &{\rm if}\, \xi<x^*_1\\ 
%x^*_{2j}\quad &{\rm if}\, x^*_{2j-1}<\xi<x^*_{2j+1}\\ 
%1\quad &{\rm if}\, \xi>x^*_m\end{array}\right.$$
As mentioned in the previous subsection, the function $\phi(x)$ may have a complex shape for heterogeneous networks and in general it is hard to predict analytically, in terms of the network statistics $\mb p$, the number and value of the fixed points $x^*=\phi(x^*)$ that ---as stated in Lemma \ref{lemma:simple}--- determine the asymptotic behavior of the recursion \eqref{CM-recursion}  as a function of  the initial seed $\xi$.
We present below two special cases when such analytical conditions on the network statistics $\mb p$ can be found explicitely.  

\begin{example}
Let $h>0$ be an integer value, and assume that $q_{k,r}=p_{k,r}=0$ for all  pairs $(k,r)$ except for a subset of those such that $k=jh+1$ and $r=j+1$ for some  $j>0$. Since, by Lemma \ref{lemma:phi-properties}(vi), the functions $\varphi_{jh+1,j+1}(x)$, for $j>0$, all take value $0$ for $x=0$ and $1$ for $x=1$, have a unique inflection point in $\tilde x=1/h$, are convex in $[0,\tilde x]$ and concave in $[\tilde x,1]$, the same does the function 
$$\phi(x)=\sum_{k\ge0}\sum_{r\ge0}q_{k,r}\varphi_{k,r}(x)=\sum_{j>0}q_{jh+1,j+1}\varphi_{jh+1,j+1}(x)\,.$$
Hence, in this very special heterogenous case, the qualitative asymptotic behavior of the recursion \eqref{CM-recursion} is provably the same as in the homogeneous case, as discussed in Section \ref{subsect:homogeneous}: there exists a unique fixed point $x^*=\phi(x^*)$ in $(0,1)$ such that 
$$
%\lim\limits_{t\to +\infty}x(t)
x^*(\xi)=\left\{\begin{array}{ll}0\quad &{\rm if}\, \xi<x^*\\ 
x^*\quad &{\rm if}\, \xi=x^*\\ 
1\quad &{\rm if}\, \xi>x^*\,,\end{array}\right.\qquad y^*(\xi)=\left\{\begin{array}{ll}0\quad &{\rm if}\, \xi<x^*\\ 
\psi(x^*)\quad &{\rm if}\, \xi=x^*\\ 
1\quad &{\rm if}\, \xi>x^*\,.\end{array}\right.$$
\end{example}

\begin{example}\label{example:multi-threshold}  
For given $0<\epsilon <1/4$ and $\tau$ such that $2\epsilon<\tau<1-2\epsilon$, consider a network comprising two types of agents,  $h=1,2$, each with out-degree $k_h$ and threshold $r_h$, respectively. 
% Assume that $r_1<\eps k_1$, $r_2-1>(1-\eps)(1-k_2)$ 
Assume that $1< r_1<\eps k_1$, that $(1-\eps)(k_2-1) + 1 < r_2 < k_2$ and that the fraction of links pointing towards agents of type $1$ is $q_{r_1,k_1}=\tau=1-q_{r_2,k_2}$. 
%$$\frac{r_1}{k_1}<\epsilon,\quad \frac{r_2-1}{k_2}>1-\epsilon,\quad q_{r_1,k_1}=\tau,\quad q_{r_2,k_2}=1-\tau\,.$$
Notice that, because of (\ref{median}), $\phi(x)=\tau \varphi_{k_1,r_1}(x)+(1-\tau)\varphi_{k_2,r_2}(x)$ satisfies
$$\phi(r_1/k_1)\geq \frac{1}{2}\tau>\epsilon>\frac{r_1}{k_1}$$
while
$$\phi((r_2-1)/k_2)\leq \tau +\frac{1}{2}(1-\tau)=\frac{1}{2}(1+\tau)<1-\epsilon<\frac{r_2-1}{k_2}\,.$$
Since $\phi(0)=0$, $\phi(1)=1$, and $\phi'(0)=\phi'(1)=0$, this implies that there must be at least five fixed points $x_j^*=\phi(x_j^*)$, $j=0,\ldots,4$, such that 
$$0=x^*_0<x^*_1<\frac{r_1}{k_1}<x^*_2<\frac{r_2-1}{k_2}<x^*_3<x^*_4=1$$
and 
$$
%\lim\limits_{t\to +\infty}x(t)
x^*(\xi)=\left\{\begin{array}{ll}
0\quad &{\rm if}\, \xi<x^*_1\\ 
x^*_1\quad &{\rm if}\, \xi=x^*_1\\ 
x^*_2\quad &{\rm if}\, \overline x^*_1<\xi<\underline x^*_3\\
x^*_3\quad &{\rm if}\, \xi=x^*_3\\ 
1\quad &{\rm if}\, \xi>x^*_3
\end{array}\right.\qquad 
y^*(\xi)=\left\{\begin{array}{ll}
0\quad &{\rm if}\, \xi<x^*_1\\ 
\psi(x^*_1)\quad &{\rm if}\, \xi=x^*_1\\ 
\psi(x^*_2)\quad &{\rm if}\, \overline x^*_1<\xi<\underline x^*_3\\
\psi(x^*_3)\quad &{\rm if}\, \xi=x^*_3\\ 
1\quad &{\rm if}\, \xi>x^*_3\,,
\end{array}\right.$$
where $\ov x_1^*=\phi(\ov x_1^*)\in[x_1^*,x_2^*)$ and $\ul x_3^*=\phi(\ul x_3^*)\in(x_2^*,x_3^*]$ are possibly additional fixed points (the largest below and, respectively, the lowest above $x_2^*$). 
%two unstable equilibria in the two intervals $(0,r_1/k_1)$ and $((r_2-1)/k_2, 1)$ and thus (at least) one stable equilibrium in the between. 
This instantiates a multiple threshold phenomenon that is a specific feature of heterogeneous networks, as it cannot occur in homogeneous ones.

The following simulation illustrates the multiple threshold phenomenon just described. % in the Example~\ref{example:multi-threshold}
We consider a random network with $n = 2000$ agents, 45\% of whom has out-degree $k_1 = 14$ and threshold $r_1 = 3$, the remaining 55\% has $k_2 = 11$ and $r_2 = 9$. The initial state of a fraction $\ups$ of the agents is one. The agents have in-degree chosen in $\{11, 14\}$ independently from the out-degree, threshold and initial condition. 
Therefore, the random network satisfies the assumptions of Example~\ref{example:multi-threshold} with $\eps = 0.225$ and $\tau = 0.45$; moreover $\psi(x) = \phi(x)$ and $\xi = \upsilon$.
%
%The function $\psi(x)$ 
%$\psi(x) = \tau \varphi_{k_1,r_1}(x) + (1-\tau) \varphi_{k_2,r_2}(x)$ 
The left plot of Figure~\ref{fig:ex-hetero-glob-rand-dyn} represents the function  %$\psi(x)$ 
$\psi(x) = 0.450 \, \varphi_{14,3}(x) + 0.550\, \varphi_{11,9}(x)$, 
which has exactly five fixed points: $x^*_0 = 0$, $x^*_1 \approx 0.140$, $x^*_2 \approx 0.451$, $x^*_3 \approx 0.813 $ and  $x^*_4 = 1$. 
%Since the in-degree are assigned independently, the function $\phi(x)$ coincides with $\psi(x)$.
%
The right plot of Figure~\ref{fig:ex-hetero-glob-rand-dyn} contains (in solid red) the predicted limit of the recursion, $y^*(\xi)$, that is a ``staircase'' function with two discontinuities. 
The blue crosses represent thew simulations, namely the fraction $z(T)$ of state-$1$ adopters in the random networks at $T = 200$, for various initial conditions. %The simulations match the predicted limit very well. 

\begin{figure}
	\centering
	\includegraphics[trim={\figtrimla} {\figtrimba} {\figtrimra} {\figtrimta},clip, width={\figwidthduo},
		keepaspectratio=true]{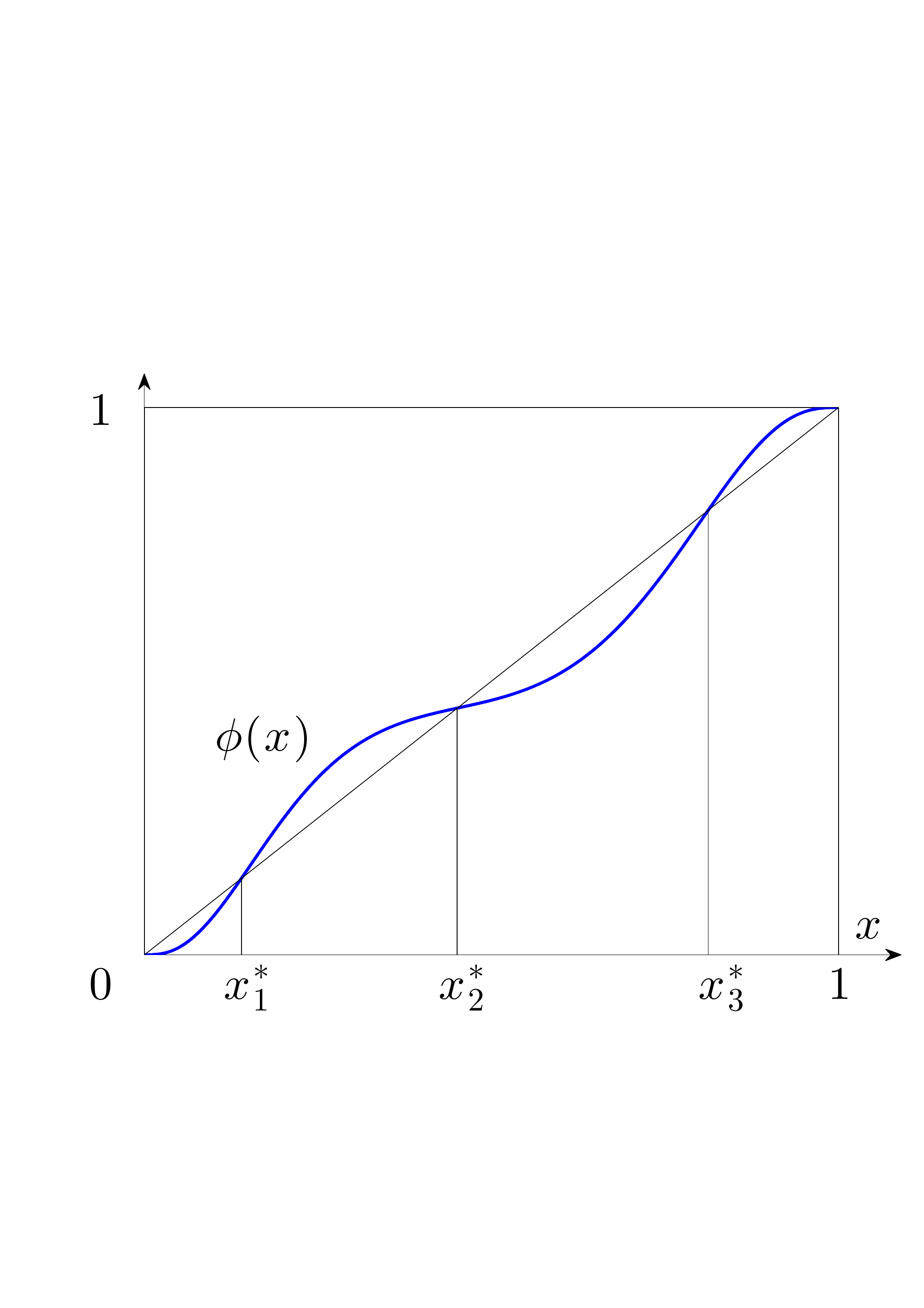}%
	\hspace{1cm}%
	\includegraphics[trim={\figtrimla} {\figtrimba} {\figtrimra} {\figtrimta},clip, width={\figwidthduo},
		keepaspectratio=true]{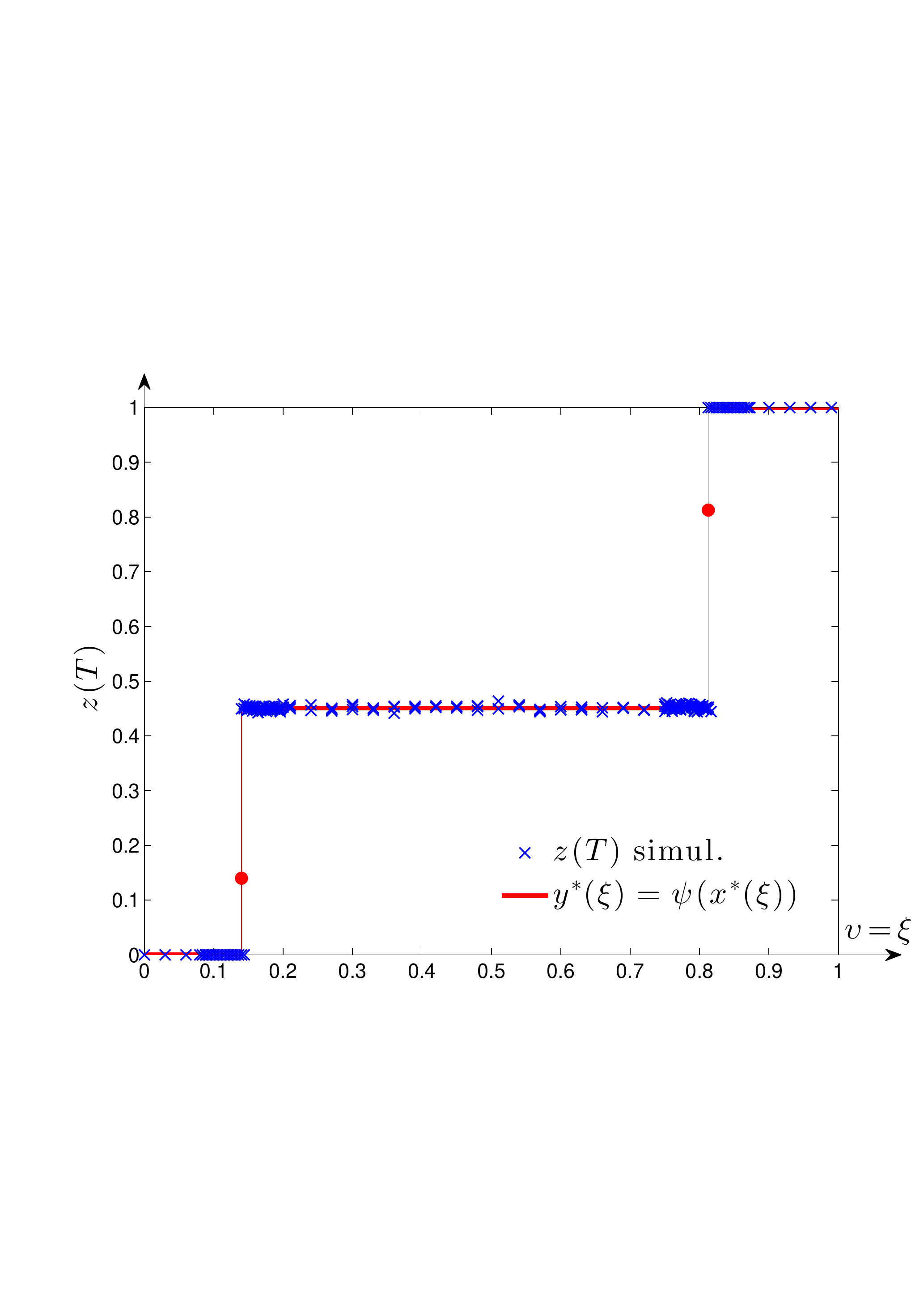}%
	\caption{\label{fig:ex-hetero-glob-rand-dyn} 
	Left plot: the function $\psi(x) = 0.450 \, \varphi_{14,3}(x) + 0.550\, \varphi_{11,9}(x)$ with its fixed points.  
	Right plot: the ``staircase'' function (solid red) is the limit $y^*(\eps)$ of the recursion. 
	The blue crosses represent the values $z(T)$ at time $T = 200$ of the simulations on the random network with $n=2000$ agents and  different seed values $\xi$. 	}
\end{figure}
\end{example}

%%%%%%%%%%

While the investigation of the exact number and positions of the various fixed points of $\phi(x)$ for general heterogeneous network is analytically unfeasible, fundamental insight can be obtained by taking a large-degree limit as follows. Let $F(\theta)$ be a normalized threshold cumulative distribution function, i.e., $F(\theta)$ is non-decresing, right-continuous, with $F(\theta)=0$ for $\theta<0$ and $F(\theta)=1$ for $\theta\ge1$. Assume that the network statistics $\mb p$ satisfy 
\be\label{Frk}p_{k,r}=p_k(F(r/k)-F((r-1)/k))\,,\qquad q_{k,r}=q_k(F(r/k)-F((r-1)/k))\,,\qquad 0\le r\le k\,.\ee
where $p_k$ and $q_k$ stand for the fractions of agents and, respectively, links pointing towards agents of degree $k$, and the minimum out-degree $k_{\min}\ge2$ is such that $p_k=q_k=0$ for all $0\le k\le k_{\min}$. 
In this case, the function $\phi(x)$ and $\psi(x)$ take the form 
$$
\ba{rcl}
\ds\phi(x)&=&\ds\sum_{k\ge 0}\sum_{0\le r\le k}q_{k,r}\varphi_{k,r}(x)\\[10pt]
&=&
\ds\sum_{k\ge k_{\min}}q_k\sum_{0\le r\le k}\big(F(r/k)-F((r-1)/k) \big)\sum_{r\le j\le k}\binom kjx^j(1-x)^{k-j}\\[10pt]
&=&
\ds\sum_{k\ge k_{\min}}q_k\sum_{0\le j\le k}\sum_{0\le r\le j}\big(F(r/k)-F((r-1)/k)\big)\binom kjx^j(1-x)^{k-j}\\[10pt]
&=&
\ds\sum_{k\ge k_{\min}}q_k\sum_{0\le j\le k}F(j/k)\binom kjx^j(1-x)^{k-j}\,,
\ea
$$
$$\psi(x)=\sum_{k\ge 0}\sum_{0\le r\le k}p_{k,r}\varphi_{k,r}(x)=\sum_{k\ge k_{\min}}p_k\sum_{0\le j\le k}F(j/k)\binom kjx^j(1-x)^{k-j}.$$

Then, if a sequence of network statistics with increasing minimum out-degree $k_{\min}$ is considered satisfying \eqref{Frk} with the same normalized threshold cumulative distribution function $F(\theta)$, then 
\be\label{phi->F}
\lim_{k_{\min}\to+\infty}\phi(x)=F(x)\,,\qquad\lim_{k_{\min}\to+\infty}\psi(x)=F(x)\,.
\ee
The result above establishes that, as the minimum degree grows large, the recursion \eqref{CM-recursion} reduces to the Granovetter one \eqref{LTM-recursion-fullymixed}. In fact, by applying Lemma \ref{lemma:simple} with $\phi(x)=\psi(x)=F(x)$ one gets that the (approximate) fraction of state-$1$ adopters $y(t)$ converges to the largest (respectively, lowest) fixed point $x^*(\xi)=F(x^*(\xi))$  %$x_{\xi}^*=F(x_{\xi}^*)$ 
that is not higher (not lower) than the initial seed $\xi$. 
%For arbitrarily small 
That together with \eqref{phi->F} highlights a selected activation phenomenon for networks satisfying \eqref{Frk}: for large enough $k_{\min}$ the eventual state-$1$ adopters tend to be those agents $i$ whose normalized threshold $\theta_i=\rho_i/\kappa_i$ is below the fixed point $x^*(\xi)=F(x^*(\xi))$. %$x_{\xi}^*=F(x_{\xi}^*)$. 

\section{Approximation results for the configuration model ensemble} 
\label{sec:density-evolution}
In this section, we show that the output $y(t)$ of the recursion \eqref{CM-recursion} introduced in Section \ref{sec:LTM-BR} does in fact provide an accurate approximation of the fraction of state-$1$ adopters in the LTM dynamics \eqref{LTM-def} on most of the directed networks $\mc N$ with the same statistics $\mb p$. 
Specifically, we introduce the so-called \emph{configuration model} ensemble $\mc C_{n,\mb p}$ of all directed networks of given size $n$ and statistics $\mb p$ and prove that the fraction of state-$1$ adopters 
$$z(t)=\frac1n\sum_{i\in\mc V}Z_i(t)$$ 
after a finite number of iterations of the LTM dynamics \eqref{LTM-def} is arbitrarily close to the output $y(t)$ of the recursion \eqref{CM-recursion} on all but a fraction of networks in $\mc C_{n,\mb p}$ that vanishes as $n$ grows large. 
%corresponding value for the LTM on the branching process.
%Specifically, we will introduce the so-called \emph{configuration model} for the ensemble of all directed networks of a given size and distribution of degrees, thresholds, and initial states, and will prove that the fraction of state-$1$ adopters $z(t)=\frac1n\sum_{i\in\mc V}Z_i(t)$ after a finite number of iterations of the LTM dynamics \eqref{LTM-def} on all but a small fraction of networks in this ensemble (such fraction vanishing as the network size $n$ grows large) is arbitrarily close to the output $y(t)$ of the recursion \eqref{CM-recursion}.
%%corresponding value for the LTM on the branching process.

Our result is proved in three main steps. 
First, we introduce a different random graph model with rooted tree structure, the \emph{two-stage branching process} $\mc T_{\mb p}$, and show that the output $y(t)$ of the recursion \eqref{CM-recursion} gives the \emph{exact} expression of the \emph{expected value} of the root node's state in the LTM dynamics \eqref{LTM-def} on $\mc T_{\mb p}$. 
Second, we consider the configuration model $\mc C_{n,\mb p}$ and prove that, after $t$ iterations of the LTM dynamics \eqref{LTM-def} on the configuration model ensemble, the \emph{average} fraction $\ov z(t)$ of state-$1$ adopters is arbitrarily close to $y(t)$, i.e., the expected value of the root node's state on $\mc T_{\mb p}$.
%Second, we prove that the \emph{average} fraction $\ov z(t)$ of state-$1$ adopters, after $t$ iterations of the LTM dynamics on the configuration model ensemble of networks with given distribution of degrees, thresholds, and initial states is arbitrarily close to the expected value of the root node's state in LTM dynamics on the two-stage branching process, hence to $y(t)$.
%   %Here the average is taken on the configuration model ensemble. 
%one on the corresponding two-stage branching process (here expectation is meant with respect to the network randomness, both for the configuration model and for the branching process). 
%Finally, a concentration result is obtained, showing that the fraction $z(t)$ of state-$1$ adopters after $t$ iterations of the LTM on most of the networks of the configuration model ensemble is arbitrarily close to its average $\ov z(t)$, hence to the output $y(t)$ of the recursion \eqref{CM-recursion}. 
Finally, a concentration result is obtained, showing that on most of the networks in $\mc C_{n,\mb p}$, the fraction $z(t)$ of state-$1$ adopters after $t$ iterations of the LTM dynamics is arbitrarily close to its average $\ov z(t)$, hence to the output $y(t)$ of the recursion \eqref{CM-recursion}.

\subsection{The LTM on the two-stage branching process}
\label{subsect:LTM-BP}
\begin{figure}
\begin{center}
\includegraphics[width=7cm]{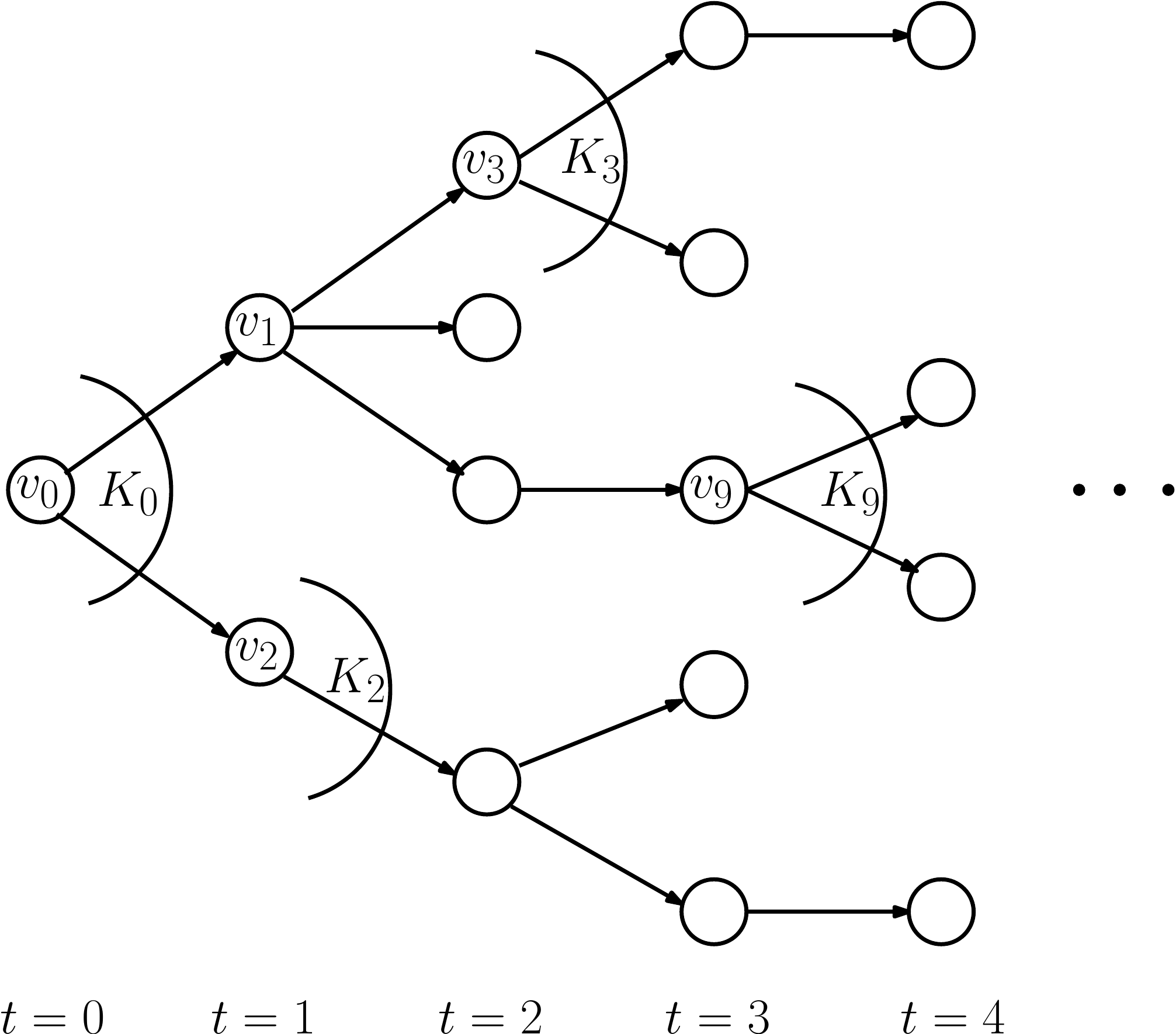}
\end{center}
\caption{\label{fig:BP} A directed two-stage branching process $\mc T$ with root node $v_0$. The triples $(K_h,R_h,S_h)$, for $h\ge0$, of the agents' outdegrees, thresholds, and initial states are mutually independent and have distribution $\P(K_0=k,R_0=r,S_0=s)=p_{k,r,s}$ and $\P(K_h=k,R_h=r,S_h=s)=q_{k,r,s}$ for $h\ge1$, $0\le r\le k$, and $s=0,1$. The state $X_{v_0}(t)$ of the root node at time $t\ge0$ is a deterministic function of the initial states $S_j$ of the agents $j$ in generation $t$.} 
\end{figure}

In this subsection we first introduce a random graph model with rooted directed tree structure, to be referred to as the two-stage branching process $\mc T_{\mb p}$. 
Then, we provide a complete theoretical analysis of the LTM dynamics on $\mc T_{\mb p}$ that will be the basis for then considering, in the next subsection, the configuration model ensemble $\mc C_{n,\mb p}$ which exhibits a local tree-like structure.

Let $\mb p$ be the network statistics with average degree $\ov d=\sum_{d,k,r,s}dp_{d,k,r,s}=\sum_{d,k,r,s}kp_{d,k,r,s}$ and
$$p_{k,r,s}=\sum_{d\ge0}p_{d,k,r,s}\,,\qquad q_{k,r,s}=\frac1{\ov d}\sum_{d\ge0}dp_{d,k,r,s}\,,\qquad 0\le r\le k,\ s=0,1\,,$$ 
be the fractions of agents and, respectively, of links pointing to agents, of out-degree $k$, threshold $r$, and initial state $s$. 
In order to define the associated two-stage branching process $\mc T_{\mb p}$, we start from a root node $v_0$ and randomly generate a directed tree graph according to the following rule (compare Figure~\ref{fig:BP}). First, we assign to the root node $v_0$ a random out-degree $\kappa_{v_0}=K_0$, threshold $\rho_{v_0}=R_0$ and initial state $\sigma_{v_0}=S_0$ such that the triple $(K_0,R_0,S_0)$ has joint probability distribution $\P(K_0=k,R_0=r,S_0=s)=p_{k,r,s}$ for $0\le r\le k$ and $s=0,1$. Then, we connect the root node $v_0$ with $K_0$ directed links pointing to new nodes $v_1,\ldots,v_{K_0}$, and assign to each such generation-$1$ node $v_h$, $1\le h\le K_0$, out-degree $\kappa_{v_h}=K_h$, threshold $\rho_{v_h}=R_{h}$, and initial state $\sigma_{v_h}=S_{h}$ such that the triples $(K_h,R_h,S_h)$ are mutually independent, independent from $(K_0,R_0,S_0)$, and identically distributed with $\P(K_h=k,R_h=r,S_h=s)=q_{k,r,s}$ for $0\le r\le k$ and $s=0,1$.
We then connect each of the generation-$1$ nodes $v_h$ with $K_h$ directed links pointing to distinct new nodes, and assign to such generation-$2$ nodes $v_{J_1+1},\ldots,v_{J_2}$, where $J_1=K_0$ and $J_2=\sum_{0\le j\le J_1}K_j$, out-degree $\kappa_{v_h}=K_h$, threshold $\rho_{v_h}=R_{h}$, and initial state $\sigma_{v_h}=S_{h}$ such that the triples $(K_h,R_h,S_h)$, for $J_1+1\le h\le J_2$, are mutually independent, independent from $(K_0,R_0,S_0),\ldots,(K_{J_1},R_{J_1},S_{J_1})$, and identically distributed with $\P(K_h=k,R_h=r,S_h=s)=q_{k,r,s}$ for $k\ge0$, $0\le r\le k$, and $s=0,1$. We then keep on repeating the same procedure over and over, thus generating, in a breadth-first manner, a possibly infinite random tree network $\mc T_{\mb p}$ with node set $\mc V=\{v_0,v_1,\ldots\}$, thresholds $\rho_{v_0},\rho_{v_1},\ldots$, and initial states $\sigma_{v_0},\sigma_{v_1},\ldots$. For $t\ge0$, we let $\mc T_{\mb p,t}$ be the finite random tree network obtained by truncating $\mc T_{\mb p}$ at the $t$-th generation. Observe that the specific realization of the two-stage branching process is uniquely determined by the  sequence of mutually independent triples $(K_0,R_0,S_0),(K_1,R_1,S_1),(K_2,R_2,S_2)\ldots\,,$ which are distributed according to $\P(K_0=k,R_0=r,S_0=s)=p_{k,r,s}$ and $\P(K_h=k,R_h=r,S_h=s)=q_{k,r,s}$ for $h\ge1$.

The following result shows that the state $x(t)$ and output $y(t)$ of the recursion \eqref{CM-recursion} coincide with the exact expected states of the LTM dynamics on $\mc T_{\mb p}$. Observe that the LTM dynamics \eqref{LTM-def} is a deterministic process, hence the only randomness concernes the generation of $\mc T_{\mb p}$. 

\begin{proposition}\label{proposition recursive}
Let $\mb p$ be the network statistics and $\mc T_{\mb p}=(\mc V,\mc E,\rho,\sigma)$ be the associated two-stage branching process with node set $\mc V=\{v_0,v_1,\ldots\}$, where $v_0$ is the root node. Let $Z(t)$, for $t\ge0$, be the state vector of the LTM dynamics on $\mc T_{\mb p}$, and let $x(t)$ and $y(t)$ be respectively the state and output of the recursion \eqref{CM-recursion}. Then, for every fixed time $t\geq 0$, the following holds:
\begin{enumerate} 
\item[(i)] For every  $i\in\mc V$, the states $\{Z_{j}(t)\}_{j:\,(i,j)\in\mc E}$ of the offsprings $v_j$ of $v_i$ in $\mc T_{\mb p}$ are independent and identically distributed Bernoulli random variables with expected value $x(t)$;
\item[(ii)] The state $Z_{v_0}(t)$ of the root node $v_0$ is a Bernoulli random variable with expected value $y(t)$.
\end{enumerate}
\end{proposition}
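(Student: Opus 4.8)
The plan is to prove both claims simultaneously by induction on $t$, taking statement (i)---which I abbreviate $P(t)$---as the primary inductive invariant and deriving (ii) as an immediate by-product at each stage. The engine of the argument is the self-similarity of $\mc T_{\mb p}$: the subtree hanging from any offspring of a node is itself a copy of the branching process driven by the distribution $q_{k,r,s}$, and the subtrees rooted at distinct offspring are generated from disjoint, mutually independent collections of triples $(K_h,R_h,S_h)$.

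For the base case $t=0$, recall that $Z_j(0)=\sigma_j=S_j$. The offspring of any node are generation-$(\ge1)$ nodes, whose initial states are i.i.d.\ with $\P(S_j=1)=\sum_{k,r}q_{k,r,1}=\xi=x(0)$ by \eqref{initialx} and the definition of the $q$-distribution; this is exactly $P(0)$. The root's initial state $Z_{v_0}(0)=S_0$ is Bernoulli with mean $\sum_{k,r}p_{k,r,1}=\ups=y(0)$, which is (ii) at $t=0$.

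For the inductive step, assume $P(t)$. Fix a node $i$ and one of its offspring $v_j$. The LTM update \eqref{LTM-def} gives $Z_j(t+1)=\1[\sum_{l:(j,l)\in\mc E}Z_l(t)\ge R_j]$, i.e.\ $v_j$ adopts state $1$ iff at least $R_j$ of its own offspring are in state $1$ at time $t$. Applying $P(t)$ to the node $v_j$, the states of its offspring are i.i.d.\ Bernoulli$(x(t))$, so conditioning on $(K_j,R_j)=(k,r)$ turns $\sum_l Z_l(t)$ into a Binomial$(k,x(t))$ variable and yields $\P(Z_j(t+1)=1\mid K_j=k,R_j=r)=\varphi_{k,r}(x(t))$ by \eqref{varphi-def}. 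Since $v_j$ is a generation-$(\ge1)$ node, $(K_j,R_j)$ has law $q_{k,r}$, and averaging gives $\P(Z_j(t+1)=1)=\sum_{k,r}q_{k,r}\varphi_{k,r}(x(t))=\phi(x(t))=x(t+1)$ by \eqref{phi-def} and \eqref{CM-recursion}; identical distribution across the offspring of $i$ follows because all the subtrees are i.i.d. Claim (ii) at time $t+1$ is obtained by the same computation applied to the root: here $P(t)$ for $v_0$ makes its offspring i.i.d.\ Bernoulli$(x(t))$, and because $(K_0,R_0)$ has the root law $p_{k,r}$ the average becomes $\sum_{k,r}p_{k,r}\varphi_{k,r}(x(t))=\psi(x(t))=y(t+1)$.

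The step that demands the most care---and which is exactly what fails on a graph with cycles, making the recursion only approximate there---is the mutual independence of $\{Z_j(t+1)\}_{j}$ needed to close $P(t+1)$. The resolution is structural: as noted in the caption of Figure~\ref{fig:BP}, $Z_j(t+1)$ is a deterministic function of the thresholds and the generation-$(t+1)$ initial states lying in the subtree rooted at $v_j$, hence measurable with respect to the $\sigma$-algebra generated by the triples of the descendants of $v_j$. Distinct offspring of $i$ have vertex-disjoint subtrees built from disjoint families of the independent triples $(K_h,R_h,S_h)$, so these $\sigma$-algebras are independent and the states $Z_j(t+1)$ are mutually independent. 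This establishes $P(t+1)$ and completes the induction.
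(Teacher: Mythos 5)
Your proof is correct and follows essentially the same route as the paper's: establish that the offspring states are i.i.d.\ Bernoulli via the disjointness of the descendant subtrees and the self-similarity of the branching process, then compute the one-step update by conditioning on $(K_j,R_j)$ to get a Binomial tail probability $\varphi_{k,r}(x(t))$, averaged against $q_{k,r}$ for non-root nodes and $p_{k,r}$ for the root. The only cosmetic difference is that you package the argument as an explicit induction with (i) as the invariant, whereas the paper states the independence and identical-distribution facts once for all $t$ and then runs the recursion $\zeta(t+1)=\phi(\zeta(t))$; the mathematical content is the same.
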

\begin{proof}
%(i) First notice that the state $Z_{i}(t)$ of any node $i\in\mc V$ is a deterministic function only of the initial states of the $t$-th generation descendants of node $i$ in $\mc T$. 
% (i) First notice that the state $Z_{i}(t)$ of any node $i\in\mc V$ is a deterministic function of the threshold of the descendants of node $i$ up to generation $t-1$ and of the initial states of the $t$-th generation descendants of node $i$ in $\mc T$. 
(i) First notice that the state $Z_{i}(t)$ of any node $i\in\mc V$ is a deterministic function of the threshold and of the initial states of the descendants of node $i$ in $\mc T_{\mb p}$ up to generation $t$.
It follows that, given any two non-root nodes $j,l\in\mc V\setminus\{v_0\}$, $Z_{j}(t)$ and $Z_{l}(t)$ are Bernoulli random variables with identical distribution, since the two subnetworks of their descendants are branching processes with the same statistics. 
Moreover, for every node $i\in\mc V$, let $\mc N_i$ be the set of its out-neighbors in  $\mc T_{\mb p}$ and observe that the variables $Z_j(t)$, for $j\in\mc  N_i$, are mutually independent since each pair of the subnetworks of their descendants have empty intersection.
Let $\zeta(t)=\E[Z_{j}(t)]$, $j\in\mc V\setminus\{v_0\}$, be the expected value of all these r.v.'s. 
%Moreover, for every node $i\in\mc V$, let $\mc N_i$ be the set of its out-neighbors in  $\mc T_{\mb p}$ and observe that the variables $Z_j(t)$, for $j\in\mc  N_i$, are mutually independent since each pair of the subnetworks of their descendants have empty intersection.
Fix now any $i\in\mc V$ and $j\in\mc N_{i}$. From (\ref{LTM-def}), we obtain
$$\zeta(t+1)=\P\left(\sum\nolimits_{h\in\mc V}A_{jh}Z_h(t)\ge\rho_j\!\right)=
\sum\limits_{k\ge0}\sum_{0\le r\le k}\!\P\left(\sum\nolimits_{h\in\mc N_j}Z_h(t)\ge r\,\Big|\, k_j=k,\, \rho_j=r\right)q_{k,r}\,.$$
Now, observe that the conditional probability in the rightmost summation above is simply the probability that a sum of $k$ independent and identically distributed Bernoulli random variables having mean $\zeta(t)$ is not below the threshold $r$. Therefore, such conditional probability is equal to $\varphi_{k,r}(\zeta(t))$. Substituting we get
$$\zeta(t+1)=\sum\limits_{k\ge0}\sum_{0\le r\le k}\varphi_{k,r}(\zeta(t))q_{k,r}=\phi(\zeta(t))\,.$$
Since $\zeta(0)=\P(Z_j(0)=1)=\P(\sigma_j=1)=x(0)$, it follows that $\zeta(t)=x(t)$ for every $t\ge0$.

(ii) Put $\nu(t)=\E[Z_{v_0}(t)]$. From (\ref{LTM-def}) and point (i), it follows that
\begin{align*}
\nu(t+1)&=\ds\P\left(\sum\nolimits_{h\in\mc V}A_{v_0h}Z_h(t)\ge\rho_{v_0}\right)\\
	&=\ds\sum\limits_{k\ge0}\sum_{0\le r\le k}\P\left(\sum\nolimits_{h\in\mc  N_{v_0}}Z_h(t)\ge\rho_{v_0}\,\Big|\, k_{v_0}=k,\, \rho_{v_0}=r\right)p_{k,r} \\ 
	&=\ds\sum\limits_{k\ge0}\sum_{0\le r\le k}\varphi_{k,r}(\zeta(t))p_{k,r}=\psi(\zeta(t))\,,
\end{align*}
%%$$\begin{array}{rcl}\nu(t+1)&=&\ds\P\left(\sum\nolimits_{h\in\mc V}A_{v_0h}Z_h(t)\ge\rho_{v_0}\right)\\[10pt] 
%%&=&\ds\sum\limits_{k\ge0}\sum_{0\le r\le k}\P\left(\sum\nolimits_{h\in\mc  N_{v_0}}Z_h(t)\ge\rho_{v_0}\,|\, k_{v_0}=k,\, \rho_{v_0}=r\right)p_{k,r}
%%\\[20pt] &=&\ds\sum\limits_{k\ge0}\sum_{0\le r\le k}\varphi_{k,r}(\zeta(t))p_{k,r}=\psi(\zeta(t))\,,\end{array}$$
thus completing the proof.
\end{proof}

\subsection{The LTM on the configuration model ensemble}\label{subsect:conc}

We introduce now the configuration model ensemble $\mc C_{n,\mb p}$ of all networks with given size $n$ and statistics $\mb p$. We refer to $\mb p$ and $n$ as compatible if $np_{d,k,r,s}$ is an integer for all non-negative values of $d$, $k$, $0\le r\le k$, and $s\in\{0,1\}$, and $\ov d=\sum_{d,k,r,s}dp_{d,k,r,s}=\sum_{d,k,r,s}kp_{d,k,r,s}$ and
 construct a random network $\mc N=(\mc V,\mc E,\rho,\sigma)$ of compatible size $n$ and statistics $\mb p$ as follows. 
Let $\mc V=\{1,\ldots,n\}$ be a node set and let $\delta$, $\kappa$, $\rho$, and $\sigma$ be a designed vectors of in-degrees, out-degrees, thresholds, and initial states, such that \eqref{def:joint-distribution} holds true, i.e., there is exactly a fraction $p_{d,k,r,s}$ of agents $i\in\mc V$ with $(\delta_i,\kappa_i,\rho_i,\sigma_i)=(d,k,r,s)$. Let $l=\ov d n$ be the number of directed links, put $\mc L=\{1,2,\ldots,l\}$, and let $\nu,\lambda:\mc L\to\mc V$ be  two maps such that $|\nu^{-1}(i)|=\delta_i$ and $|\lambda^{-1}(i)=\kappa_i|$. Then, let $\pi$ be a uniform random permutation of $\mc L$ and let the network $\mc N=(\mc V,\mc E,\rho,\sigma)$ have node set $\mc V$, link multiset $\mc E=\{(\lambda(h),\nu(\pi(h))):\,1\le h\le l\}$, threshold vector $\rho$, and initial state vector $\sigma$. Figure~\ref{fig:CM} illustrates the above construction. We refer to such network $\mc N$ as being sampled from the configuration model ensemble $\mc C_{n,\mb p}$.

%{\red IT WOULD BE NICE TO HAVE AN ILLUSTRATION OF THE CONFIGURATION MODEL AS WE HAVE FOR THE BRANCHING PROCESS}

\begin{figure}
\centering
\begin{tikzpicture}
	%\draw [help lines] (0,0) grid (6,-5); 
	\tikzstyle{main node} = [shape = circle, draw, thick, inner sep = 0cm, minimum size = 0.6cm]

	% parameters 
	\def\Dy{1cm};
	\def\DyE{0.3cm};
	\def\edL{1.5cm}
	\def\edh{4mm}
	\def\edhh{2mm}
	\def\Dx{6cm} ;
	\def\arcR{1cm};
	\def\arcA{25};

	\node[main node ] (L1) at (0,0)  {};
	\node[main node ] (R1) at (\Dx,0)  {};
	\draw[->, thick ] (L1) -- ++ (\edL,\edh) node  [right] {$1$};
	\draw[->, thick ] (L1) -- ++ (\edL,0)  node [right] {};
	\draw[->, thick ] (L1) -- ++ (\edL,-\edh) node [right] {};
	\draw[<-, thick ] (R1) -- ++ (-\edL,\edhh) node [left] {$1$};
	\draw[<-, thick ] (R1) -- ++ (-\edL,-\edhh) node [ inner sep = 0] (ee3) [left] {};
	
	\node[main node ] (L2) at (0,-\Dy)  {$i$};
	\node[main node ] (R2) at (\Dx,-\Dy)  {$i$};
	\draw[->, thick ] (L2) -- ++ (\edL,\edhh) node  [ inner sep = 0] (e3) [right] {};
	%\draw[->] (L2) -- ++ (\edL,0) node [right] {$2$};
	\draw[->, thick ] (L2) -- ++ (\edL,-\edhh) node [ inner sep = 0] (e2)  [right] {};
	\draw[<-, thick ] (R2) -- ++ (-\edL,0) node [left] {};
	%\draw[<-] (R2) -- ++ (-\edL,-\edhh) node [left] {};
	
	\draw [color = red] (L2) ++ (+\arcA:\arcR) arc (+\arcA:-\arcA:\arcR) node [left ] {$\kappa_i$};
	\draw [color = red] (R2) ++ (180-\arcA:\arcR) arc (180-\arcA:180+\arcA:\arcR) node [right] {$\delta_i$};
	
	\node[main node ] (L3) at (0,-2*\Dy)  {$\lambda(h)$};
	\node[main node ] (R3) at (\Dx,-2*\Dy)  {};
	\draw[->,thick  ] (L3) -- ++ (\edL,\edhh) node (h) [right] {$h$};
	%\draw[->, thick] (L3) -- ++ (\edL,0) node [right] {$2$};
	\draw[->, thick ] (L3) -- ++ (\edL,-\edhh) node  [right] {};
	\draw[<-, thick ] (R3) -- ++ (-\edL,\edh) node  [ inner sep = 0] (ee2) [left] {};
	\draw[<-, thick ] (R3) -- ++ (-\edL,0) node [left] {};
	\draw[<-, thick ] (R3) -- ++ (-\edL,-\edh) node [ inner sep = 0] (ee1) [left] {};

	\node[main node ] (L4) at (0,-3*\Dy)  {}; %{$\lambda(h)$};
	\node[main node ] (R4) at (\Dx,-3*\Dy)  {$\nu(j)$};
	\draw[->,thick  ] (L4) -- ++ (\edL,\edhh) node [right] {};
	%\draw[->, thick] (L4) -- ++ (\edL,0) node [right] {$2$};
	\draw[->, thick ] (L4) -- ++ (\edL,-\edhh) node [ inner sep = 0] (e1) [right] {};
	\draw[<-, thick ] (R4) -- ++ (-\edL,\edhh) node [left] {};
	%\draw[<-, thick] (R4) -- ++ (-\edL,0) node [left] {};
	\draw[<-, thick ] (R4) -- ++ (-\edL,-\edhh) node (j) [left] {$j$};
	
	\node (dot) at (\Dx/2, -3.8*\Dy - .0*\DyE) {\Large $\ldots$};
	
	\node[main node ] (L5) at (0,-4*\Dy - \DyE)  {}; %{$\lambda(h)$};
	\node[main node ] (R5) at (\Dx,-4*\Dy - \DyE)  {}; %{$\nu(j)$};
	%\draw[->,thick ] (L5) -- ++ (\edL,\edhh) node [right] {};
	\draw[->, thick ] (L5) -- ++ (\edL,0) node [right] {$l$};
	%\draw[->, thick] (L5) -- ++ (\edL,-\edhh) node [right] {};
	\draw[<-, thick ] (R5) -- ++ (-\edL,\edhh) node [left] {};
	%\draw[<-, thick] (R5) -- ++ (-\edL,0) node [left] {};
	\draw[<-, thick ] (R5) -- ++ (-\edL,-\edhh) node [left] {$l$};

%	\draw[->, thick, color = red] (h) to [out=5, in = 195]  (j) ; 
	\draw[->, thick, color = red] (h) edge [out=5, in = 195] node [red, pos=0.4, sloped, above] {$\pi(h) = j$} (j);

%	\draw[->, color = red] (e1) to [out=-5, in = 200] (ee1) ;
	\draw[dashed, ->, color = red] (e1) to [out=0, in = 200] (ee1) ;
	\draw[dashed, ->, color = red] (e2) to [out=-10, in = 160] (ee2) ;
	\draw[dashed, ->, color = red] (e3) to [out=10, in = 190] (ee3) ;
	
\end{tikzpicture}
\caption{\label{fig:CM} The Configuration Model, with each node represented twice, on the left and on the right side of the picture. The picture contains the edge $(\lambda(h), \nu(\pi(h)) )$ and a few other dashed edges. }
\end{figure}
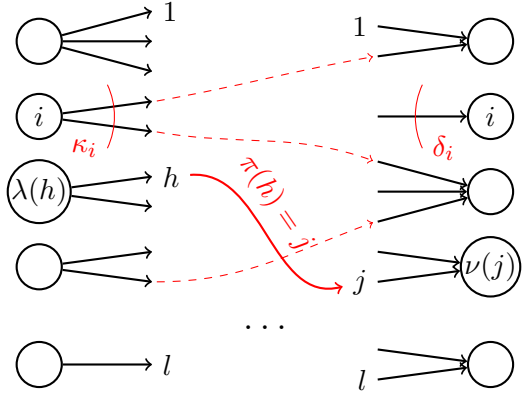

\begin{lemma}\label{lemma:TV}
Let $\mc N$ be a network sampled from the configuration model ensemble $\mc C_{n,\mb p}$ of compatible size $n$ and statistics $\mb p$.
For $t\ge0$, let $\mc N_t$ be the depth-$t$ neighborhood of a node in $\mc N$ chosen uniformly at random from the node set $\mc V$, and let $\mu_{\mc N_t}$ its probability distribution. Let $\mc T_{\mb p,t}$ be a two-stage branching process truncated at depth $t$, and let $\mu_{\mc T_{\mb p,t}}$ be its distribution. Then, the total variation distance $||\mu_{\mc N_t}-\mu_{\mc T_{\mb p,t}}||_{TV}$ between $\mu_{\mc N_t}$ and $\mu_{\mc T_{\mb p,t}}$ satisfies
$$||\mu_{\mc N_t}-\mu_{\mc T_{\mb p,t}}||_{TV}\le\frac{\gamma_t}{2n}\,,\qquad \gamma_t=\frac{d_{\max}k_{\max}^{2t+3}}{\ov d}\,,$$
where $d_{\max}=\max\{d\ge0:\,\sum_{k,r,s}p_{d,k,r,s}>0\}$ is the maximum in-degree and $k_{\max}=\max\{k\ge0:\,\sum_{d,r,s}p_{d,k,r,s}>0\}$ is the maximum out-degree.%, and $\ov d$ is the average degree. 
\end{lemma}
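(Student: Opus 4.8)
The plan is to build the depth-$t$ neighborhood $\mc N_t$ by a breadth-first \emph{stub-matching exploration} of the configuration model and to couple this exploration with a generation of the truncated branching process $\mc T_{\mb p,t}$ so that the two random rooted, marked networks coincide unless the exploration ever lands on an already-discovered node. By the coupling inequality this reduces the total variation distance to the probability of such a revisit (a ``collision''), which I then control by a union bound over the revealed edges.

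Concretely, I would first set up the exploration. Pick the root uniformly at random from $\mc V$: by \eqref{def:joint-distribution} its out-degree, threshold and initial state have exactly distribution $p_{k,r,s}$. Then reveal edges one at a time in breadth-first order; to reveal the edge leaving an unexplored out-stub $h$ I expose its partner $\pi(h)$, which, conditionally on all previously exposed matches, is uniform over the in-stubs not yet used (this is the one place where the uniformity of the permutation $\pi$ is essential). The node $\nu(\pi(h))$ carrying that in-stub is therefore hit with probability proportional to its in-degree, so a \emph{freshly} discovered node has out-degree/threshold/initial-state distribution exactly $q_{k,r,s}$ --- precisely the law with which $\mc T_{\mb p}$ attaches its non-root offspring. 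I can thus couple the two processes step by step: use the same root, and at each edge-revelation let the branching process spawn a new offspring carrying the same mark as the CM endpoint whenever that endpoint is previously unseen. Under this coupling $\mc N_t$ and $\mc T_{\mb p,t}$ are identical, as elements of the space on which $\mu_{\mc N_t}$ and $\mu_{\mc T_{\mb p,t}}$ live, on the event that none of the in-stubs exposed during the depth-$t$ exploration belongs to an already-explored node; the complement is the collision event. The coupling inequality then yields $\|\mu_{\mc N_t}-\mu_{\mc T_{\mb p,t}}\|_{TV}\le\P(\text{collision})$.

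It remains to bound $\P(\text{collision})$. Since every node has out-degree at most $k_{\max}$, the number of nodes within depth $t$ is at most $N:=\sum_{j=0}^{t}k_{\max}^{j}\le k_{\max}^{t+1}$, whence at most $M\le k_{\max}^{t+1}$ edges are revealed and at most $N d_{\max}\le k_{\max}^{t+1}d_{\max}$ in-stubs belong to already-discovered nodes at any time. When the $m$-th edge is exposed there remain $l-(m-1)$ unused in-stubs, so the conditional collision probability at that step is at most $k_{\max}^{t+1}d_{\max}/(l-M)$. A union bound over the $M$ revealed edges gives $\P(\text{collision})\le M\,k_{\max}^{t+1}d_{\max}/(l-M)\le k_{\max}^{2t+2}d_{\max}/(l-M)$. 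Substituting $l=\ov d\,n$ and lower-bounding the denominator (the bound is non-trivial only when $l$ dominates $M$, i.e.\ for large $n$, as for small $n$ the claimed bound already exceeds $1$) yields a polynomial bound in $k_{\max}$; tracking exponents and constants through the stub count absorbs the lower-order terms into one extra factor of $k_{\max}$, producing the stated $\gamma_t/(2n)$ with $\gamma_t=d_{\max}k_{\max}^{2t+3}/\ov d$.

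The delicate step, which I would write out with the most care, is the coupling itself: one must verify that conditionally on the exploration so far the next exposed in-stub is genuinely uniform over the unused in-stubs, and that on the no-collision event the revealed structure is a tree whose marks are drawn with exactly the two-stage branching-process law, so that the two measures are supported on, and equal along, the same rooted-network-valued random variable. The collision count is then routine; the only subtlety there is the uniform-in-$n$ control of the denominator $l-M$, which is precisely why a slightly lossy exponent of $k_{\max}$ is carried in $\gamma_t$.
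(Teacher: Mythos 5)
Your overall strategy is the same as the paper's (breadth-first exploration of the configuration model coupled with the branching process, the coupling inequality, and a union bound over revealed links), but there is a genuine gap in the central step. You assert that, conditionally on the exploration so far, a freshly discovered node has marks distributed \emph{exactly} as $q_{k,r,s}$, and hence that on the no-collision event the revealed tree carries exactly the two-stage branching-process law. This is false: the exposed in-stubs form a sample \emph{without replacement} from $\mc L$, so the $m$-th exposed stub is uniform over the $l-m+1$ \emph{unused} stubs, and conditionally on it landing on a fresh node its type is size-biased over the in-stubs of the \emph{remaining} fresh nodes --- a distribution that depends on which nodes have already been discovered and is not $q_{k,r,s}$ (already the second offspring's type law is tilted by the removal of the first). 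Consequently the object you declare to be $\mc T_{\mb p,t}$ (``spawn an offspring carrying the same mark as the CM endpoint whenever that endpoint is previously unseen'') does not have the marginal $\mu_{\mc T_{\mb p,t}}$, and the coupling inequality cannot be invoked: a collision-free exploration and the truncated branching process are two \emph{different} laws on marked trees, so bounding only the collision probability does not bound the total variation distance.

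The paper closes exactly this hole by running two sequences in parallel: an i.i.d.\ uniform (with-replacement) sequence $(L_h)$ that drives the branching process, and a without-replacement sequence $(M_h)$, maximally coupled to it, that drives the configuration-model wiring. The bad event then contains \emph{two} parts: the node collisions you account for, and the divergence events $\{L_{h+1}\ne M_{h+1}\}$, each of conditional probability at most $h/l$. The latter contribute $\sum_{h<H_t}h/l$, which is of the \emph{same} order $k_{\max}^{2(t+1)}/l$ as your collision term, so it is not a lower-order effect that can be waved away; it is, however, absorbed by the same counting and yields the stated $\gamma_t/(2n)$. Your collision estimate itself (at most $k_{\max}^{t+1}d_{\max}$ occupied in-stubs, denominator $l-M$, union bound over at most $k_{\max}^{t+1}$ links) is sound and essentially matches the paper's computation of the terms $\varsigma_h$; only the missing with-versus-without-replacement correction needs to be added before the argument is complete.
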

\proof 
%Let $\mc N=(\mc V,\mc E,r,s)$ be a network generated from the configuration model of size $n$ and distribution $p$.  Let $w_0$ be a random node chosen uniformly from $\mc V$, and let $K_0=k_{w_0}$, $R_0=r_{w_0}$, and $S_0=s_{w_0}$ be its out-degree, threshold, and initial state. 
%$\mc N_t$ be its . 

We will construct a coupling of the configuration model $\mc C_{n,\mb p}$ and the two-stage branching process $\mc T$ such that the depth-$t$ neighborhood $\mc N_t$ of a uniform random node in $\mc N$ and the depth-$t$ truncated branching process $\mc T_{\mb p,t}$ satisfy $\P(\mc N_t\ne\mc T_{\mb p,t})\le\gamma_t/n$. The claim will then follow from the well-known bound $||\mu_{\mc N_t}-\mu_{\mc T_{\mb p,t}}||_{TV}\le\P(\mc N_t\ne\mc T_{\mb p,t})$ valid for every coupling of $\mc N_t$ and $\mc T_{\mb p,t}$ (cf., e.g., \cite[Proposition 4.7]{LPW:2009}). 

In order to sample a network $\mc N$ from $\mc C_{n,\mb p}$ and define the coupling altogether, let us assign in-degree $\delta_i$, out-degree $\kappa_i$, threshold $\rho_i$, and initial state $\sigma_i$ to each of the $n$ nodes $i\in\mc V$ in such a way that there are exactly $np_{d,k,r,s}$ nodes of in-degree $d$, out-degree $k$, threshold $r$, and initial state $s$. Let $l=n\ov d=\1'\delta$, $\mc L=\{1,2,\ldots,l\}$, and let $\nu:\mc L\to\mc V$ be a map such that $|\nu^{-1}(i)|=\delta_i$. 
%Let $w_0$ be random node chosen uniformly from $\mc V$, and let $K_0=k_{w_0}$, $R_0=r_{w_0}$, and $S_0=s_{w_0}$ be its out-degree, threshold, and initial state. %Observe that the triple $(K_0,R_0,S_0)$ has joint distribution .  
%Let us label the nodes in $\mc V\setminus\{w_0\}$ as $w_1,w_2,\ldots,w_{n-1}$ in such a way that nodes at distance $1$ from $w_0$ in $\mc N$ get the lowest positive labels, nodes at distance $2$ get the next smallest available labels, and so on. Let also label links in $\mc E$ as $e_1,e_2,\ldots$ in such a way that the $K_0$ outgoing links of $w_0$ get the smallest labels, the $k_{w_1}$ outgoing links of $w_1$ get the following labels, and so on. 
%Let $\nu:\mc L\to\mc V$ be the map such that $\nu(j)$ for $j=1,\ldots,l$ is the head-node of link $e_l$ in $\mc N$. 
%(E.g., $\nu$ can be chosen by ordering the link set $\mc E$ and letting $\nu(e)$ be the head node of the $e$-th link in $\mc N$.)
%partition it as $\mc L=\bigcup_{1\le k\le k_{\max}}\bigcup_{0\le r\le k}\bigcup_{s=0,1}\mc L_{k,r,s}$, where $|\mc L_{k,r,s}|=n\sum_ddp_{dkrs}$ is the total number of links incoming in nodes with out-degree $k$, threshold %$\Theta\in((r-1)/k,r/k]$
%$r$, and initial state $s$.
Let $w_0$ be a random node chosen uniformly from $\mc V$, and let $K_0=\kappa_{w_0}$, $R_0=\rho_{w_0}$, and $S_0=\sigma_{w_0}$ be its out-degree, threshold, and initial state, respectively. %Observe that the triple $(K_0,R_0,S_0)$ has joint distribution .  
Let $(L_h)_{h=1,2,\ldots}$ be a sequence of mutually independent random variables with identical uniform distribution on the set $\mc L$ and independent from $w_0$. Let $(M_h)_{h=1,2,\ldots,l}$ be a finite sequence of $\mc L$-valued random variables such that, conditioned on $w_0$, $L_1,\ldots,L_h$ and $M_1,\ldots,M_{h-1}$, one has $M_h=L_h$ if $L_h\notin\{M_1,\ldots,M_{h-1}\}$, while, if  $L_h\in\{M_1,\ldots,M_{h-1}\}$, $M_h$ is conditionally uniformly distributed on the set $\mc L\setminus\{M_1,\ldots,M_{h-1}\}$. 
Notice that the marginal probability distributions of the two sequences $(L_h)_{h=1,2,\ldots}$ and $(M_h)_{h=1,2,\ldots,l}$ correspond to sampling with replacement and, respectively, sampling without replacement, from the same set $\mc L$ (note that $(M_h)_{h=1,2,\ldots,l}$ represents a permutation on $\mc L$). 
Moreover,  observe that 
\be\label{P0}\P\left(L_{h+1}\ne M_{h+1} |(L_1,\ldots,L_{h})=(M_1,\ldots,M_{h})\right)\le\frac hl\,,\qquad 1\le h< l\,.\ee

Let $\mc T_{\mb p,t}$ be the random directed tree whose root $v_0$ has out-degree $K_0$, threshold $R_0$ and initial state $S_0$, and that is then generated starting from $v_0$ in a breadth-first fashion, by assigning to each node $v_h$, $h\ge1$ at generation $1\le u\le t$ out-degree $K_h=\kappa_{\nu(L_h)}$, threshold $R_h=\rho_{\nu(L_h)}$ and initial state $S_h=\sigma_{\nu(L_h)}$. Observe that the triples $(K_h,R_h,S_h)$ for $h\ge0$ are mutually independent and have distribution
$\P(K_0=k,R_0=r,S_0=s)=p_{k,r,s}$ and  
$\P(K_h=k,R_h=r,S_h=s)=\frac1{\ov d}\sum_ddp_{d,k,r,s}=q_{k,r,s}$ for $h\ge1$. Hence, $\mc T_{\mb p,t}$ generated in this way has indeed the desired distribution $\mu_{\mc T_{\mb p,t}}$. 

%On the other hand, let the random network $\mc N$, and hence $\mc N_t$, be generated starting from $w_0$ and exploring its neighborhood in a breadth-first fashion: first let the $J_0=K_0$ outgoing links of $v_0$ point to the nodes $v_1=\nu(M_1),\ldots,v_{J_0}=\nu(M_{J_0})$; then let the $J_1$ links outgoing from the set $\{v_1,\ldots,v_{J_0}\}\setminus\{v_0\}$ of out-neighbors of $v_0$ point to the nodes $\nu(M_{J_0+1}),\ldots,\nu(M_{J_0+J_1})$; then let the $J_2$ links outgoing from the set $\{v_{{J_0+1}},\ldots,v_{J_0+J_1}\}\setminus\{v_0,v_1,\ldots v_{J_0}\}$ point to the nodes $\nu(M_{J_0+J_1+1}),\ldots,\nu(M_{J_0+J_1+J_2})$, and so on, possibly restarting from one of the unreached nodes in $\mc V$ if the process has arrived to a point where $J_u=0$ and $\sum_{h\le u}J_h<l$ (so that not all nodes have been reached from $v_0$). 
On the other hand, let the network $\mc N$, and hence $\mc N_t$, be generated starting from $w_0$ and exploring its neighborhood in a breadth-first fashion. First let the $J_0=K_0$ outgoing links of $v_0$ point to the nodes $v_1=\nu(M_1),\ldots,v_{J_0}=\nu(M_{J_0})$; then let the $J_1$ links outgoing from the set $\{v_1,\ldots,v_{J_0}\}\setminus\{v_0\}$ of new out-neighbors of $v_0$ point to the nodes $\nu(M_{J_0+1}),\ldots,\nu(M_{J_0+J_1})$; then let the $J_2$ links outgoing from the set $\{v_{{J_0+1}},\ldots,v_{J_0+J_1}\}\setminus\{v_0,v_1,\ldots v_{J_0}\}$ point to the nodes $\nu(M_{J_0+J_1+1}),\ldots,\nu(M_{J_0+J_1+J_2})$, and so on, possibly restarting from one of the unreached nodes in $\mc V$ if the process has arrived to a point where $J_u=0$ and $\sum_{h\le u}J_h<l$ (so that not all nodes have been reached from $v_0$). 
Now, let $H_t=\sum_{0\le u\le t-1}J_u$ and $N_t=|\{v_0,v_1,\ldots,v_{H_t}\}|$ be the total number of links and, respectively, nodes in $\mc N_t$. Observe that $\mc N_t$ is a directed tree if and only if $N_t=H_t+1$, which is in turn equivalent to $ \nu(M_h)\ne\nu(M_{h'})\ne w_0$ for all $1\le h < h' \le N_t$. %$ \nu(M_h)\ne\nu(M_i)\ne w_0$ for all $1\le h<i\le N_t$.

Notice that
%$$ \P\left(\nu(h+1)\in\{w_0,\nu(1),\ldots,\nu(h)\}\big|(L_1,\ldots,L_{h+1})=(M_1,\ldots,M_{h+1})\right)\le \frac{(h\!+\!1)(d_{\max}\!-\!1)+1}l\,,$$
\begin{align*}
\P\left(\nu(M_{h+1})\in\{w_0,\nu(M_1),\ldots,\nu(M_h)\}\big|(L_1,\ldots,L_{h+1})=(M_1,\ldots,M_{h+1})\right)\le \frac{(h\!+\!1)(d_{\max}\!-\!1)+1}l\,,
\end{align*}
for $0\le h< l $, which, together with \eqref{P0} gives 
%%\be
%%\ba{rcl}
%%\zeta_h&:=&\P\left(L_{h+1}\ne M_{h+1} \text{ or } \nu(h+1)\in\{w_0,\nu(1),\ldots,\nu(h)\}\big|(L_1,\ldots,L_{h})=(M_1,\ldots,M_{h})\right)\\[10pt]
%%&=&
%%\P\left(L_{h+1}\ne M_{h+1}\big|(L_1,\ldots,L_{h})=(M_1,\ldots,M_{h})\right)\\[5pt]
%%&&+
%%\P\left(L_{h+1}= M_{h+1} \text{ and } \nu(h+1)\in\{w_0,\nu(1),\ldots,\nu(h)\}\big|(L_1,\ldots,L_{h})=(M_1,\ldots,M_{h})\right)
%%\\[10pt]
%%&\le&
%%\P\left(L_{h+1}\ne M_{h+1}\big|(L_1,\ldots,L_{h})=(M_1,\ldots,M_{h})\right)\\[5pt]
%%&&+
%%\P\left(\nu(h+1)\in\{w_0,\nu(1),\ldots,\nu(h)\}\big|(L_1,\ldots,L_{h+1})=(M_1,\ldots,M_{h+1})\right)\\[10pt]
%%&\le&\ds\frac{h}l+\frac{(h+1)(d_{\max}-1)}l\\[10pt]
%%&\le&\ds\frac{(h+1)d_{\max}}l\,.
%%\ea
%%\ee
\begin{align*}
\varsigma_h &:=  \P\left(L_{h+1}\ne M_{h+1} \text{ or } \nu(M_{h+1})\in\{w_0,\nu(M_1),\ldots,\nu(M_h)\}\big|(L_1,\ldots,L_{h})=(M_1,\ldots,M_{h})\right)\\
	&\phantom{:} =  \P\left(L_{h+1}\ne M_{h+1}\big|(L_1,\ldots,L_{h})=(M_1,\ldots,M_{h})\right)\\
	&\phantom{:} + \P\left(L_{h+1}= M_{h+1} \text{ and } \nu(M_{h+1})\in\{w_0,\nu(M_1),\ldots,\nu(M_h)\}\big|(L_1,\ldots,L_{h})=(M_1,\ldots,M_{h})\right) \\
	&\phantom{:}\le  \P\left(L_{h+1}\ne M_{h+1}\big|(L_1,\ldots,L_{h})=(M_1,\ldots,M_{h})\right)\\
	&\phantom{:} + \P\left(\nu(M_{h+1})\in\{w_0,\nu(M_1),\ldots,\nu(M_h)\}\big|(L_1,\ldots,L_{h+1})=(M_1,\ldots,M_{h+1})\right)\\
	&\phantom{:}\le  \ds\frac{h}l+\frac{(h+1)(d_{\max}-1)+1}l %\\
	%&\phantom{:}
	\le  \ds\frac{(h+1)d_{\max}}l \,.
\end{align*}

The key observation is that, upon identifying node $v_h\in\mc N$ with node $w_h\in\mc T_{\mb p,t}$ for all $0\le h< N_t$, in order for $\mc N_t\ne\mc T_{\mb p,t}$ it is necessary that either $N_t\ne H_t+1$ (in which case $\mc N_t$ is not a tree) or $(L_1,\ldots,L_{H_{t}})\ne(M_1,\ldots,M_{H_{t}})$ (in which case the nodes $v_h$ and $w_h$ might have different outdegree, threshold, or initial state). In order to estimate the probability that any of this occurs, first observe that a standard induction argument shows that $J_u\le k_{\max}^{u+1}$ for all $u\ge0$, so that $H_t\le\sum_{1\le u\le t}k_{\max}^u\le k_{\max}^{t+1}$. 
%Also, note that 
%$\{N_t\ne H_t+1\}=\bigcup_{1\le h< u\le H_t}\{\nu(h)=\nu(u)\}$. 
Then,

$$\ba{rcl}
\P(\mc N_t\ne\mc T_{\mb p,t})&\le&
\P\l((L_1,\ldots,L_{H_{t}})\ne(M_1,\ldots,M_{H_{t}})\text{ or }\bigcup_{1\le h< h'\le H_t}\{\nu(h)=\nu(h')\} \r. \\[10pt]
 & & \hspace{7cm} \l. \text{ or }\bigcup_{1\le h\le H_t}\{\nu(h)= w_0\}\r)\\[10pt]
%\P\l((L_1,\ldots,L_{H_{t}})\ne(M_1,\ldots,M_{H_{t}})\text{ or }\bigcup_{1\le h< u\le H_t}\{\nu(h)=\nu(u)\}\r)\\[10pt]
&\le&
\ds\sum_{0\le h\le H_t-1}\varsigma_h\le
\ds\sum_{0\le h\le k_{\max}^{t+1}-1}\frac{d_{\max}(h+1)}{l}=
\frac{d_{\max}k_{\max}^{t+1}(k_{\max}^{t+1}+1)}{2 n\ov d}\\[10pt]
&\le&\ds\frac{d_{\max}}{2 n\ov d}k_{\max}^{2t+3}\,. 
\ea$$
Hence, the claim follows from the above and the aforementioned bound on the total variation distance between $\mu_{\mc N_t}$ and $\mu_{\mc T_{\mb p,t}}$. 
% head nodes of the links outgoing from $v_1,\ldots v_{K_0}$ be $v_$ exploring its neighborhood 
%On the other hand, we generate $\mc N_t$ as follows. Its root $w_0$ has the same out-degree $K_0$, threshold $R_0$, and initial state $S_0$ as the root node $v_0$ of $\mc T_{\mb p,t}$. 
\qed\medskip

As a consequence of Lemma \ref{lemma:TV}, we get the following result. 

\begin{proposition}\label{prop:mean}
Let $\mc N$ be a network sampled from the configuration model ensemble $\mc C_{n,\mb p}$ of compatible size $n$ and statistics $\mb p$. Let $Z(t)$, for $t\ge0$, be the state vector of the LTM dynamics \eqref{LTM-def} on $\mc N$, $z(t)=\frac1n\sum_{i}Z_i(t)$ be the fraction of state-$1$ adopters at time $t$, and $\ov z(t)=\E[z(t)]$ be its expectation. 
Then, 
$$|\ov z(t)-y(t)|\le \frac{\gamma_t}{2n}\,,$$
where $y(t)$ is the output of the recursion \eqref{CM-recursion} and $\gamma_t={d_{\max}k_{\max}^{2t+3}}/{\ov d}$ as in Lemma \ref{lemma:TV}. 
\end{proposition}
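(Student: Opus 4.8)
The plan is to obtain Proposition~\ref{prop:mean} as an essentially immediate consequence of Lemma~\ref{lemma:TV} and Proposition~\ref{proposition recursive}, by identifying the expected adopter fraction with the expected state of a single uniformly sampled node and then comparing the two laws of its local neighborhood. First I would note that, by linearity of expectation together with independence, $\ov z(t)=\frac1n\sum_{i\in\mc V}\E[Z_i(t)]=\E[Z_I(t)]$, where $I$ is a node drawn uniformly at random from $\mc V$ independently of the sampling of $\mc N$. Thus the object $\mc N_t$, the depth-$t$ neighborhood of $I$ that appears in Lemma~\ref{lemma:TV}, becomes the only relevant source of randomness.

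The key structural observation is that $Z_I(t)$ is a deterministic, $\{0,1\}$-valued function of the rooted, labeled neighborhood $\mc N_t$ alone. Indeed, a straightforward induction on $t$ using the LTM update~\eqref{LTM-def} shows that $Z_i(t)$ is determined by the out-degrees, thresholds, and initial states of the nodes reachable from $i$ by directed walks of length at most $t$, since $Z_i(t+1)$ depends only on $\rho_i$, $\kappa_i$, and the states $\{Z_j(t)\}$ of the out-neighbors of $i$. Hence there is a single map $g_t$, defined on rooted labeled graphs and invariant under rooted isomorphism, with $Z_I(t)=g_t(\mc N_t)$. Because the update rule is literally the same on a tree, the same $g_t$ gives $Z_{v_0}(t)=g_t(\mc T_{\mb p,t})$ for the root of the two-stage branching process, so Proposition~\ref{proposition recursive}(ii) identifies $y(t)=\E[Z_{v_0}(t)]=\E_{\mu_{\mc T_{\mb p,t}}}[g_t]$, while by construction $\ov z(t)=\E_{\mu_{\mc N_t}}[g_t]$.

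It then remains to compare the two expectations of the common bounded functional $g_t$. Since $g_t$ takes values in $[0,1]$, the elementary estimate $|\E_{\mu}[g_t]-\E_{\nu}[g_t]|\le\|\mu-\nu\|_{TV}$, valid under the normalization $\|\mu-\nu\|_{TV}=\tfrac12\sum_x|\mu(x)-\nu(x)|$ used in Lemma~\ref{lemma:TV}, gives $|\ov z(t)-y(t)|\le\|\mu_{\mc N_t}-\mu_{\mc T_{\mb p,t}}\|_{TV}$. Plugging in the bound $\|\mu_{\mc N_t}-\mu_{\mc T_{\mb p,t}}\|_{TV}\le\gamma_t/(2n)$ from Lemma~\ref{lemma:TV} yields exactly $|\ov z(t)-y(t)|\le\gamma_t/(2n)$, as claimed.

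The only genuinely delicate step is the locality claim of the second paragraph: one must verify both that $Z_I(t)$ depends on $I$ only through the labeled rooted neighborhood $\mc N_t$, and that the governing functional is the very same $g_t$ that occurs on the branching process; everything else is bookkeeping. I would also flag that $\mc N_t$ need not be a tree---it may contain cycles or coincident nodes---whereas $\mc T_{\mb p,t}$ always is. This is harmless, since $g_t$ is defined on arbitrary rooted labeled graphs and the event that $\mc N_t$ fails to be a tree is already absorbed into the total-variation estimate of Lemma~\ref{lemma:TV}. Once these points are settled, the proposition follows with no further computation.
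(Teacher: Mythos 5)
Your proposal is correct and follows essentially the same route as the paper: identify $\ov z(t)$ and $y(t)$ as expectations of one and the same $\{0,1\}$-valued local functional of the depth-$t$ neighborhood (the paper calls it $\chi$), then bound their difference by the total variation distance from Lemma \ref{lemma:TV}, using exactly the centering-at-$\tfrac12$ normalization you mention. The locality and non-tree caveats you flag are handled the same way in the paper's proof, so nothing further is needed.
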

\proof
Observe that, in the LTM dynamics, the state $Z_i(t)$ of an agent $i$ in a network $\mc N=(\mc V,\mc E,\rho,\sigma)$ is a deterministic function of the initial states $Z_j(0)=\sigma_j$ of the agents $j$ reachable from $i$ with $t$ hops in $\mc N$ % at distance $t$ from $i$ in $\mc N$ 
and of the thresholds $\rho_k$ of the agents $k$ reachable from $i$ with less than $t$ hops %at distance less than $t$ from $i$ 
in $\mc N$. In particular, if $\mc N_t^i$ is the depth-$t$ neighborhood of node $i$ in $\mc N$, then $Z_i(t)=\chi(\mc N_t^i)$, where $\chi$ is a certain deterministic $\{0,1\}$-valued function. It follows that, if $\mc N$ is a network sampled from the configuration model ensemble $\mc C_{n,\mb p}$, $\mc N_t$ is the depth-$t$ neighborhood of uniform random node in $\mc N$, and $\mu_{\mc N_t}$ is its distribution, then 
$$\ov z(t)=\E\left[\frac1n\sum_{i\in\mc V}Z_i(t)\right]=\int\chi(\omega)\de \mu_{\mc N_t}(\omega)\,.$$
On the other hand, it follows from Proposition \ref{proposition recursive} that, if $\mc T_{\mb p,t}$ is a two-stage directed branching process with offspring distribution $p_{k,r,s}=\sum_dp_{d,k,r,s}$ for the first generation and $q_{k,r,s}=\frac1{\ov d}\sum_{d\ge0}dp_{d,k,r,s}$ for the following generations, truncated at depth $t$, and $\mu_{\mc T_{\mb p,t}}$ is its distribution, then the output $y(t)$ of the recursion \eqref{CM-recursion} satisfies
$$y(t)=\int\chi(\omega)\de \mu_{\mc T_{\mb p,t}}(\omega)\,.$$
It then follows from the fact that $\chi$ is a $\{0,1\}$-valued random variable and Lemma \ref{lemma:TV} that 
$$|\ov z(t)-y(t)| = \left|\int \l(\chi(\omega)-\frac12 \r)\de \mu_{\mc N_t}(\omega)-\int \l(\chi(\omega)-\frac12\r)\de\mu_{\mc T_{\mb p,t}}(\omega)\right|
	\le ||\mu_{\mc N_t}-\mu_{\mc T_{\mb p,t}}||_{TV}\le \frac{\gamma_t}{2 n}\,,$$
thus completing the proof.
\qed\medskip

The following result establishes concentration of the fraction of state-$1$ adopters in the LTM dynamics on a random network drawn from the configuration model ensemble and its expectation.   
\begin{proposition}\label{prop:concentration}
Let $n$ and $\mb p$ be compatible network size and statistics. Then, for all $\eps>0$, for at least a fraction 
$$1-2e^{-\eps^2\beta n}\qquad \text{with} \qquad \beta=(32 \ov dd_{\max}^{2t})^{-1}$$ 
of networks $\mc N$ from the configuration model ensemble $\mc C_{n,\mb p}$, the fraction of $z(t)=\frac1n\sum_{i\in\mc V}Z_i(t)$ of state-$1$ adopters in the LTM dynamics \eqref{LTM-def} on $\mc N$ satisfies 
$$|z(t)-\ov z(t)|\le\eps /2\,, $$
where $\ov z(t)$ is the average of $z(t)$ over the choice of $\mc N$ from $\mc C_{n,\mb p}$.
\end{proposition}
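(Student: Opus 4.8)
The plan is to realize $z(t)$ as a deterministic function of the single uniform random permutation $\pi$ that defines the configuration model and to establish concentration via a martingale / bounded-differences argument. Writing $l=\ov d\,n=|\mc L|$, recall that $\mc N\in\mc C_{n,\mb p}$ is obtained from $\pi$ through $\mc E=\{(\lambda(h),\nu(\pi(h))):1\le h\le l\}$, so that $z(t)=\frac1n\sum_iZ_i(t)=:f(\pi)$ depends on $\pi$ alone and $\ov z(t)=\E f(\pi)$. As established in the proof of Proposition~\ref{prop:mean}, each state $Z_i(t)=\chi(\mc N_t^i)$ is a function of the depth-$t$ neighborhood of node $i$ only. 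First I would set up the Doob martingale $D_k=\E[f(\pi)\mid \pi(1),\ldots,\pi(k)]$, for $0\le k\le l$, revealing the permutation one position at a time, so that $D_0=\ov z(t)$ and $D_l=z(t)$.

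The crux is the \emph{bounded-difference} (Lipschitz) estimate: for any permutation $\pi$ and any transposition $\tau$ one has $|f(\pi)-f(\pi\tau)|\le c$ with $c:=2d_{\max}^t/n$. Indeed, a transposition rewires the heads of at most two edges while leaving all tails fixed; an edge with tail $a$ can influence $Z_i(t)$ only when it belongs to $\mc N_t^i$, i.e.\ only for those nodes $i$ admitting a walk to $a$ of length at most $t-1$. Exploring backwards from $a$, the number of such nodes is at most $\sum_{0\le j\le t-1}d_{\max}^j\le d_{\max}^t$ (for $d_{\max}\ge2$; the remaining cases are immediate), since the backward branching is controlled by the maximum \emph{in}-degree $d_{\max}$ rather than the out-degree. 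Hence at most $2d_{\max}^t$ of the summands $Z_i(t)$, each contributing $1/n$ to $z(t)$, can change, which gives the claimed $c$. A standard coupling of the conditional laws of the unrevealed entries then transfers this transposition-Lipschitz property to the martingale increments, forcing $|D_k-D_{k-1}|\le c$ for every $k$.

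With the increment bound in hand I would invoke the Azuma--Hoeffding inequality (the McDiarmid-type bound for functions of a random permutation) for the martingale $(D_k)_{0\le k\le l}$:
\[
\P\big(|z(t)-\ov z(t)|\ge \eps/2\big)=\P\big(|D_l-D_0|\ge \eps/2\big)\le 2\exp\!\left(-\frac{(\eps/2)^2}{2\,l\,c^2}\right).
\]
Substituting $l=\ov d\,n$ and $c=2d_{\max}^t/n$ yields $\tfrac{(\eps/2)^2}{2\,l\,c^2}=\tfrac{\eps^2 n}{32\,\ov d\,d_{\max}^{2t}}=\eps^2\beta n$, so the exceptional set has probability at most $2e^{-\eps^2\beta n}$; since sampling from $\mc C_{n,\mb p}$ is uniform, this is exactly the claimed fraction of networks.

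The main obstacle I expect is precisely the bounded-difference estimate of the second paragraph: one must argue carefully that a single re-pairing alters the depth-$t$ neighborhood, and hence the state, of only $O(d_{\max}^t)$ nodes---this requires the backward reachability count with branching bounded by the maximum in-degree---and one must track the constant so that it combines with the Azuma constant $1/(2\,l\,c^2)$ to reproduce \emph{exactly} $\beta=(32\,\ov d\,d_{\max}^{2t})^{-1}$. A secondary, more routine point is the coupling that passes from the transposition-Lipschitz bound on $f$ to the increment bound $|D_k-D_{k-1}|\le c$, which should nonetheless be stated explicitly.
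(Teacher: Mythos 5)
Your proposal is correct and follows essentially the same route as the paper: a Doob (link-exposure) martingale obtained by revealing $\pi$ one position at a time, a bounded-difference estimate proved by coupling $\pi$ with a transposed permutation and counting the at most $\sum_{s=0}^{t-1}d_{\max}^s$ nodes that can reach the tail of a rewired link in fewer than $t$ hops, and then Azuma--Hoeffding. The only cosmetic differences are that the paper works with the unnormalized count $nz(t)$ and keeps the slightly sharper bound $d_{\max}^t/(d_{\max}-1)$ on the backward reachability count before discarding the factor $(d_{\max}-1)^2\ge1$ at the end, whereas you normalize by $n$ throughout and use $\sum_{j<t}d_{\max}^j\le d_{\max}^t$ directly; both yield the stated $\beta=(32\,\ov d\,d_{\max}^{2t})^{-1}$.
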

%Let $\mc N$ be a random network drawn from the directed configuration model ensemble of size $n$ where the fraction of nodes having out-degree $d$, in-degree $k$, activation threshold $r$ and initial state $s$ is $\pdkrx$. Let $Z(t)$, for $t\ge0$ be the state vector of the LTM dynamics \eqref{LTM-def} on $\mc N$, $x(t)=\frac1n\sum_{i}Z_i(t)$ be the fraction of state-$1$ nodes at time $t$, and $\ov x(t)=\E[x(t)]$ be its average (with respect to $\mc N$). 
%Then, for $\eps>0$, 
%For all but at most a fraction $2e^{-4\eps^2\beta n}$ of networks from the ensemble, where $\beta=(16\ov dd_{\max}^{2t})^{-1}$.
\proof
Let $a(t)=nz(t)=\sum_{i\in\mc V}Z_i(t)$ be the total number of agents in state $1$ at time $t$ in the network $\mc N$ drawn uniformly from the configuration model ensemble, and let $\ov a(t)=n\ov z(t)$ be its average over the ensemble. In order to prove the result we will construct a martingale $A_0,A_1,\ldots,A_l$, where $l=n\ov d$ is the total number of links, such that 
$A_0=\ov a(t)$, $A_l=a(t)$, and 
\be\label{|A-A|} |A_h-A_{h-1}|\le\alpha\,,\qquad \alpha:=\frac{2d_{\max}^t}{d_{\max}-1}\,,\qquad h=1,2,\ldots,l\,.
\ee
The result will then follow from the Hoeffding-Azuma inequality \cite[Theorem 7.2.1]{AS:2008} which implies that the fraction of networks from the configuration model ensemble for which $|A_0-A_l|\ge \eta= n \eps /2$ is upper bounded by
$$ 2\exp \l(-\frac{\eta^2}{2l\alpha^2} \r)= 2\exp \l(-\frac{n \eps^2}{8 \ov d \alpha^2} \r) 
	= 2\exp \l(-\frac{n \eps^2 (d_{\max}-1)^2 }{32 \ov d d_{\max}^2} \r)\leq 2\exp(-\eps^2\beta n)\,,
$$
where $\beta=(32\ov d d_{\max}^{2t})^{-1}$.
%$|A_0-A_l|\ge \eta=n\eps$ is upper bounded by 
%$$ 2\exp(-\eta^2/(2l\alpha^2))=2\exp(-n\eps^2/(2\ov d \alpha^2))\leq 2\exp(-\eps^2\beta n)\,, $$ 
%where $\beta=(8\ov dd_{\max}^{2t})^{-1}$

In order to define the aforementioned martingale,  let $\mc L=\{1,2,\ldots,l\}$ and recall that the configuration model ensemble is defined starting from in-degree, out-degree, threshold, and initial state vectors $\delta,\kappa,\rho,\sigma\in\R^n$ with empirical frequency coinciding with the prescribed distribution $\{ \pdkrx \} $ and two maps $\nu,\lambda:\mc L\to\mc V$ such that $|\nu^{-1}(i)|=\delta_i$ and $|\lambda^{-1}(i)|=\kappa_i$ for all $i\in\mc V$. The ensemble is then defined by taking a uniform permutation $\pi$ of the set $\mc L$ and wiring the $h$-th link %from node $\nu(h)$ to node $\lambda(\pi(h))$ for $h=1,\ldots,l$. {\red CFR PAG 15 nell a def del CM }
from node $\lambda(h)$ to node $\nu(\pi(h))$ for $h=1,\ldots,l$.
 Let $\pi_{[h]}=(\pi(1),\pi(2),\ldots,\pi(h))$ be the vector obtained by unveiling the first $h$ values of $\pi$.
Then, define $A_h=\E[a(t)|\pi_{[h]}]$, for $h=0,1,\ldots,l$ and observe that $A_0,A_1,\ldots,A_l$ is indeed a (Doob) martingale, generally referred to as the link-exposure martingale. It is easily verified that $A_0=\E[a(t)]=\ov a(t)$ and $A_l=\E[a(t)|\pi]=a(t)$.

What remains to be proven is the bound \eqref{|A-A|}. For a given $h=1,\ldots,l$, let $\tilde\pi$ be a random permutation of $\mc L$ which is obtained from $\pi$ by choosing some $j$ uniformly at random from the set $\mc L\setminus\{\pi(1),\ldots,\pi(h-1)\}$ and putting $\tilde\pi(h)=j$ and $\tilde\pi(\pi^{-1}(j))=\pi(h)$, and $\tilde\pi(k)=\pi(k)$ for all $k\in\mc L\setminus\{h,\pi^{-1}(j)\}$. Notice that $\tilde\pi$ and $\pi$ differ in at most two positions, $h$ and $\pi^{-1}(j)\ge h$, the latter inequality following from the fact that $j\in\mc L\setminus\{\pi(1),\ldots,\pi(h-1)\}$.  Hence, in particular, $\tilde\pi_{[h-1]}=\pi_{[h-1]}$. Moreover, $\tilde\pi$ and $\pi$ have the same conditional distribution given $\pi_{[h-1]}$ (since they both correspond to choosing a bijection of $\{h,h+1,\ldots,l\}$ to $\mc L\setminus\{\pi(1),\ldots,\pi(h-1)\}$ uniformly) and $\tilde\pi$ is conditionally independent from $\pi_{[h]}$ given $\pi_{[h-1]}$. Therefore, 
\be\label{AhAh-1}A_h-A_{h-1}=\E[A(t)|\pi_{[h]}]-\E\left[A(t)|\pi_{[h-1]}\right]=\E[A(t)|\pi_{[h]}]-\E[\tilde A(t)|\pi_{[h-1]}]=\E[A(t)-\tilde A(t)|\pi_{[h]}]\,,\ee
for all $h=1,\ldots,l$.

Now, observe that the value of $\pi(h)$ affects the  depth-$t$ neighborhoods of the node $\lambda(h)$, %$\nu(h)$ as well as 
of its in-neighbors, the in-neighbors of its in-neighbors and so on, until those nodes from which $\lambda(h)$ can be reached in less than $t$ hops, for a total of at most 
$$\sum_{s=0}^{t-1}d_{\max}^s=\frac{d_{\max}^t-1}{d_{\max}-1}<\frac{d_{\max}^t}{d_{\max}-1}= c$$ %\frac a2$$ 
nodes in $\mc N$. 
Analogously, the value of $j$ affects the depth-$t$ neighborhoods of the node $\lambda(\pi^{-1}(j))$ %$\nu(\pi^{-1}(j))$ 
as well as its in-neighbors, the in-neighbors of its in-neighbors and so on, for a total of less than $c$ %$a/2$ 
nodes in $\mc N$. It follows that, if $\tilde A(t)=\sum_i\tilde Z_i(t)$ where $\tilde Z(t)$ is the state vector of the LTM dynamics on the network $\tilde N$ associated to the permutation $\tilde\pi$ in the configuration model, then 
$$|A(t)-\tilde A(t)|\le 2c \,. $$ %2a\,.$$
It then follows from \eqref{AhAh-1} and the above that 
$$|A_h-A_{h-1}| \le\l|\E\left[A(t)-\tilde A(t)|\pi_{[h]}\r]\r| 
	\le\E\l[|A(t)-\tilde A(t)||\pi_{[h]}\r]\le 2c\,.$$ %2a\,,$$
which proves \eqref{|A-A|}.
%$$\ba{rcl}|A_h-A_{h-1}|
%&=&
%\left|\E[A(t)|\pi_{[h]}]-\E\left[A(t)|\pi_{[h-1]}\right]\right|\\[5pt]
%&=&\E[A(t)|\pi_{[h]}]-\E[\tilde A(t)|\pi_{[h-1]}|\\
%&=&\E\left[\E[A(t)-\tilde A(t)|\pi_{[h]}]|\pi_{[h-1]}\right]\\
%&\le& a\,.\ea$$
The claim then follows from the Hoeffding-Azuma inequality as outlined earlier.
\qed\medskip

By combining Propositions \ref{prop:mean} and \ref{prop:concentration} we get the following result. 

\begin{theorem}\label{theo:concentration}
Let $\mc N$ be a network sampled from configuration model ensemble $\mc C_{n,\mb p}$ of size $n$ and statistics $\mb p$. Let $Z(t)$, for $t\ge0$ be the state vector of the LTM dynamics \eqref{LTM-def} on $\mc N$, let $z(t)=\frac1n\sum_{i}Z_i(t)$, and let $y(t)$ be the output of the recursion \eqref{CM-recursion}. 

%{\red Then, for $0<\eps\le\gamma_t/n$, SEEMS WRONG. INSTEAD... }

Then, for $\eps >0$ and $n \ge \gamma_t /\eps$ where $\gamma_t={d_{\max}k_{\max}^{2t+3}}/{\ov d}$, it holds true

$$|z(t)-y(t)|\le\eps $$
for all but at most a fraction $2e^{-\eps^2\beta n}$ of networks $\mc N$ from $\mc C_{n,\mb p}$, where $\beta=(32\ov dk_{\max}^{2t})^{-1}$. 
\end{theorem}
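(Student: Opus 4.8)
The plan is to derive the statement as an immediate combination of Proposition \ref{prop:mean} and Proposition \ref{prop:concentration} through a triangle inequality, splitting the error budget $\eps$ into a deterministic approximation part and a stochastic concentration part. First I would write
$$|z(t)-y(t)|\le|z(t)-\ov z(t)|+|\ov z(t)-y(t)|\,,$$
so that it suffices to control each of the two terms on the right-hand side by $\eps/2$.

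For the second, non-random term I would invoke Proposition \ref{prop:mean}, which gives the deterministic bound $|\ov z(t)-y(t)|\le\gamma_t/(2n)$ with $\gamma_t=d_{\max}k_{\max}^{2t+3}/\ov d$. The hypothesis $n\ge\gamma_t/\eps$ is exactly what turns this into $|\ov z(t)-y(t)|\le\gamma_t/(2n)\le\eps/2$. This term quantifies the discrepancy between the true ensemble average $\ov z(t)$ and the recursion output $y(t)$, which by Proposition \ref{proposition recursive} equals the expected root state of the LTM on the two-stage branching process $\mc T_{\mb p}$; the bound is ultimately inherited from the total-variation estimate of Lemma \ref{lemma:TV}.

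For the first term I would apply Proposition \ref{prop:concentration} with the same $\eps$, which guarantees that on all but a fraction $2e^{-\eps^2\beta n}$ of networks $\mc N$ drawn from $\mc C_{n,\mb p}$ one has $|z(t)-\ov z(t)|\le\eps/2$. Adding the two estimates on this ``good'' event then yields $|z(t)-y(t)|\le\eps/2+\eps/2=\eps$, which is precisely the claim, with the stated exceptional fraction $2e^{-\eps^2\beta n}$.

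Because both ingredients are already in place, there is no genuine analytic obstacle remaining at this level: the substance of the argument resides in the coupling of the configuration model with the branching process (Lemma \ref{lemma:TV}, hence Proposition \ref{prop:mean}) and in the link-exposure martingale combined with the Hoeffding--Azuma inequality (Proposition \ref{prop:concentration}). The only point that requires care is the bookkeeping of the error budget, namely forcing the deterministic term $\gamma_t/(2n)$ below $\eps/2$ through the sample-size condition $n\ge\gamma_t/\eps$, so that the full tolerance $\eps$ is attained rather than merely $\eps$ up to an additive $O(\gamma_t/n)$ slack.
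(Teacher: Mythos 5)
Your proposal is correct and follows essentially the same route as the paper's own proof: a triangle inequality splitting $|z(t)-y(t)|$ into $|z(t)-\ov z(t)|+|\ov z(t)-y(t)|$, with the first term controlled by Proposition \ref{prop:concentration} and the second by Proposition \ref{prop:mean} together with the hypothesis $n\ge\gamma_t/\eps$. No gaps.
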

\proof
Proposition \ref{prop:concentration} implies that $|z(t)-\ov z(t)|\le\eps/2$ for all but at most a fraction $2e^{-\eps^2\beta n}$ of networks from $\mc C_{n,\mb p}$. On the other hand, Proposition \ref{prop:mean} implies that $|\ov z(t)-y(t)|\le\eps/2$ for $\gamma_t/n \le \eps$. %{\red NOT  $0<\eps\le\gamma_t/n$.  }
\qed\medskip

%Then, for $0<\eps\le\gamma_t/n$, where $\gamma_t={d_{\max}k_{\max}^{2t+3}}/{\ov d}$, it holds true 
%$$|z(t)-y(t)|\le\eps $$
%for all but at most a fraction $2e^{-\eps^2\beta n}$ of networks from the ensemble, where $\beta=(16\ov dk_{\max}^{2t})^{-1}$. 
%%%\end{theorem}
%\proof
%Proposition \ref{prop:concentration} implies that $|z(t)-\ov z(t)|\le\eps/2$ for all but at most a fraction $2e^{-\eps^2\beta n}$ of networks from the ensemble. On the other hand, Proposition \ref{prop:mean} that $|\ov z(t)-y(t)|\le\eps/2$ for $0<\eps\le\gamma_t/n$.  
%\qed\medskip

\subsection{Extentions}
We conclude this section by discussing how Theorem \ref{theo:concentration} can be extended to including two variants of the model:  undirected configuration model and time-varying thresholds. 

\subsubsection{PLTM on the undirected configuration model ensemble} \label{sect:undirectedCM}
While Theorem \ref{theo:concentration} concerns the approximation of the average fraction of state-$1$ adopters in the LTM dynamics for most networks in the directed configuration model ensemble $\mc C_{n,\mb p}$, for the PLTM only the result can be extended to the undirected configuration model ensemble as defined below. 

%The undirected configuration model ensemble $\mc U_{n,\mb p}$ is defined as follows. 
Let $u_{k,r,s}=p_{k,k,r,s}$ for $k\ge0$, $0\le r\le k$, and $s\in\{0,1\}$, denote the fraction of agents of degree $k$, threshold $r$ and initial state $s$ in an undirected network. We shall refer to $\mb u=\{u_{k,r,s}\}$ as undirected network statistics. 
A network size $n$ and undirected network statistics $\mb u$ are said to be compatible if $nu_{k,r,s}$ is an integer for all $0\le r\le k$ and $s=0,1$, and $l=\sum_{k\ge0}\sum_{0\le r\le k}\sum_{s=0,1}nku_{k,r,s}$ is even. 
For compatible undirected network statistics $\mb u$ and size $n$, let $\mc V=\{1,\ldots,n\}$ be a node set and let $\kappa$, $\rho$, and $\sigma$ be  designed vectors of degrees, thresholds, and initial states, such that there is exactly a fraction $u_{k,r,s}$ of agents $i\in\mc V$ with $(\kappa_i,\rho_i,\sigma_i)=(k,r,s)$. Put $\mc L=\{1,2,\ldots,l\}$, and let $\lambda:\mc L\to\mc V$ be  a map such that $|\lambda^{-1}(i)|=\kappa_i$ for all agents $i\in\mc V$. Let $\pi$ be a uniform random permutation of $\mc L$ and let the network $\mc N=(\mc V,\mc E,\rho,\sigma)$ have node set $\mc V$, link multiset $\mc E=\{(\lambda(\pi(2h-1)),\lambda(\pi(2h))),(\lambda(\pi(2h)),\lambda(\pi(2h-1)))\,:\,1\le h\le l/2\}$, threshold vector $\rho$, and initial state vector $\sigma$. Observe that, for every realization of the permutation $\pi$, the resulting network $\mc N$ is undirected, has size $n$ and statistics $\mb u$. We refer to such network $\mc N$ as being sampled from the undirected configuration model ensemble $\mc M_{n,\mb u}$.

The key step for extending Theorem \ref{theo:concentration} to the PLTM dynamics on undirected configuration model ensemble $\mc M_{n,\mb u}$ is the following result showing that the PLTM dynamics on a rooted undirected tree coincides with PLTM dynamics on the directed version of the tree. 

\begin{lemma}\label{lemma:undirected-tree}
For every network $\mc T=(\mc V,\mc E,\rho,\sigma)$ with undirected tree topology and  % and such that $\rho_i\le\delta_i(1-\sigma_i)$ for all nodes $i$.
every node $i\in\mc V$,  the state vector $Z(t)$ of the PLTM dynamics \eqref{PLTM-def} on $\mc T$ satisfies 
$$\label{Tt} Z_i(t)= Z^{(i)}_i(t)\,,\qquad t\ge0\,,$$
where 
$Z^{(i)}(t)$ is the state vector of the PLTM dynamics on the network 
$\ora{\mc T_{(i)}}=(\mc V,\ora{\mc E_{(i)}},\rho,\sigma)$ with directed tree topology rooted in $i$, obtained from $\mc T$ by making all its links directed from nodes at lower distance from $i$ to nodes at higher distance from it. 
\end{lemma}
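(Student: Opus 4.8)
The plan is to prove a self-similar statement by structural induction on the subtrees of $\mc T$ rooted at $i$. Root $\mc T$ at $i$ and, for each node $v$, let $\mc T_v$ be the subtree of its descendants and $C(v)$ the set of its children. For each $v$ I compare two scalar trajectories: $W_v(t)$, the state of $v$ in the PLTM run on the \emph{isolated directed} subtree $\ora{\mc T_v}$ (every node observing only its children), and the state of $v$ in the PLTM run on the \emph{isolated undirected} subtree $\mc T_v$. The statement I will establish, for every $v$ and every $t\ge0$, is that these two states coincide; the Lemma is exactly the top instance $v=i$, since $\ora{\mc T_{(i)}}=\ora{\mc T_i}$ and the undirected PLTM on $\mc T$ is the undirected PLTM on $\mc T_i$. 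A preliminary and routine reduction is that in $\ora{\mc T_{(i)}}$ the subtree $\mc T_v$ receives no influence from the rest of the tree (the only crossing link points from $p(v)$ down into $\mc T_v$, so influence flows out of, not into, $\mc T_v$); hence the directed subtrees decouple and $W_v(t)=Z^{(i)}_v(t)$.

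Two ingredients drive the induction. The first is \emph{monotonicity}: writing the PLTM update in the equivalent form $Z_v(t+1)=\max\{Z_v(t),\,\1[\sum_w A_{vw}Z_w(t)\ge\rho_v]\}$ and using that $A$ has nonnegative entries, enlarging the adjacency matrix entrywise (with thresholds and initial states fixed) can only raise every state at every time. Since the undirected subtree dominates $\ora{\mc T_v}$ entrywise, this gives $Z_v(t)\ge W_v(t)$ for the undirected dynamics, for every node and time. The second, and crucial, ingredient is \emph{causality}: the undirected subtree $\mc T_v$ differs from $\ora{\mc T_v}$ only in that each child $c\in C(v)$ additionally observes its parent $v$; but within $\mc T_v$ the child $c$ is coupled to the rest solely through the single link to $v$, so $c$'s trajectory is exactly the undirected $\mc T_c$-dynamics fed by the external input $Z_v(\cdot)$. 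As a PLTM state at time $t$ depends only on inputs strictly before $t$, if $Z_v(s)=0$ for all $s<t^*$ then, up to time $t^*$, each child $c$ follows the zero-input (isolated) undirected dynamics of $\mc T_c$, which by the induction hypothesis coincides with $W_c$.

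The induction then closes as follows. Assume the claim for the strictly smaller subtrees $\mc T_c$, $c\in C(v)$, and let $t^*$ and $\hat t$ be the first activation times of $v$ in the undirected and in the directed dynamics of $\mc T_v$, respectively (both dynamics are progressive, hence eventually constant at $1$ once activated, by Lemma \ref{lemma:monotonicity}(ii)). Up to time $t^*$ the causality observation together with the induction hypothesis yields $Z_c(t)=W_c(t)$ for all children $c$, so the activation condition $\sum_c Z_c(t^*-1)\ge\rho_v$ transfers verbatim to the directed dynamics, giving $\hat t\le t^*$; conversely, the monotonicity inequality $Z_c\ge W_c$ evaluated at time $\hat t-1$ gives $t^*\le\hat t$. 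Hence $t^*=\hat t$ and, by progressivity, the two trajectories agree at all times, closing the induction. The main obstacle — and the only genuinely non-trivial point — is the causality step: it is exactly here that the acyclicity of the tree is used to rule out that the parent-to-child feedback present in the undirected model could activate a child, and through it the root, strictly earlier than the root would activate on its own. Progressivity guarantees such feedback can occur only after the root is already in state $1$, so it never advances the root's activation.
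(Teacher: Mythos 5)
Your proof is correct, and it rests on the same two pillars as the paper's own argument: the decomposition of the tree rooted at $i$ into the child subtrees, and the causality observation that, by progressivity, as long as the root is still in state $0$ each child's subtree evolves exactly as if it were isolated, hence exactly as in the directed tree. The difference lies in how the induction is organized. The paper inducts on time $t$, with the hypothesis quantified over all undirected trees; it splits on whether $Z_i(t)$ equals $0$ or $1$, and in the former case the inductive hypothesis applied to the restricted subtrees yields the exact identity $Z_j(t)=Z^{(i)}_j(t)$ for every neighbor $j$, so that no comparison argument is ever needed. You instead perform a structural induction on subtrees and compare first activation times, which forces you to establish the two inequalities $\hat t\le t^*$ and $t^*\le\hat t$ separately; the latter requires monotonicity of the PLTM under entrywise enlargement of the adjacency matrix, a property that is not literally Lemma \ref{lemma:monotonicity} (which is stated for varying initial conditions only) but whose proof is the same one-line induction, and which you state and justify correctly. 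Two cosmetic points: eventual constancy of the PLTM is immediate from its definition (only $0\to1$ transitions are allowed) rather than from Lemma \ref{lemma:monotonicity}(ii), and the trivial base case at the leaves deserves a sentence. Overall your version is slightly longer but makes the key mechanism --- that parent-to-child feedback cannot advance the parent's own activation --- more explicit, while the paper's time induction is more economical because it works only with equalities.
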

\proof
We proceed by induction on $t$. The case $t=0$ is trivial as the initial condition is the same $Z_i(0)=\sigma_i=Z_i^{(i)}(0)$ for all $i\in\mc V$. 
Now, assuming that, for some given $t\ge0$, the PLTM dynamics on every network with undirected tree topology satisfies 
$$Z_i(t)=Z^{(i)}_i(t)\,,\qquad \forall i\in\mc V$$ 
we will prove that 
$$Z_i(t+1)=Z^{(i)}_i(t+1)\,,\qquad \forall i\in\mc V$$ 
%\be\label{Tt+1} \,,\qquad \forall i\in\mc V\,,\ee
for all networks with undirected tree topology $\mc T=(\mc V,\mc E,\rho,\sigma)$. 
We separately deal with the two cases: (a) $Z_i(t)=Z^{(i)}_i(t)=1$; and (b) $Z_i(t)=Z^{(i)}_i(t)=0$.  
Since we are considering the PLTM dynamics, case (a) is easily dealt with, as $Z_i(t)=1=Z^{(i)}_i(t)$ implies 
$Z_i(t+1)=1=Z^{(i)}_i(t+1)$. On the other hand, in order to address case (b), let $\mc J$ be the set of 
neighbors of $i$ in $\mc T$, which coincides with the set of offsprings of node $i$ in $\ora{\mc T_{(i)}}$. 
For every $j\in\mc J$, let $\ora{\mc T_{(i,j)}}=(\mc V_{(i,j)},\ora{\mc E_{(i,j)}},\sigma,\rho)$ be the network obtained by restricting $\ora{\mc T_{(i)}}$ to node $j$ and all its offsprings, let ${\mc T}_{(i,j)}=(\mc V_{(i,j)},\mc E_{(i,j)},\sigma,\rho)$ be the undirected version of $\ora{\mc T_{(i,j)}}$, and let $W(t)$ and $W^{(j)}(t)$ be the vector states of the PLTM dynamics on ${\mc T}_{(i,j)}$ and $\ora{\mc T_{(i,j)}}$, respectively. Now, note that $Z^{(i)}_j(t)=W^{(j)}_j(t)$, since $j$ has the same $t$-depth neighborhood in the two networks.
%since they both have the same dependance on the thresholds $\rho_h$ and the initial states $\sigma_h$ of agents $h\in\mc V_{(i,j)}$. 
On the other hand, note that, if the state of the PLTM dynamics on $\mc T$ is such that $Z_i(t)=0$, then $Z_i(s)=0$ for all $0\le s\le t$, so that the state of node $j$ in the PLTM dynamics on $\mc T$ depends only on the thresholds $\rho_h$ and the initial states $\sigma_h$ of agents $h\in\mc V_{(i,j)}$, and is the same as the state of node $j$ in PLTM dynamics on the original network $\mc T_{(i,j)}$, i.e., $Z_j(t)=W_j(t)$. Finally, observe that the inductive assumption applied to the restricted network $\mc T_{(i,j)}$ implies that $W_j(t)=W^{(j)}_j(t)$.
It then follows that, if $Z_i(t)=Z^{(i)}_i(t)=0$, then 
$$Z_j(t)=W_j(t)=W^{(j)}_j(t)=Z^{(i)}_j(t)\,,\qquad \forall j\in\mc J\,.$$
This implies, by the structure of the recursive equation \eqref{PLTM-def} that $Z_i(t+1)=Z_i^{(i)}(t+1)$. This completes the proof.
%Form the above it follows that if, $Z_i(t)=Z^{(i)}_i(t)=0$, then $Z_i(t+1)=Z^{(i)}_i(t+1)$,  %if $Z_i(t+1)=Z^{(i)}_i(t+1)=0$, 
%since they have the same dependance on the threshold $\rho_i$ and the current states $Z_j(t)=Z^{(i)}_j(t)$ of the neighbor agents $j\in\mc J$. Hence case (b) is covered and this concluded the induction proof. 
\qed

Using Lemma \ref{lemma:undirected-tree} it is straightforward to extend Proposition  \ref{proposition recursive} to the undirected two-stage branching process. Then, the results in Section \ref{subsect:conc} carry over to the undirected configuration model ensemble without signficant changes, leading the following result. 

\begin{theorem}\label{theo:concentration2}
Let $\mc N$ be a network sampled from the undirected configuration model ensemble $\mc M_{n,\mb u}$ of size $n$ and statistics $\mb u$. Let $Z(t)$, for $t\ge0$ be the state vector of the PLTM dynamics \eqref{PLTM-def} on $\mc N$, let $z(t)=\frac1n\sum_{i}Z_i(t)$, and let $y(t)$ be the output of the recursion \eqref{CM-recursion}. 
Then, for $\eps >0$ and $n \ge \gamma_t /\eps$ where $\gamma_t={k_{\max}^{2t+4}}/{\ov k}$, it holds true 
$$|z(t)-y(t)|\le\eps $$
for all but at most a fraction $2e^{-\eps^2\beta n}$ of networks $\mc N$ from the $\mc M_{n,\mb u}$, where $\beta=(32\ov kk_{\max}^{2t})^{-1}$. 
\end{theorem}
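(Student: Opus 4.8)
The plan is to reproduce, in the undirected setting, the same three-step program that yielded Theorem~\ref{theo:concentration} for the directed ensemble, using Lemma~\ref{lemma:undirected-tree} as the bridge that makes the directed tree analysis applicable. The conceptual point is that, although the undirected configuration model $\mc M_{n,\mb u}$ carries no intrinsic link orientation, the PLTM restricted to its (locally tree-like) neighborhoods can be computed \emph{as if} the tree were oriented away from the node whose state we are tracking: this is exactly the content of Lemma~\ref{lemma:undirected-tree}, and it is the step that fails for the non-progressive LTM (cf.\ the remark on \cite{KM:2011} in Section~\ref{sec:intro}), which is why the statement is confined to the PLTM.

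First I would introduce the undirected two-stage branching process associated with $\mb u$ and prove the analogue of Proposition~\ref{proposition recursive}: the root node's state $Z_{v_0}(t)$ under the PLTM is a Bernoulli variable of mean $y(t)$, while the states of the offsprings of any node are i.i.d.\ Bernoulli of mean $x(t)$, with $x(t),y(t)$ solving \eqref{CM-recursion} built from $\mb u$. The only genuinely new bookkeeping relative to the directed case is that a non-root node is now reached along one of its undirected incident edges; Lemma~\ref{lemma:undirected-tree} guarantees that, whenever the tracked node has remained in state $0$ up to time $t$, the influence flowing back along that edge is immaterial, so that the root-state recursion closes as in Proposition~\ref{proposition recursive}.

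Next I would establish the undirected analogue of Lemma~\ref{lemma:TV}. The coupling is built, as before, from a sampling-with-replacement versus sampling-without-replacement comparison, now applied to the uniform matching of the half-edge set $\mc L$ rather than to a permutation wiring tails to heads; the depth-$t$ neighborhood $\mc N_t$ of a uniform random node is compared with the depth-$t$ truncation of the undirected branching process. Since each node's degree $\kappa_i=\delta_i$ now plays the role of both in- and out-degree, the growth bound $H_t\le k_{\max}^{t+1}$ and the per-step discrepancy estimate $\varsigma_h\le (h+1)k_{\max}/l$ carry over with $d_{\max}$ replaced by $k_{\max}$ and $\ov d$ by $\ov k$, yielding $\|\mu_{\mc N_t}-\mu_{\mc T_{\mb p,t}}\|_{TV}\le \gamma_t/(2n)$ with $\gamma_t=k_{\max}^{2t+4}/\ov k$. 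The mean estimate $|\ov z(t)-y(t)|\le \gamma_t/(2n)$ then follows verbatim as in Proposition~\ref{prop:mean}, using that $Z_i(t)=\chi(\mc N_t^i)$ is a deterministic $\{0,1\}$-valued function of the depth-$t$ neighborhood.

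Finally I would reprove the concentration estimate of Proposition~\ref{prop:concentration} through a Doob (edge-exposure) martingale $A_0,\dots,A_{l/2}$ obtained by unveiling the matching one edge at a time, with $A_0=\ov a(t)$ and $A_{l/2}=a(t)$. Re-matching a single half-edge perturbs the depth-$t$ neighborhoods of at most $2c$ nodes, where $c<k_{\max}^{t}/(k_{\max}-1)$ because reachability within fewer than $t$ hops now propagates along undirected edges of degree at most $k_{\max}$; this bounds the martingale increments, and the Hoeffding--Azuma inequality delivers $|z(t)-\ov z(t)|\le \eps/2$ off an exceptional set of measure at most $2e^{-\eps^2\beta n}$ with $\beta=(32\ov k k_{\max}^{2t})^{-1}$. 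Combining this with the mean estimate via the triangle inequality, exactly as in the proof of Theorem~\ref{theo:concentration}, gives $|z(t)-y(t)|\le\eps$ for $n\ge\gamma_t/\eps$. I expect the main obstacle to be not the coupling or concentration steps, which are mechanical transcriptions with $d_{\max}$ replaced by $k_{\max}$, but the careful verification that the offspring-state recursion on the undirected branching process really closes to the same functions $\phi,\psi$ of \eqref{CM-recursion} — this is precisely the point where progressivity and Lemma~\ref{lemma:undirected-tree} are indispensable.
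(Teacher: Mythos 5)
Your proposal is correct and follows essentially the same route as the paper: the paper's own proof of Theorem~\ref{theo:concentration2} is precisely the remark that Lemma~\ref{lemma:undirected-tree} lets Proposition~\ref{proposition recursive} extend to the undirected two-stage branching process, after which the coupling, mean-approximation, and martingale-concentration arguments of Section~\ref{subsect:conc} carry over with $d_{\max}$ and $\ov d$ replaced by $k_{\max}$ and $\ov k$, yielding the stated $\gamma_t$ and $\beta$. You have simply filled in the details of that remark, including the correct identification of progressivity (via Lemma~\ref{lemma:undirected-tree}) as the one step that does not survive for the general LTM.
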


We stress the fact that the proposed extension of the approximation results for the undirected configuration model ensemble is strictly limited to the PLTM and does not apply to the general LTM. The key step where the structure of the PLTM model is used is in the proof of Lemma \ref{lemma:undirected-tree} which allows one to reduce the study of the PLTM on undirected trees to the one of PLTM on directed trees. An analogous results does not hold true for the LTM without permanent activation and indeed the analysis on undirected trees is known to face relevant additional challenges as illustrated in \cite{KM:2011} for the majority dynamics (that can be considered a special case of the LTM).

\subsubsection{Time-varying thresholds}
We first observe that, while we have not made it explicit yet, all the results discussed in this section carry over, along with their proofs, also for networks with time-varying thresholds $\rho_i(t)$. In this case, the network statistics
$$p_{d,k,r,s}(t)=\frac1n\left|\{i\in\mc V:\,\delta_i=d,\,\kappa_i=k,\,\rho_i(t)=r,\,\sigma_i=s\}\right|\,,\qquad d\ge0\,,\ 0\le r\le k\,,\ s=0,1\,,$$
become time-varying, and so do their marginals 
\be\label{def:degree-distribution-t}p_{k,r}(t):=\sum_{d\ge0}\sum_{s=0,1}p_{d,k,r,s}(t)\,,\qquad q_{k,r}(t):=\frac{1}{\ov d}\sum_{d\ge0}\sum_{s=0,1}dq_{d,k,r,s}(t)\,,\qquad k,r\ge0\,.\ee
In contrast, the marginal $p_{d,k,s}=\sum_{0\le r\le k}p_{d,k,r,s}(t)$ remain constant in time since so do the degrees $\delta_i$ and $\kappa_i$ and the initial states $\sigma_i$ of all agents $i$. 
For networks with such time-varying thresholds, Theorem \ref{theo:concentration} continues to hold true provided that $y(t)$ is interpreted as the  output of the modified recursion 
\be\label{TV-recursion}x(t+1)=\phi(x(t),t)\,,\qquad y(t+1)=\psi(x(t),t)\,,\qquad t\ge0\,,\ee
where
$$\phi(x,t):=\sum_{k\ge0}\sum_{r\ge0} q_{k,r}(t)\varphi_{k,r}(x)\,,\qquad \psi(x,t):=\sum_{k\ge0}\sum_{r\ge0} p_{k,r}(t)\varphi_{k,r}(x)\,.$$

A note of caution concerns extensions of Lemma \ref{lemma:LTM=PLTM} to networks with time-varying thresholds. This result, allowing one to identify the LTM dynamics with the \textit{progressive} LTM (PLTM) dynamics whenever the condition $\rho_i\le \delta_i(1-\sigma_i)$ is met for all agents $i$, continues to hold true for time-varying networks only with the additional assumption that the thresholds are monotonically non-increasing in time, i.e., $\rho_i(t+1)\le\rho_i(t)$ for every node $i$ and time instant $t\ge0$. It is also worth stressing that the analysis of Section \ref{sec:LTM-BR} for the asymptotic behavior of the recursion \eqref{CM-recursion} does not carry over as such to the time-varying case \eqref{TV-recursion}.

\section{Numerical simulations on a real large-scale social network} 
\label{sec:simulations}

We test the prediction capability of our theoretical results for the Linear Threshold Model (LTM) on a real large-scale online social network. We consider the directed interconnection topology of the online social network \url{Epinions.com}, we endow each node with a threshold and  assign an initial states, then run the LTM \eqref{LTM-def} and compare the results with the predictions obtained using the recursion \eqref{CM-recursion}. 

The online social network \url{Epinions.com} was a general consumer review website with a community of users, operating from 1999 until 2014. 
The members of the community were encouraged to submit product reviews for any of over one hundred thousand products, to rate other reviews and to list the reviewers they trusted. 
The directed graph of trust relationships between users, called the ``Web of Trust'', was used in combination with the review's ratings to determine which reviews were shown to the user.  
Being highly connected and containing cycles, the graph remains an interesting source for experiments on social networks and viral marketing \cite{ RAD:2003:snap-source, RD:2002:miningKS}.

The entire ``Web of Trust'' directed graph $\mc G = (\mc V,\mc E)$ of the \url{Epinions.com} social network was obtained by crawling the website \cite{RD:2002:miningKS} and is available from the online collection \cite{SNAP:2014}. 
The dataset\footnote{Retrieved from \url{http://snap.stanford.edu/data/soc-Epinions1.html}. The page contains further informations and statistics about the dataset and mentions \cite{RAD:2003:snap-source} as original source. Further statistics can be retrieved from \url{http://konect.uni-koblenz.de/networks/soc-Epinions1}.} is a list of directed edges expressed as pairs $(i,j)$, representing the \textit{who-trust-whom} relations between users: the list contains \numprint{508 837} directed edges corresponding to $n = \numprint{75 879}$ different users. There are no other information for the LTM (e.g. thresholds or initial states).

From the dataset topology, we computed the empirical joint degree statistic  
$$p_{d,k} = n^{-1} \l| \{i : \delta_i = d, \kappa_i = k \} \r| \,,$$
i.e., the fractions of nodes with in-degree $d$ and out-degree $k$.
Figure~\ref{fig:epinions-empirical-stat} represents the in-degree statistics $p_d = \sum_k p_{d,k}$ and the out-degree statistic $p_k =\sum_d p_{d,k}$; both  follow an approximate power law distribution with exponent $\approx 1.6$.
A few nodes have no in-neighbors or out-neighbors, while the maximum in-degree is \numprint{3035} and the maximum out-degree is \numprint{1801}. The average in/out-degree is {6.705}.

We also computed the fraction of links pointing to nodes with given in-degree $d$  and out-degree $k$, i.e. in-degree weighted, joint degree statistic $q_{d,k} = d p_{d,k} / \ov d$. The values of the joint degree statistics $p_{d,k}$ and $q_{d,k}$, in the interval $d, k \in \{0,1,\ldots,150\}$, are represented by a logarithmic grayscale  in Figure~\ref{fig:epinions-joint-pdk-qdk},  showing a mild correlation between in-degree and out-degree.

\begin{figure}%[htb]
	\centering
	%\fbox{%
	\includegraphics[trim={\figtriml} {60mm} {\figtrimr} {\figtrimt},clip, width={\figwidthduo},
		keepaspectratio=true]{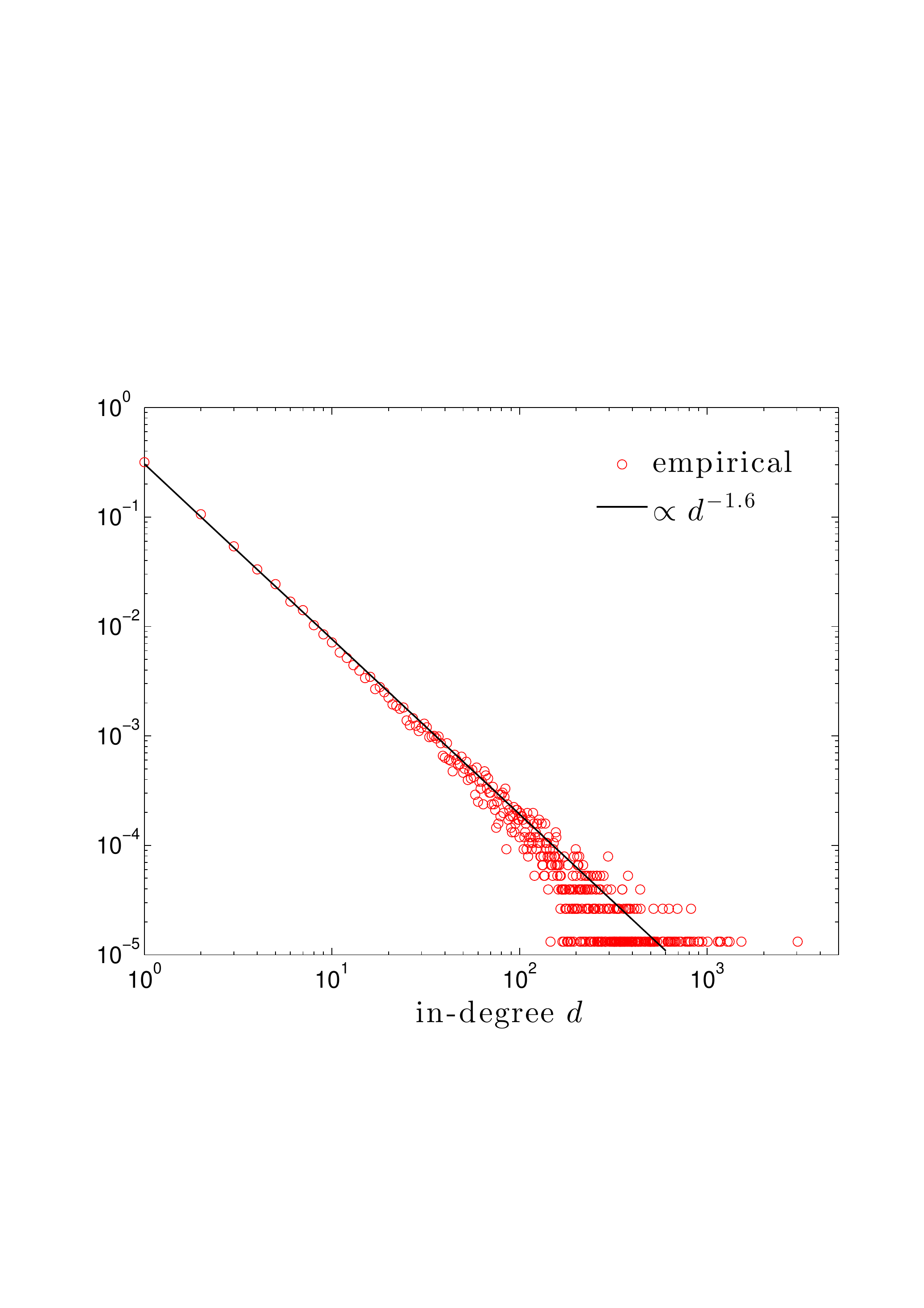}%
	%}%
	\hspace{1cm}%
	%\fbox{%
	\includegraphics[trim={\figtriml} {60mm} {\figtrimr} {\figtrimt},clip, width={\figwidthduo},
		keepaspectratio=true]{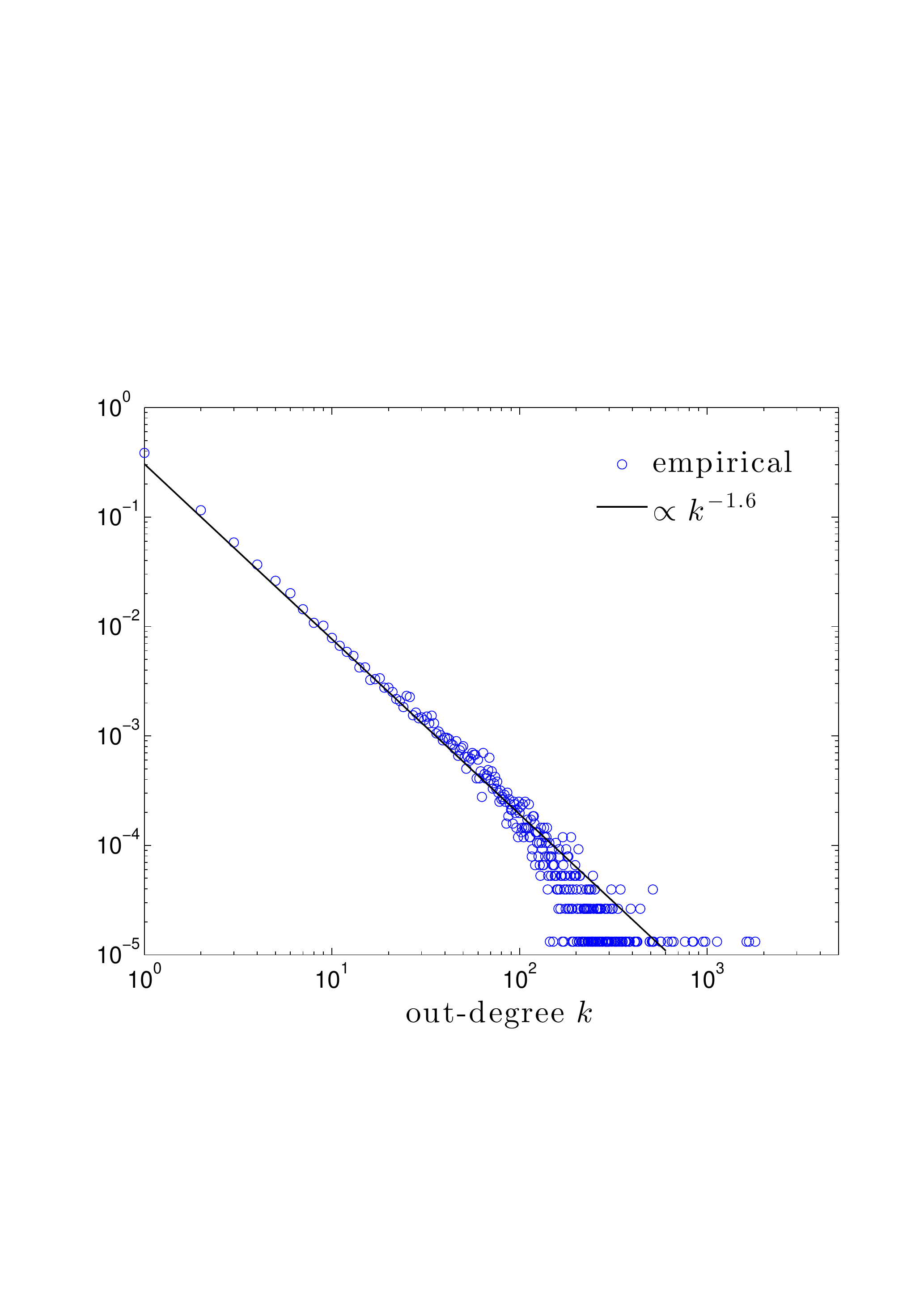}%
	%}
	\caption{\label{fig:epinions-empirical-stat} 
	The degree statistics of the \url{Epinions.com} social network. 
	The left plot represents the in-degree statistic $p_d = \sum_k p_{d,k}$ with red circles; the black solid line is proportional to $d^{-1.6}$. Not represented in the logarithmic plot, a fraction $p_0 = 0.315 $ of nodes has no in-neighbors. 
	The right plot represents the out-degree statistic $p_k = \sum_d p_{d,k}$ with blue circles; the black solid line is proportional to $d^{-1.6}$. A fraction $p_0 = 0.205$ of nodes (not represented)  has no out-neighbors.  }
\end{figure}

%{\red propor PDK nella fig = 0.990,   proporz QDK = 0.647 }
\begin{figure}%[htb]
	\centering
	%\fbox{%
	\includegraphics[trim={\figtriml} {60mm} {\figtrimr} {\figtrimt},clip, width={\figwidthduo},
		keepaspectratio=true]{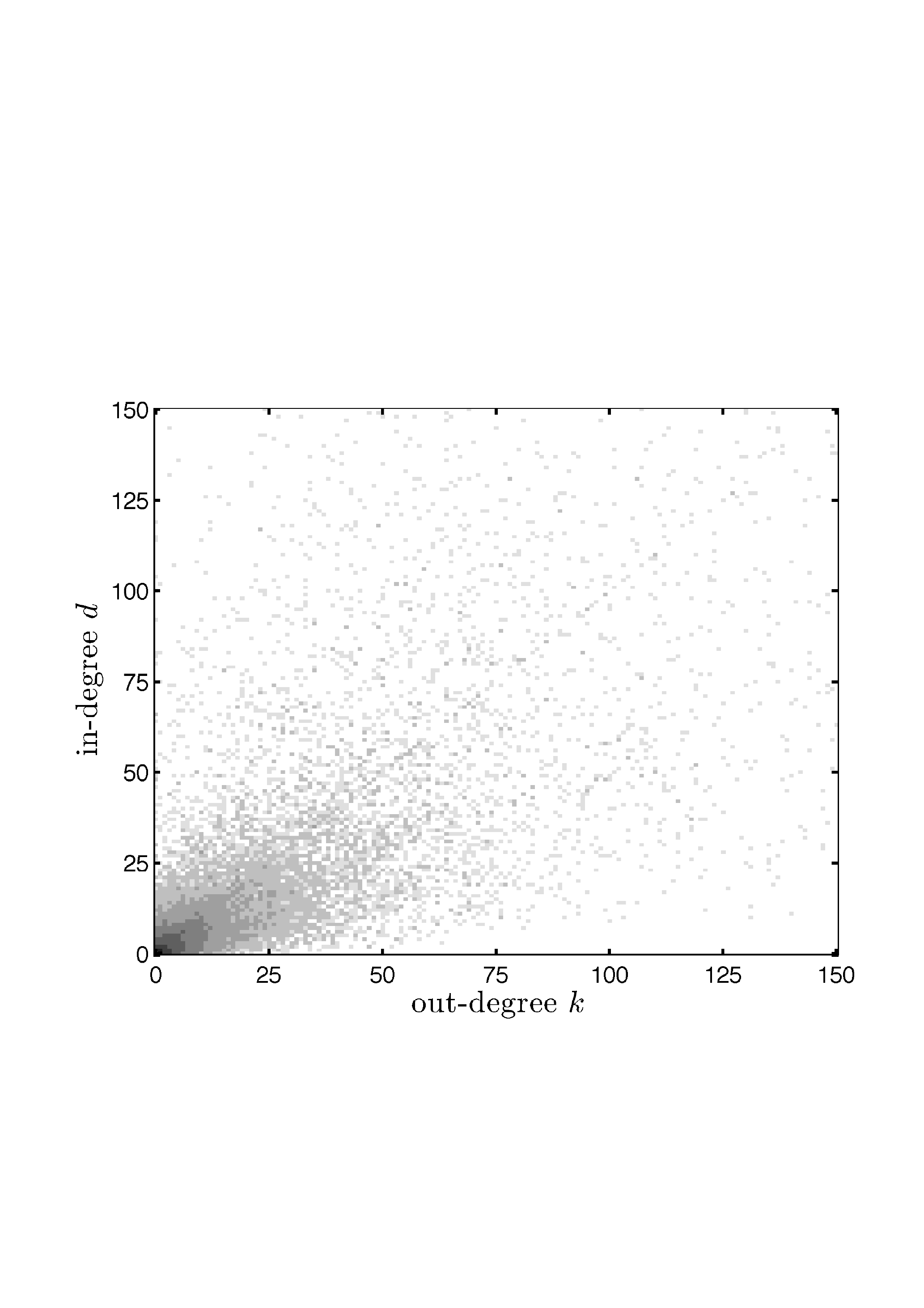}%
	%}%
	\hspace{1cm}%
	%\fbox{%
	\includegraphics[trim={\figtriml} {60mm} {\figtrimr} {\figtrimt},clip, width={\figwidthduo},
		keepaspectratio=true]{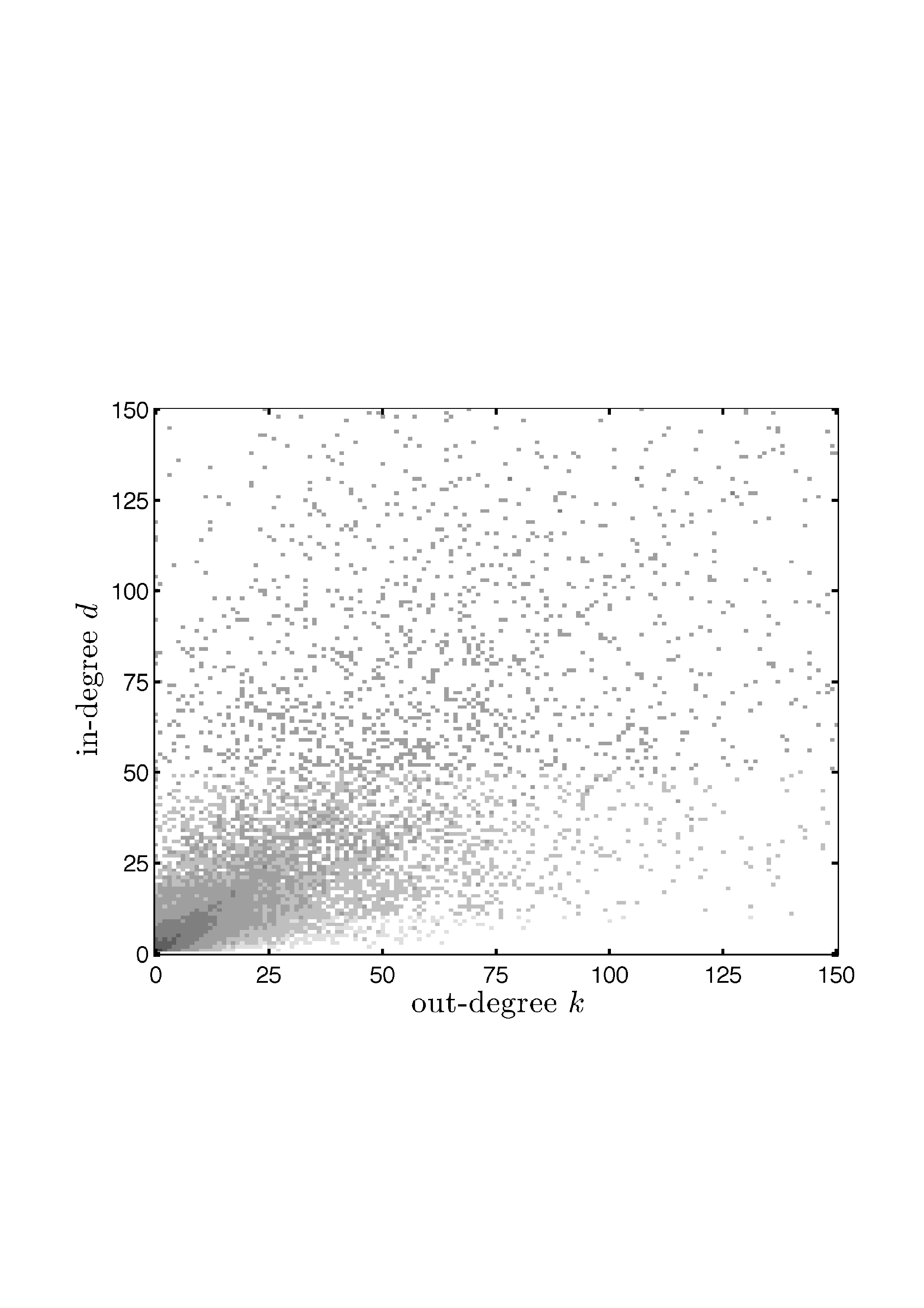}%
	%}
	\caption{\label{fig:epinions-joint-pdk-qdk} 
	A pictorial representation, with a logarithmic grayscale, of the joint degree statistics' values $p_{d,k}$ (left plot) and of the in-degree weighted, joint degree statistics' values $q_{d,k} = d p_{d,k} / \ov d$ (right plot) in the interval $0 \leq d, k \leq 150$, for the \url{Epinions.com} social network.
	Note that, the values of $p_{d,k}$ in the interval $0 \leq d, k \leq 150$ (i.e. those represented in the left plot) add to 99.0\% of the full statistics; for $q_{d,k}$ they add to 64.7\% only (right plot).}
\end{figure}

\subsection{The simulations and comparison with the recursion}

The dataset contains the interconnection topology $\mc G=(\mc V,\mc E)$ of the \url{Epinions.com} social network, but no information about thresholds and initial condition for an hypothetical LTM process. Hence, to simulate the LTM we have to combine the topology with a vector of thresholds and initial states. 
In this subsection, we describe how we choose the missing information and present three simulations.

First, we consider a vector $\Theta \in [0,1]^n$, with $n = |\mc V|$, of normalized thresholds with given cumulative distribution function $F(\theta):=\frac1n|\{i:\,\Theta_i\le\theta\}|$, such that $F(\theta)$ is non-decresing, right-continuous, with $F(\theta)=0$ for $\theta<0$ and $F(\theta)=1$ for $\theta\ge1$.
Given the fraction $\ups\in [0,1]$, we also consider the binary vector $\Sigma \in \{0,1\}^{n}$ such that $\ups = \frac1n \sum_i \Sigma_i$, i.e. a fraction $\ups$ of entries is equal to one. 

Then we define the network $\mc N=(\mc V,\mc E,\rho,\sigma)$ as follows. The set of agents $\mc V$ and the set of links $\mc E$ are those of the \url{Epinions.com} dataset. 
Let $\pi'$ and $\pi''$ be two independent and uniformly chosen permutation on the set $\mc V=\{1,2,\ldots,n \}$. 
The threshold vector $\rho$ has entries $\rho_i = \lceil \Theta_{\pi'(i)}\kappa_i \rceil$ where $\kappa_i$ is the out-degree of node $i$, i.e. the threshold vector corresponds to a permutation of the normalized threshold vector.  %, obtained from the permutation of $\Theta$.
The vector of initial states $\sigma$ has components $\sigma_i = \Sigma_{\pi''(i)}$, i.e. is a permutation of $\Sigma$. %, i.e. has component $\sigma_i = \Sigma_{\pi''(i)}$.
Given the network $\mc N=(\mc V,\mc E,\rho,\sigma)$, the LTM \eqref{LTM-def} is a deterministic process: we compute the evolution of the configuration $Z(t) \in \{0,1\}^n$ (which may not converge) until a fixed time horizon $T$. 
%Note that the configuration $Z(t)$ may not converge.
%
From the configuration $Z(t)$ we compute the fraction of state-$1$ adopters at time $t$, i.e. $z(t):=\frac1n\sum_{i}Z_i(t)$. 
To discuss the simulations, we also compute the fraction of links pointing to state-$1$ adopters at time $t$, i.e. $$a(t):=\frac{1}{|\mc E|}\sum_{i}\delta_i Z_i(t)\,,$$
where $\delta_i$ is the in-degree of node $i$.
% We compute the Linear Threshold Model evolution, for a given $\ups$ we extract the repeat the initial states assignment (and the corresponding simulations) multiple times.
For a given cumulative distribution function $F(\theta)$ of the normalized threshold and fraction $\ups$ of initially active nodes, we repeat a few times the extraction of the permutations $\pi'$ and $\pi''$ (that establish the specific thresholds and initial states assignment) and the computation of the LTM evolution.

We will compare the simulations with the prediction obtained by the recursion \eqref{CM-recursion}. The recursion requires the \emph{network's statistics}  $\mb p=\{ \pdkrx \}$, as defined in \eqref{def:joint-distribution}, and the initial condition $\xi$ defined in \eqref{upsxi}.
%by
%$$ 
%p_{d,k,r,s}=\frac1n\left|\{i\in\mc V:\,\delta_i=d,\,\kappa_i=k,\,\rho_i=r,\,\sigma_i=s\}\right|\,,\qquad d\ge0\,,\ 0\le r\le k\,,\ s=0,1\,,$$
%for a network $\mc N=(\mc V,\mc E,\rho,\sigma)$ of size $n$.
%stands for the fraction of agents having in-degree $d$, out-degree $k$, threshold $r$, and initial state $s$
We stress that in the simulations we assign the normalized thresholds and the initial condition using two permutations $\pi'$ and $\pi''$ chosen independently and uniformly at random among those over the set $\{1,2,\ldots, n\}$. % random two  permutations $\pi'$ and $\pi''$. 
Hence, \textit{a priori}, the  elements of network's statistics $\mb p$ take the form
\begin{align} \label{eq:sim-joint-distrib}
\l\{\begin{array}{ll}
 p_{d,k,r,s}  = p_{d,k} 
	\l( F( \textstyle{\frac{r}{k}} )-F( \textstyle{\frac{r-1}{k}} )\r) 
	\l( \ups\mathbbm{1}_{s=1}  + (1-\ups)\mathbbm{1}_{s=0}\r) 
	& d\ge0,\, k>0, \, 0\le r\le k,\, s=0,1\,\\
 p_{d,0,0,s}  = p_{d,0}  
	\l( \ups\mathbbm{1}_{s=1}  + (1-\ups)\mathbbm{1}_{s=0}\r) 
	& d\ge0,\, s=0,1 
\end{array} \r. \end{align}
where $p_{d,k}$ is the joint degree statistic corresponding the  \url{Epinions.com} graph $\mc G = (\mc V, \mc E)$. 
%, $F(\theta)$ is the cumulative distribution function of the 
Consequently, \textit{a priori}, we obtain the values of the fractions $p_{k,r}$ and $q_{k,r}$, that enter in the definition of the recursion's functions $\phi(x)$ and $\psi(x)$, by plugging in their definitions \eqref{def:degree-distribution}
the above \textit{a priori} network's statistics $\mb p$ \eqref{eq:sim-joint-distrib}.
Finally, the seed $\xi$, initial condition of the recursion, \textit{a priori} coincides with $\ups$, i.e. $\xi = \ups$, because the permutation $\pi''$ is independent from the in-degree of the nodes.

In the following we describe three group of simulations. We will denote with $h(x)$ the right-continuous unit step function 
$$ h(x) :=\l\{ \begin{array}{ll} 0, & x<0 \\ 1, & x\geq 0\,. \end{array} \r. $$

\begin{example} % esempio con tutti aventi soglia relativa 1/2

\begin{figure}
	\centering
	%\fbox{%
	\includegraphics[trim={\figtrimla} {\figtrimba} {\figtrimra} {\figtrimta},clip, width={\figwidthduo},
		keepaspectratio=true]{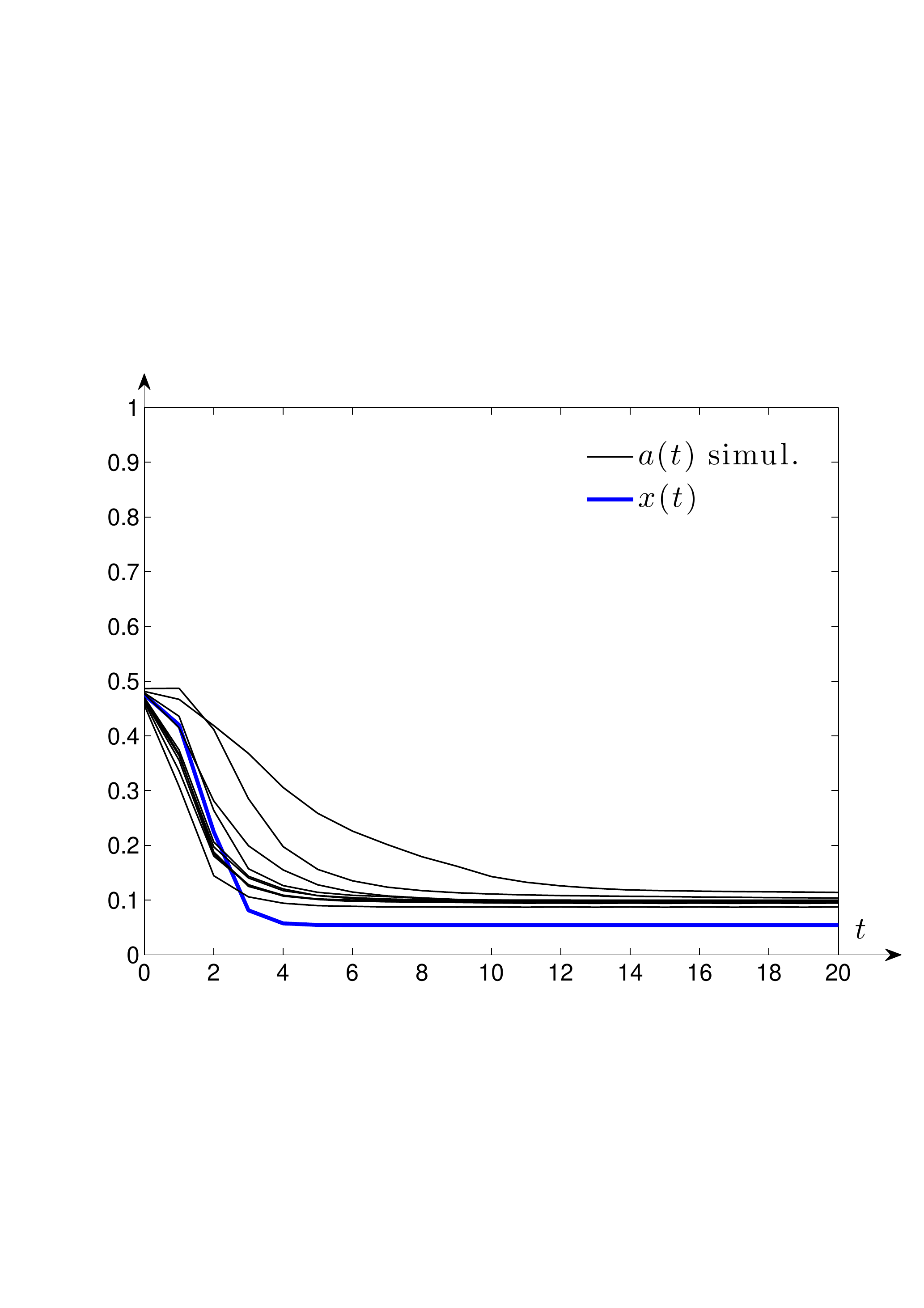}%
	%}
	\hspace{1cm}%
	%\fbox{%
	\includegraphics[trim={\figtrimla} {\figtrimba} {\figtrimra} {\figtrimta},clip, width={\figwidthduo},
		keepaspectratio=true]{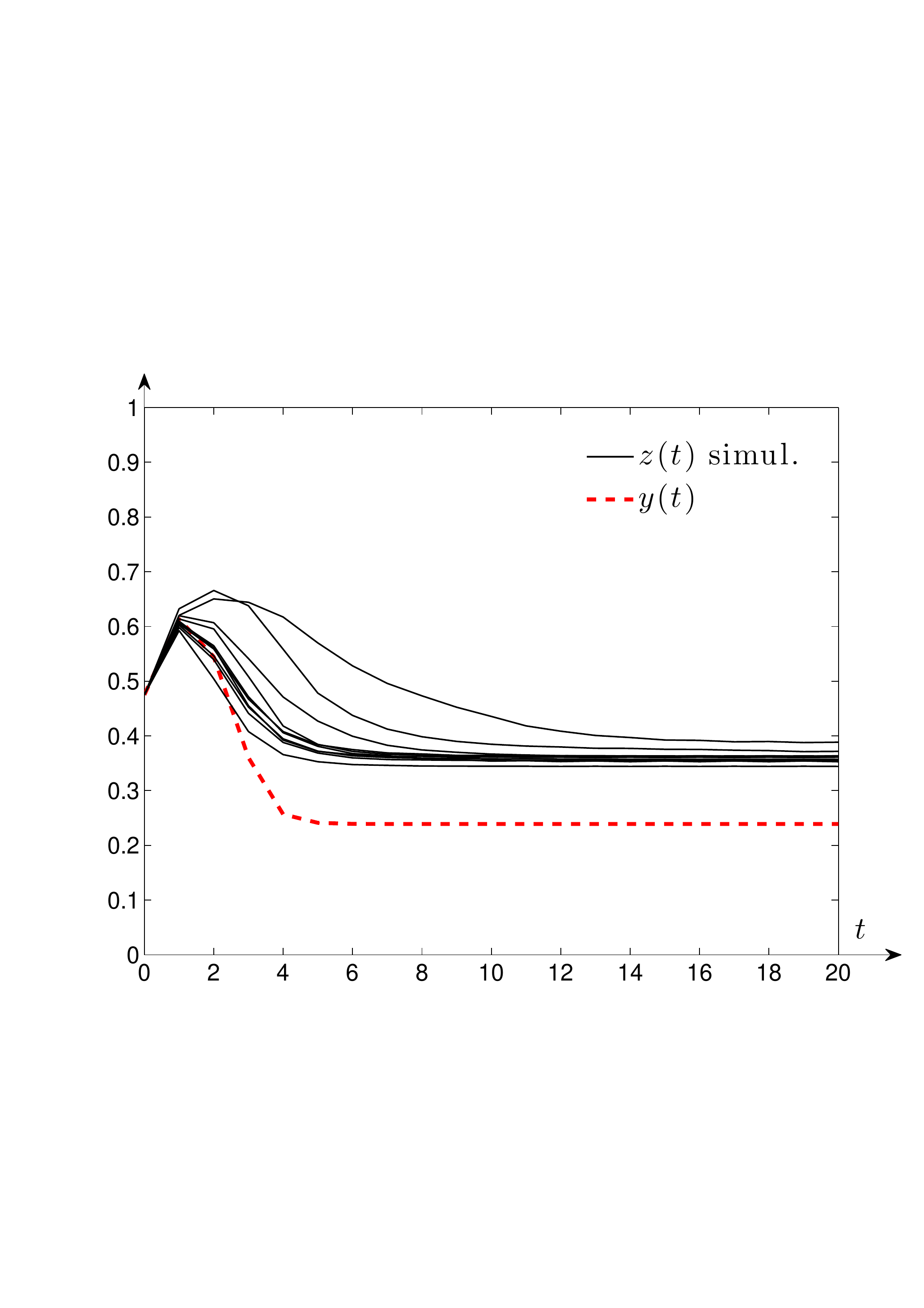}%
	%}
	\caption{\label{fig:epinions-sigle-dyn} Simulations of the LTM dynamics on the \url{Epinions.com} topology, with agents endowed by the thresholds $\rho_i = \lceil \frac12 \kappa_i \rceil$. The initial states are randomly selected, conditioned on a fraction $\ups = 0.475$ of nodes having $\sigma_i = 1$. The left plot compares the simulations of the fraction of links pointing to state-$1$ adopters $a(t)$ (thin black lines) with the recursion's state dynamic $x(t)$ (thick blue line), initialized with seed $\xi = \ups$. Simulations and recursion agree fairly well.
	The right plot reports the simulated fraction of state-$1$ adopters $z(t)$ (thin black lines) to be compared with the  recursion's output dynamic $y(t)$ (dashed red line).
	The recursion captures the qualitative behavior of the simulation, with a mismatch of about 15\% in the settling value. A close look reveals that several simulations show a little ripple with period two. }
\end{figure}

\begin{figure}
	\centering
	%\fbox{%
	\includegraphics[trim={\figtrimla} {\figtrimba} {\figtrimra} {\figtrimta},clip, width={\figwidthduo},
		keepaspectratio=true]{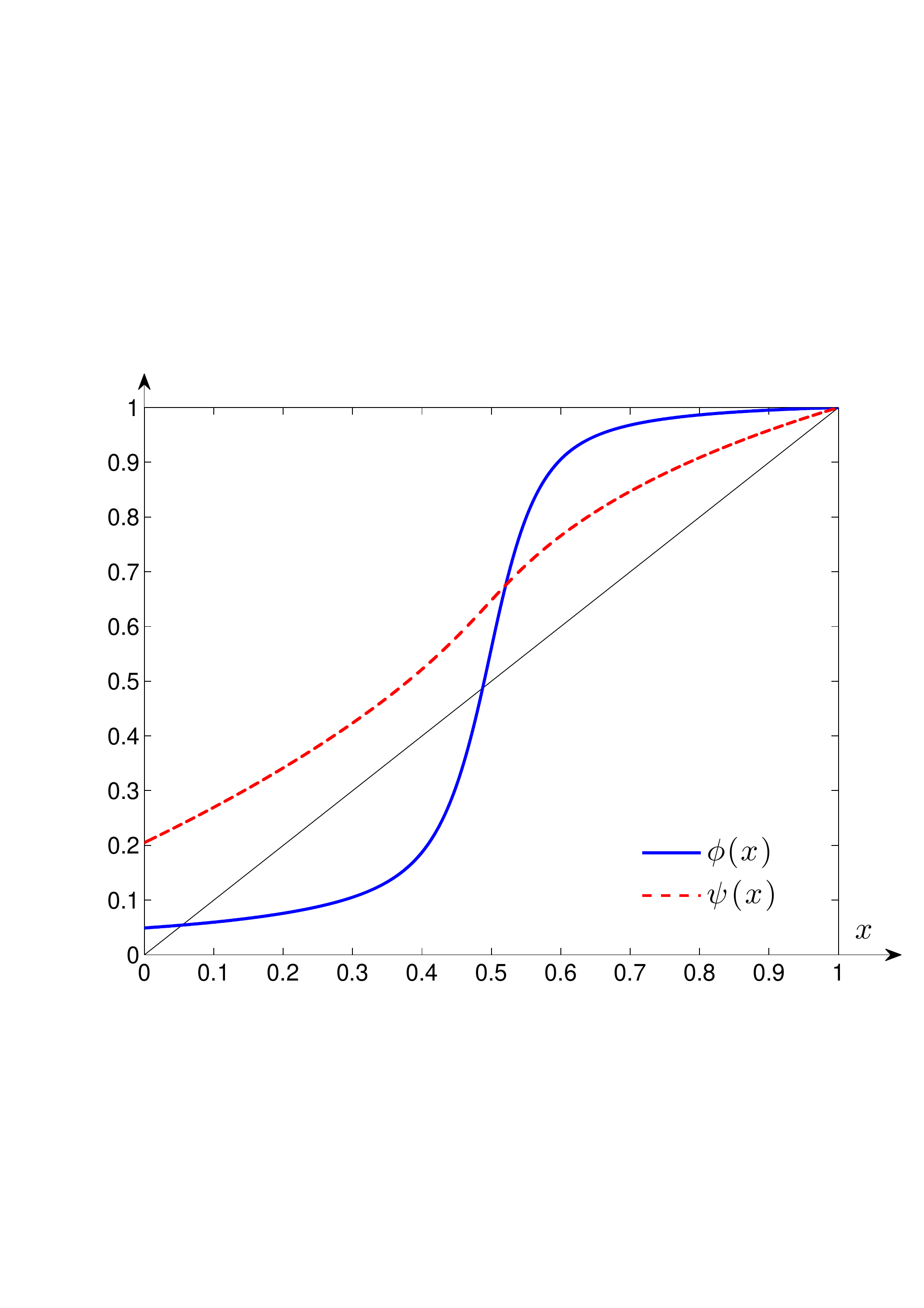}%
	%}%
	\hspace{1cm}%
	%\fbox{%
	\includegraphics[trim={\figtrimla} {\figtrimba} {\figtrimra} {\figtrimta},clip, width={\figwidthduo},
		keepaspectratio=true]{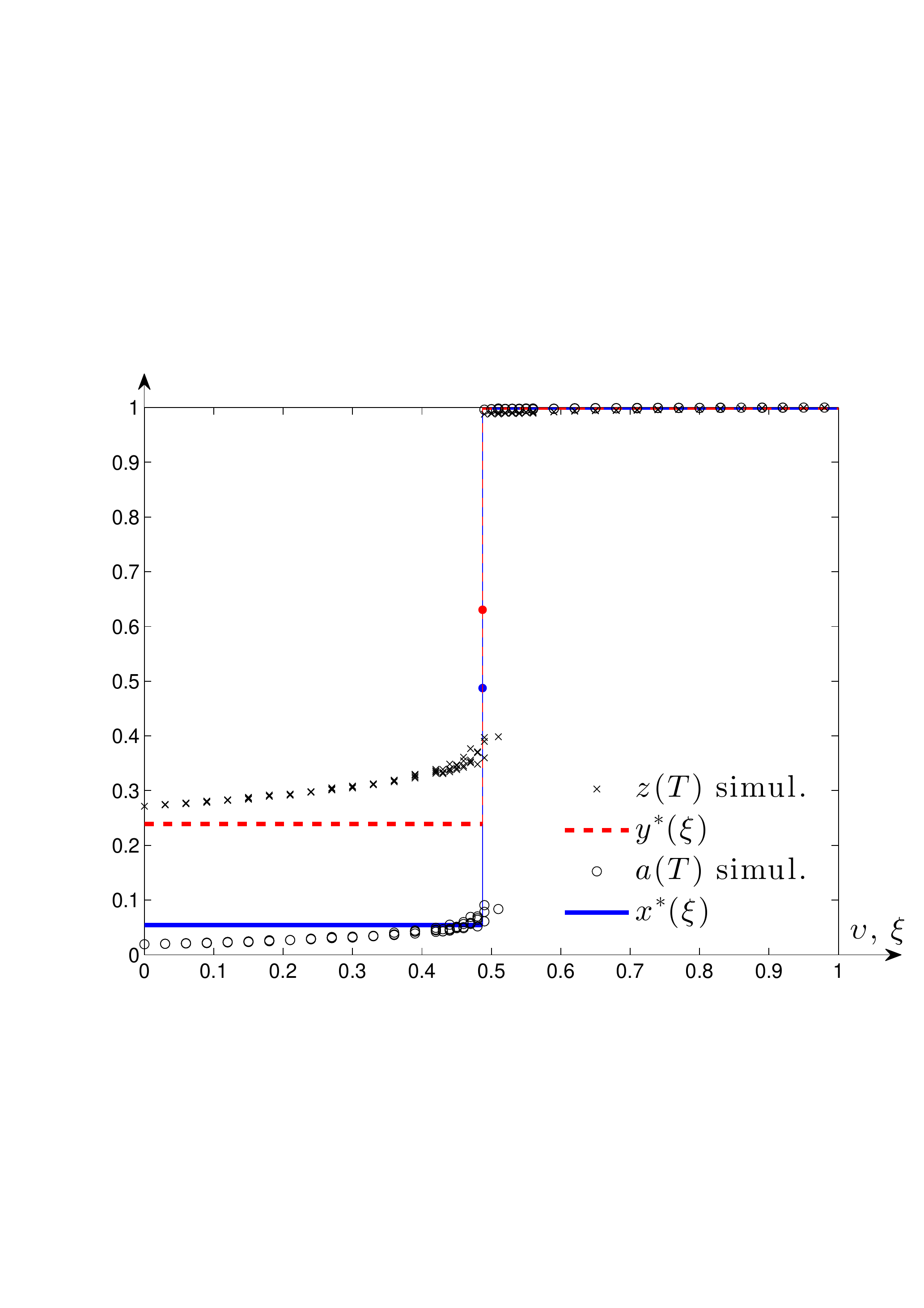}%
	%}
	\caption{\label{fig:epinions-sigle-phi-asy} The left plot reports the functions $\phi(x)$ (solid blue) and $\psi(x)$ (dashed red), corresponding to the \url{Epinions.com} network where each agent $i$ is endowed by  the thresholds $\rho_i = \lceil \frac12 \kappa_i \rceil$.
	The right plot compares the values reached by the simulations at the time horizon $T=100$, for various value of the fraction $\ups$ of initially active nodes, with the asymptotic activation predicted by the recursion initialized with $\xi = \ups$. The black crosses represent $z(T)$, i.e. the fraction of state-$1$ adopters, to be compared with the recursion limits $y^*(\xi)$ in dashed red. The black circles represent $a(T)$, i.e. the fraction of links pointing to state-$1$ adopters,  to be compared with the recursion limits $x^*(\xi)$ in dashed red. Near the discontinuity, predicted in $\xi^* \approx 0.487$ and well matched by the simulation, the starting values of $\ups$ are more dense.}
\end{figure}

% ABX =
%         0    0.4874    0.0542
%    0.4874    1.0000    0.9999
%
% ABY =
%         0    0.4874    0.2391
%    0.4874    1.0000    1.0000

%\subsection{Example with a common normalized threshold}
In the first group of simulation we assume every agent $i$ in the network shares the same common normalized threshold $\Theta_i = 0.500$ and hence node $i$'s threshold is $\rho_i = \lceil \frac12 \kappa_i \rceil$. 
This assumption corresponds to the distribution function $F(\theta) = h(\theta - \frac12)$. Having set a common normalized threshold and given $\ups \in [0,1]$, each simulation consists in choosing a random initial state assignment, such that  exactly a fraction $\ups$ of the nodes has $\sigma_i = 1$, and in computing the LTM dynamic until a prearranged time horizon $T$. 
Given $\ups$ we typically repeat the simulation a few times and compare them with the dynamic predicted with the recursion, initialized with $\xi = \ups$. 
Figure~\ref{fig:epinions-sigle-dyn} reports an example of the simulations with $\ups = 0.475$: the left plot contains the simulated dynamics $a(t)$ to be compared with the recursion's state dynamic $x(t)$; the right plot contains the corresponding simulated fraction of active nodes, $z(t)$, to be compared with the recursion's output dynamic $y(t)$. The recursion captures the qualitative behavior of the simulations. 
The left plot of Figure~\ref{fig:epinions-sigle-phi-asy} represents the recursion's functions $\phi(x)$ and $\psi(x)$ corresponding to this group of simulations. The right plot of the same figure compares the asymptotic activation predicted by the recursion with a few actual simulations, obtained for  various $\ups$ and computed up to a time horizon $T = 100$.  
The fractions of state-$1$ adopters $z(T)$ shall be compared with the recursion's output asymptotic value $y^*(\xi)$, while the corresponding fraction of links pointing at state-$1$ adopters, $a(T)$, shall be compared with the recursion's state asymptotic value $x^*(\xi)$. The discontinuity, predicted in $\xi^* \approx 0.487$ by the recursion, is well matched by the simulation. Before the discontinuity, the simulated values of $z(T)$ are higher that the limit $y^*(\xi)$, showing an increasing trend. The same trend is present in the corresponding values of $a(T)$, that are however closer to the limit $x^*(\xi)$. After the discontinuity, simulations and limits agree to  value one.
\end{example}

\begin{example} % esempio con due soglie
In the second group of simulation we allow the normalized thresholds to take two different values: to 40\% of the nodes we assign $\frac14$ as normalized threshold; the remaining 60\% of nodes gets $\frac34$. The choice corresponds to the cumulative distribution of the normalized threshold  
 $F(\theta) = \frac{4}{10}h(\theta - \frac14) + \frac{6}{10}h(\theta - \frac34)$. Figure~\ref{fig:epinions-2-phi-asy} contains the results of these simulations. 
The left plot represents the functions $\phi(x)$ and $\psi(x)$ corresponding to the thresholds chosen: the recursion predicts the presence of two discontinuities in the asymptotic activation for the LTM, for the seed values $\xi^*_1 \approx   0.241$ and $\xi^*_2 \approx  0.7482$, corresponding to the unstable equilibria of $\phi(x)$. 
The right plot compares the predicted asymptotic activation with the  simulations, obtained for various $\ups$ and computed up to time $T = 100$.
The fractions of state-$1$ adopters $z(T)$ shall be compared with the recursion's output asymptotic value $y^*(\xi)$, while the corresponding fraction of links pointing at state-$1$ adopters, $a(T)$, is nearly superimposed to recursion's state asymptotic value $x^*(\xi)$.
The plot shows a good agreement between $a(T)$ and $x^*(\xi)$, while $z(T)$ seems a bit underestimated by $y^*(\xi)$. 
The values $z(T)$ and $a(T)$ of one simulation with $\ups = 0.310$ settled to a smaller limit, compatible with those obtained for $\ups < 0.270$. Apart from this simulation, the discontinuities are matched well.
Also in this group of simulations, the values of $z(T)$ (and less markedly also those of $a(T)$) show an increasing trend with respect to the fraction of initially active nodes $\ups$, a feature not expected by the comparison with the recursion limits.

%      	\theta_i  ,  f_i
% pq = [  0.25,		0.45; % esempio 2 (new data)
%        0.75, 		0.55];

%	x_inst =    0.2411      0.7482
% 	x_stab =    0.0575		0.4799	0.9999
%	ABX =
%         0    0.2411    0.0575
%    0.2411    0.7482    0.4799
%    0.7482    1.0000    0.9999
%	ABY =
%         0    0.2411    0.2426
%    0.2411    0.7482    0.5789
%    0.7482    1.0000    0.9999

\begin{figure}[t]
	\centering
	%\fbox{%
	\includegraphics[trim={\figtrimla} {\figtrimba} {\figtrimra} {\figtrimta},clip, width={\figwidthduo},
		keepaspectratio=true]{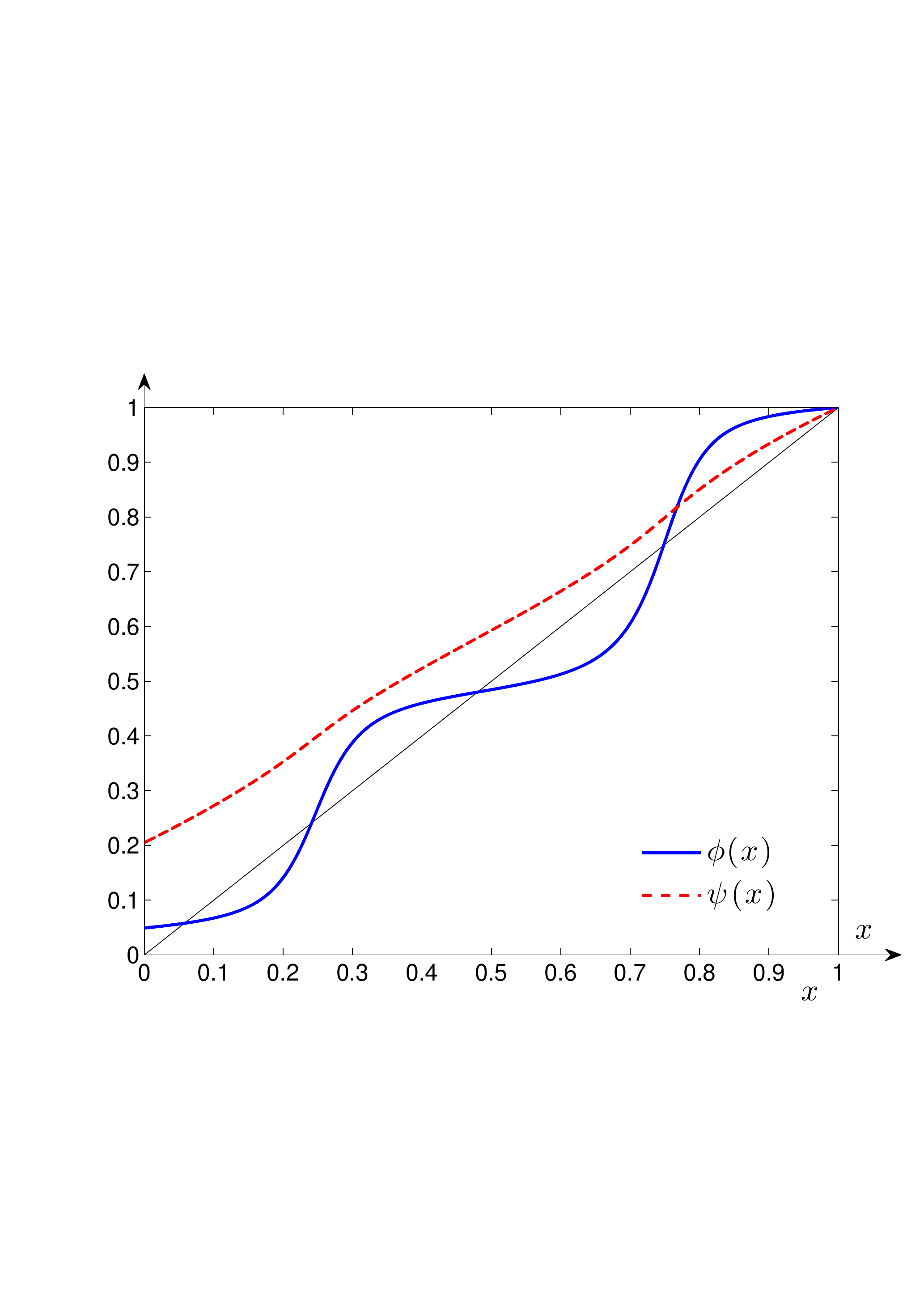}%
	%}%
	\hspace{1cm}%
	%\fbox{%
	\includegraphics[trim={\figtrimla} {\figtrimba} {\figtrimra} {\figtrimta},clip, width={\figwidthduo},
		keepaspectratio=true]{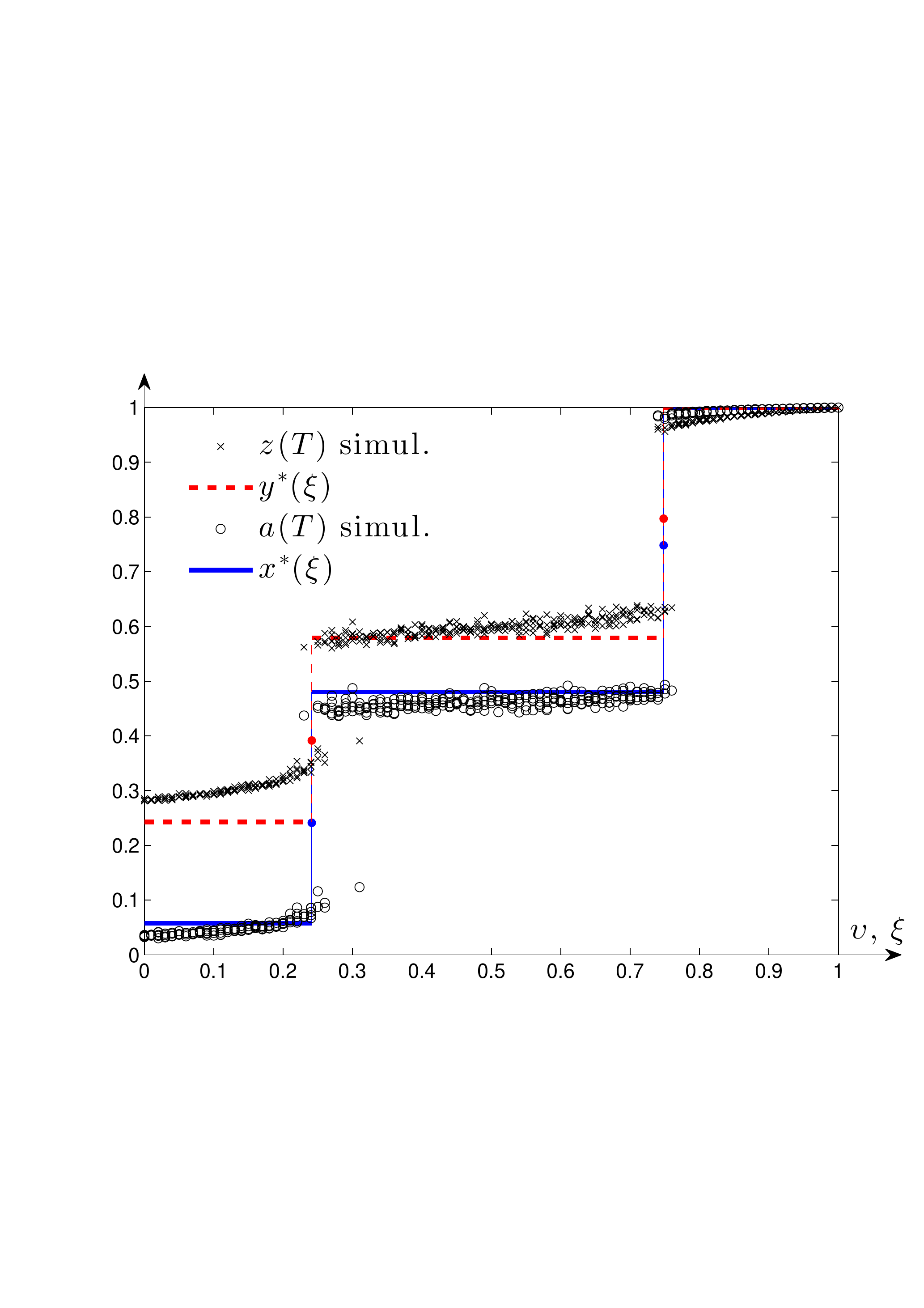}%
	%}
	\caption{\label{fig:epinions-2-phi-asy}
	The left plot contains the functions $\phi(x)$ (solid blue) and $\psi(x)$ (dashed red), corresponding to the \url{Epinions.com} network where  40\% of the nodes is endowed by the normalized threshold $\frac14$ and the remaining 60\% by $\frac34$.
	The right plot compares the values reached by the simulations at the time horizon $T=100$, for various value of the fraction $\ups$ of initially active nodes, with the	asymptotic activation predicted by the recursion initialized with $\xi = \ups$. 	
	The black crosses represent the fraction of state-$1$ adopters $z(T)$, to be compared with the recursion limits $y^*(\xi)$ in dashed red. 
	The black circles represent the fraction of links pointing to state-$1$ adopters $a(T)$, to be compared with the recursion limits $x^*(\xi)$ in dashed red. 
	The predicted limits $y^*(\xi)$ and $x^*(\xi)$ are discontinuous for  $\xi^*_1 \approx   0.241$ and $\xi^*_2 \approx  0.7482$, which are the two unstable equilibria of $\phi(x)$ (cf. left plot). The discontinuities are well matched by the simulation, except for one point obtained with $\ups = 0.310$.
	Apart from the matching the discontinuities, the simulated values show a slowly increasing trend, unexpected from the recursion limits. 	}
\end{figure}
\end{example}

\begin{example}
Finally, we present a group of simulations where we allow the normalized thresholds to take three different values: 30\% of the nodes are endowed by the normalized threshold $\frac15$, 30\%  by  $\frac12$ and the remaining 40\% by  $\frac45$.
The corresponding cumulative distribution is 
 $F(\theta) = \frac{3}{10}h(\theta - \frac15) + \frac{3}{10}h(\theta - \frac12) +  \frac{4}{10}h(\theta - \frac45)$. 
The left plot of Figure~\ref{fig:epinions-3-phi-dyn} represents the functions $\phi(x)$ and $\psi(x)$: the function $\phi(x)$ has seven fixed points, while the convexities of $\psi(x)$ are minimal. 
The right plot of the same figure contains the dynamic of the fraction of state-1 node $z(t)$, starting from a fraction  $\ups = 0.700$ of initial adopters.  
The simulations are compared with the output $y(t)$ of the recursion: the majority of the simulations tend to a limit just above the recursion, while showing a ripple with period two; three simulations tend to a smaller value. 
With this choice of normalized thresholds, the recursion predicts the presence of three discontinuities in the asymptotic activation for the LTM, in $\xi^*_1 \approx   0.201$, $\xi^*_2 \approx  0.509$ and $\xi^*_3 \approx 0.789$. The comparison between recursion and simulation is available in Figure~\ref{fig:epinions-3-asy}. The left plot represent the simulated values of $z(T)$ at time $T=100$, for various $\ups$, compared with the limit $y^*(\xi)$ obtained assuming $\xi = \ups$ as initial condition for the recursion.
The right plot represents the corresponding simulated values of $a(T)$, at $T=100$, to be compared with the recursion's limit $x^*(\xi)$.
Some of the simulations in Figure~\ref{fig:epinions-3-asy} settle to values smaller than the those of the points having similar $\ups$, values  that might be expected from a smaller  initial condition. % would have been smaller. 

%	pq =
%    0.2000    0.3000
%    0.5000    0.3000
%    0.8000    0.4000
%	x_inst =	 0.2005	 0.5091    0.7894
%	x_stab =    0.0591  0.3412   0.6178 	 0.9999
%	ABX =
%         0    0.2005    0.0591
%    0.2005    0.5091    0.3412
%    0.5091    0.7894    0.6178
%    0.7894    1.0000    0.9999
%	ABY =
%         0    0.2005    0.2453
%    0.2005    0.5091    0.4745
%    0.5091    0.7894    0.6964
%    0.7894    1.0000    0.9999

\begin{figure}
	\centering
	%\fbox{%
	\includegraphics[trim={\figtrimla} {\figtrimba} {\figtrimra} {\figtrimta},clip, width={\figwidthduo},
		keepaspectratio=true]{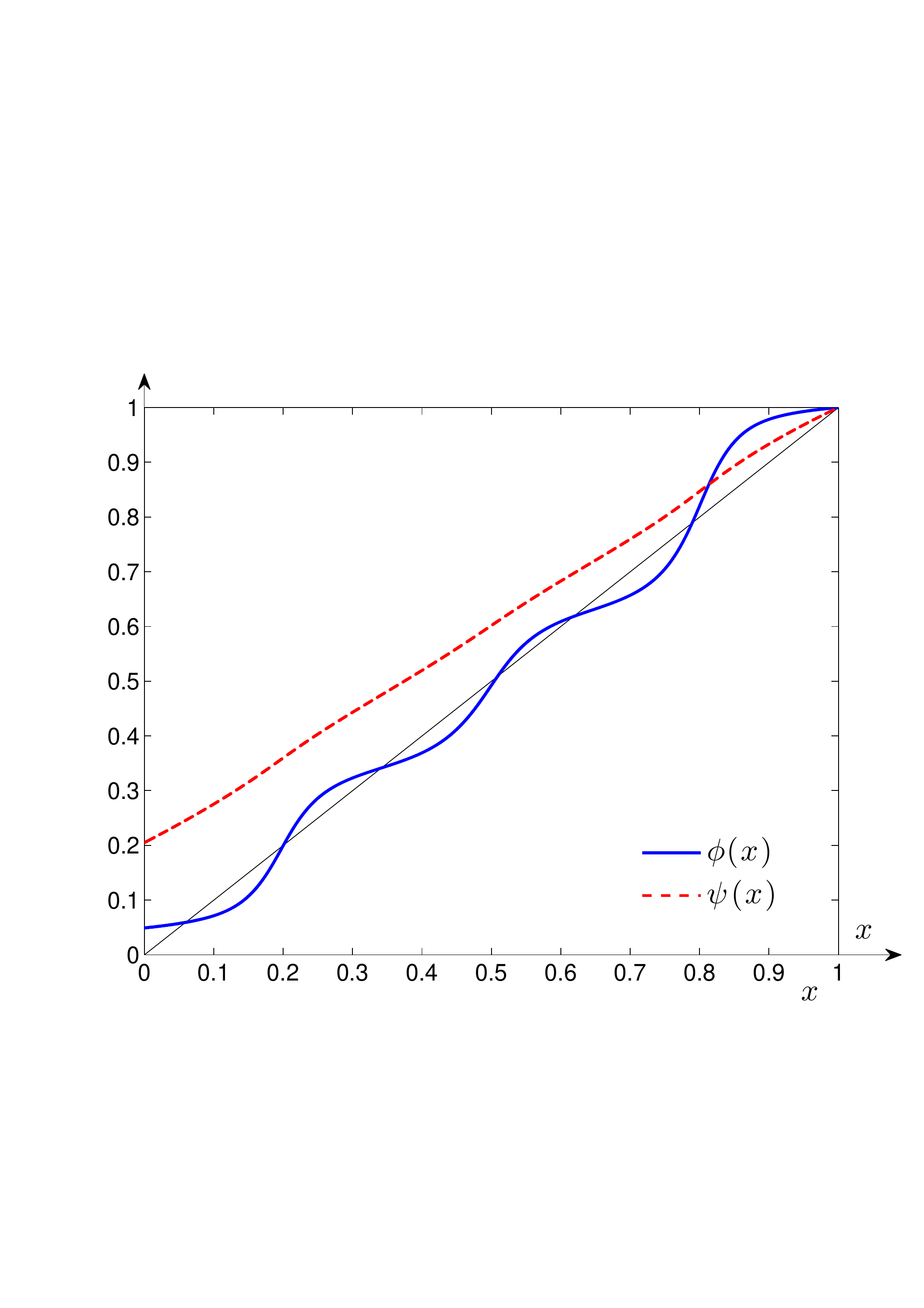}%
	%}%
	\hspace{1cm}%
%	%\fbox{%
	\includegraphics[trim={\figtrimla} {\figtrimba} {\figtrimra} {\figtrimta},clip, width={\figwidthduo},
		keepaspectratio=true]{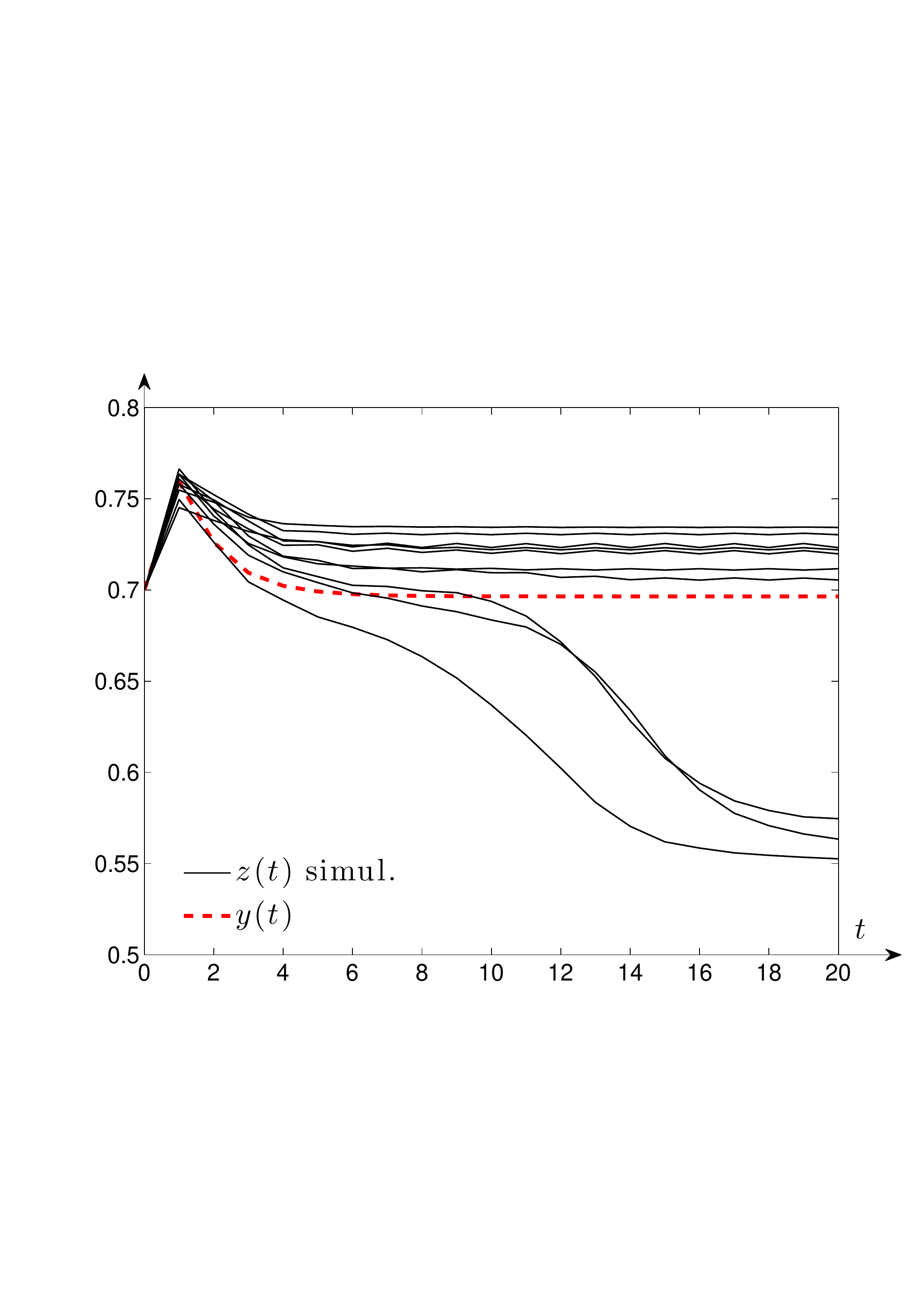}%
%	%}
	\caption{\label{fig:epinions-3-phi-dyn}
	The left plot contains the functions $\phi(x)$ (solid blue) and $\psi(x)$ (dashed red), corresponding to the \url{Epinions.com} network where  30\% of the nodes is endowed by the normalized threshold $\frac15$, 30\%  by  $\frac12$ and the remaining 40\% by  $\frac45$.
	The right plot contains a few simulations (thin black lines) of the dynamic of the fraction of state-$1$ adopters, $z(t)$, starting from a fraction  $\ups = 0.700$ of nodes with state one. The majority of the simulations tend to a limit just above the recursion, while showing a ripple with period two; three simulations tend to a smaller value.
The simulations are compared with the output $y(t)$ of the recursion (dashed red line).  Note that the vertical axis has been zoomed to the interval $[0.5, 0.8]$.
	}
\end{figure}

\begin{figure}
	\centering
	%\fbox{%
	\includegraphics[trim={\figtrimla} {\figtrimba} {\figtrimra} {\figtrimta},clip, width={\figwidthduo},
		keepaspectratio=true]{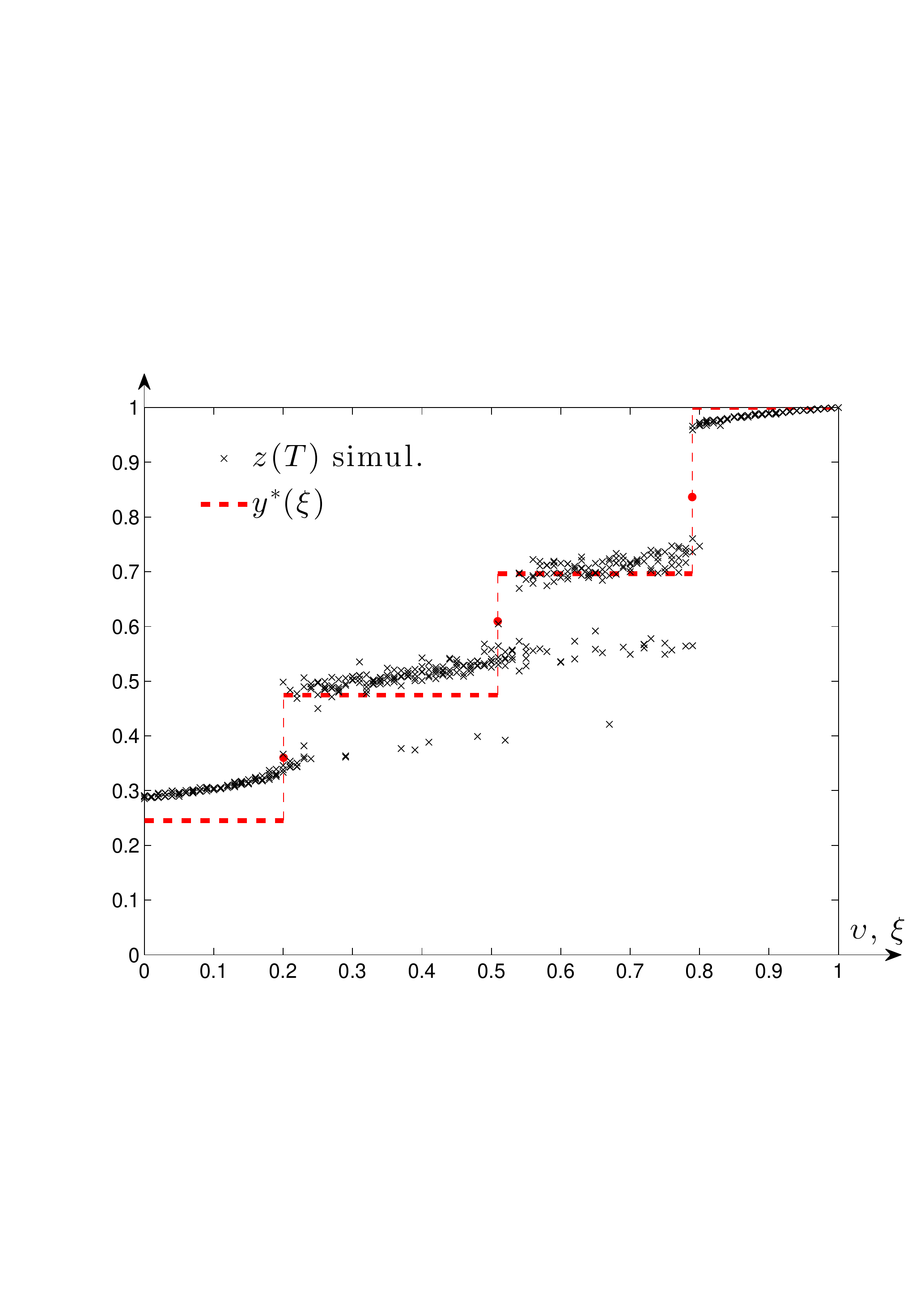}%
	%}%
	\hspace{1cm}%
	%\fbox{%
	\includegraphics[trim={\figtrimla} {\figtrimba} {\figtrimra} {\figtrimta},clip, width={\figwidthduo},
		keepaspectratio=true]{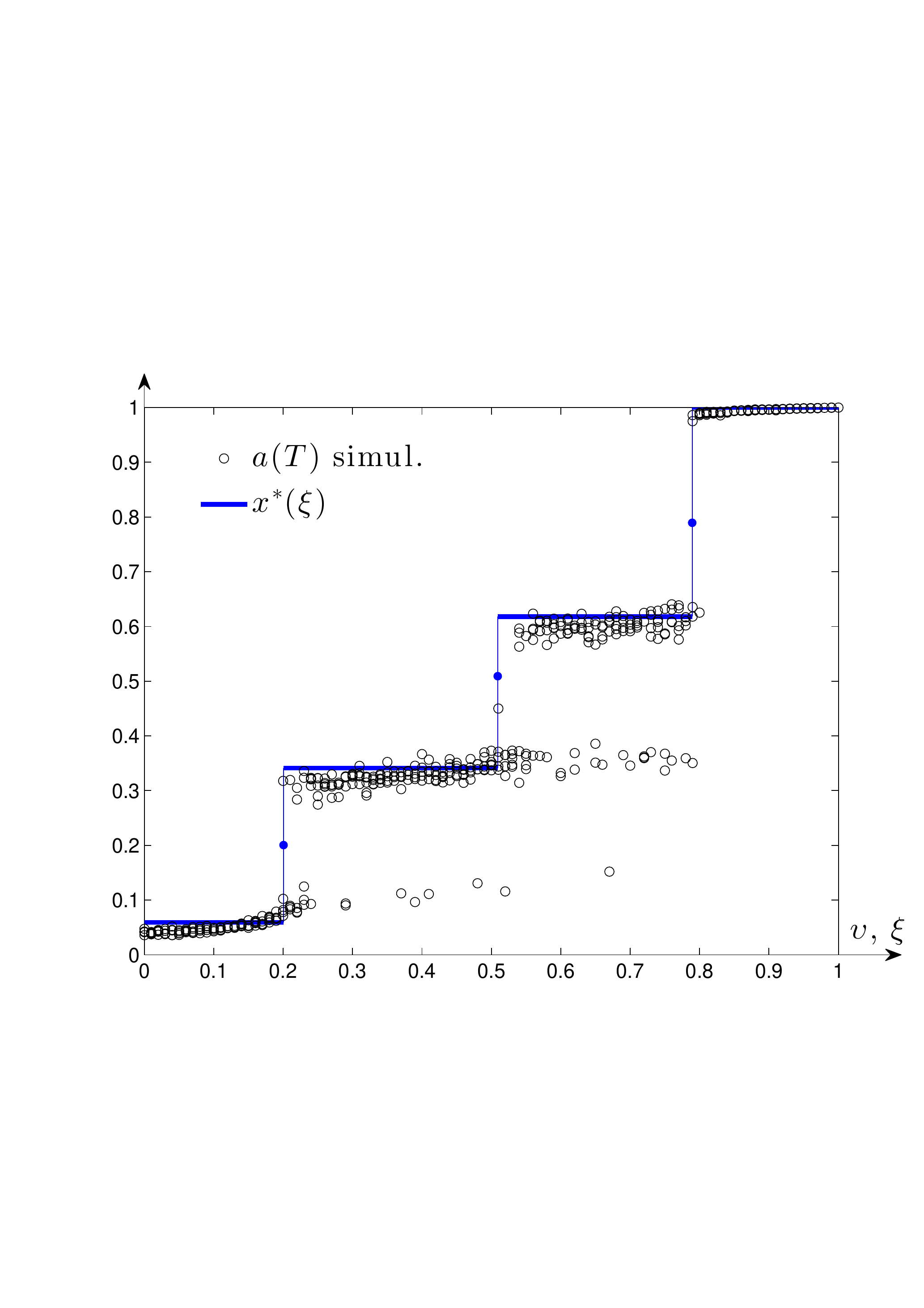}
	%}
	\caption{\label{fig:epinions-3-asy} 
	Comparison between the predicted asymptotic activation and the actual simulations, on the \url{Epinions.com} graph where the nodes where  30\% of the nodes is endowed by the normalized threshold $\frac15$, 30\%  by  $\frac12$ and the remaining 40\% by  $\frac45$.
	 The left plot contains the simulated values of the fraction of state-$1$ adopters $z(T)$ at time $T=100$ (black crosses), for various $\ups$, compared with the limit $y^*(\xi)$ (red dashed line)  of the recursion output, obtained assuming $\xi = \ups$. 
	 The right plot represents the values of the fraction of links pointing to state-$1$ adopters, $a(t)$, for the corresponding simulations, to be compared with the asymptotic value of the recursion's state $x^*(\xi)$. % of the recursion s
We observe that some simulation settle to values that are smaller than  those of the points having similar $\ups$.
With this choice of normalized thresholds, the limits $y^*(\xi)$ and  $x^*(\xi)$ have three discontinuities, in $\xi^*_1 \approx   0.201$, $\xi^*_2 \approx  0.509$ and $\xi^*_3 \approx 0.789$.	}
\end{figure}
\end{example}

\subsection{Comments on the results}

The simulations of the LTM using the topology of the social network \url{Epinions.com} give some interesting insights. Overall, the prediction obtained with the recursion are in good agreement with the simulations. 

A few differences between the simulations and the predictions remain. 
In several simulations we observed that the dynamics of $z(t)$ and $a(t)$ presents a periodic variation, with period two, superimposed to the settling value. 
In particular during the last example, the supposed settling value of a few simulations, evaluated with  $z(T)$ and $a(T)$ at time $T=100$, seemed to smaller that what expected from similar simulations. 
Finally, for increasing initial condition $\ups$, the values  $z(T)$ and $a(T)$ seem to have an increasing trend besides the expected jumps, and the value $z(T)$ seem to be a little but consistently underestimated by the recursion.

There are few possible explanations for these behaviors . 
The social community used in this simulations is based on an online network. Even though it does not have a ``geographical'' origin, it is not a completely random network. The recursion does not take into account any possible community structure of the network, which may play a role in the periodic behavior observed as well as in the increasing trend of the settling values. Furthermore, the presence of a few nodes with extremely high in and out-degree, is able to influence the single simulations, depending on the initial state and threshold assigned to that node. This may contribute to the explanation of the presence of points with smaller-than-expected settling value. 

These hypothesis require further work on the \url{Epinions.com} topology to be verified. 
The simulations however show a good predicting ability by the recursion: the discontinuities in the settling values of the simulations match well with the jumps in the recursion's limits

%\section{Convex-monotone optimal control of the agents' thresholds}
%\label{sec:control}
%PERHAPS WE CAN DROP THIS SECTION AND KEEP IT FOR A SEPARATE WORK ON CONTROL. 
%
%We can write the mean-field dynamics as 
%$$\phi(x):=\sum_{k\ge0}\sum_{0\le r\le k} q_{k,r}\varphi_{k,r}(x)=\sum_{k\ge0}\sum_{0\le r\le k} q_{k,r}\sum_{r\le h\le k}\binom{k}{h}x^{h}(1-x)^{k-h}\,,$$
%$$\qquad \psi(x):=\sum_{k\ge0}\sum_{0\le r\le k} p_{k,r}\varphi_{k,r}(x)\,.$$
%
%$$\phi(x,u)=\sum_{k\ge0}\sum_{d\ge0}\frac{dp_{dk}}{\ov d}\sum_{0\le h\le k}u_{h|d,k}\binom{k}{h}x^{h}(1-x)^{k-h}\,,$$
%$$1-\phi(x,u)=\sum_{k\ge0}\sum_{d\ge0}\frac{dp_{dk}}{\ov d}\sum_{0\le h\le k}(1-u_{h|d,k})\binom{k}{h}x^{h}(1-x)^{k-h}\,,$$
%where $u_{h|d,k}=$ fraction of  degree-$(d,k)$ agents that have threshold $\rho_i\le h$. 
%
%With the change of variables $x=e^{z_1}$, $(1-x)=e^{z_2}$, $u_{h|d,k}=e^{w_{h|d,k}}$, we get a convex monotone system
%
%$$z(t+1)=\Phi(z(t),w(t))\,,\qquad y(t+1)=\Psi(z(t),w(t))$$
%
%$$
%\Phi(z,u)=\log \left(\sum_{k\ge0}\sum_{d\ge0}\frac{dp_{dk}}{\ov d}\sum_{0\le h\le k}\binom{k}{h}e^{hz_1+(k-h)z_2+w_{h|d,k}}\right)\,,
%$$
%$$
%\Psi(z,u)= \log \left(\sum_{k\ge0}\sum_{d\ge0}p_{dk}\sum_{0\le h\le k}\binom{k}{h}e^{hz_1+(k-h)z_2+w_{h|d,k}}\right)
%$$
% with linear constraints 
% %$$w_{k|d,k}\le w_{k-1|d,k}\le\ldots\le w_{1|d,k}\le w_{0|d,k}=0 \,.$$
% 
% $$0=w_{k|d,k}\ge w_{k-1|d,k}\ge\ldots\ge w_{1|d,k}\ge w_{0|d,k} \,.$$
% 
%%Observe that the linear inequality constraints $u_{h|d,k}\ge u_{h+1|d,k}$ translate into linear inequality constraints 
%

\section{Conclusion}
\label{sec:conclusion}
In this paper, we have studied the Linear Threshold Model (LTM) of cascades in large-scale networks. We have shown that, for all but an asymptotically vanishing fraction of networks with given degree and threshold statistics, the fraction of state-$1$ adopters in the LTM can be approximated by the output of a one-dimensional nonlinear recursion. We have also analyzed the asymptotic behavior of this recursion both for homogeneous and heterogeneous networks. Our results apply both to the original LTM and to the Progressive LTM on the configuration model ensemble of directed networks and for the Progressive LTM (but not to the original LTM) on the configuration model ensemble of undirected networks. Numerical simulations run on the actual topology of the social network \url{Epinions.com} confirm the validity of our theoretical result in predicting the behavior of the LTM in actual large-scale networks.  Ongoing work is concerned with the use of the obtained one-dimensional recursion for the design of feedback control policies for the LTM -- see \cite[ch. 4]{wsR:2015:phd-thesis} and \cite{RCF:2016:mtns} for preliminary results. 

\section*{Acknowledgments}
% Enter the text of acknowledgments here
The authors wish to acknowledge Prof. Julien Hendrickx of Universit\'e catholique de Louvain for many valuable comments on the second author's PhD thesis \cite{wsR:2015:phd-thesis}.

%%%%%%%%%% x arXiv...incollo la bibliografia del file bbl

%\bibliographystyle{amsplain}
%\bibliography{biblio_gfw}

\providecommand{\bysame}{\leavevmode\hbox to3em{\hrulefill}\thinspace}
\providecommand{\MR}{\relax\ifhmode\unskip\space\fi MR }
% \MRhref is called by the amsart/book/proc definition of \MR.
\providecommand{\MRhref}[2]{%
  \href{http://www.ams.org/mathscinet-getitem?mr=#1}{#2}
}
\providecommand{\href}[2]{#2}

\end{document}